\newcommand{\bra}[1]{\left\langle{#1}\right\vert}
\newcommand{\ket}[1]{\left\vert{#1}\right\rangle}
\newcommand{\qw}[1][-1]{\ar @{-} [0,#1]}
\newcommand{\qwx}[1][-1]{\ar @{-} [#1,0]}
\newcommand{\gate}[1]{*{\xy *+<.6em>{#1};p\save+LU;+RU **\dir{-}\restore\save+RU;+RD **\dir{-}\restore\save+RD;+LD **\dir{-}\restore\POS+LD;+LU **\dir{-}\endxy} \qw}
\newcommand{\control}{*!<0em,.025em>-=-{\bullet}}
\newcommand{\ctrl}[1]{\control \qwx[#1] \qw}
\newcommand{\targ}{*!<0em,.019em>=<.79em,.68em>{\xy {<0em,0em>*{} \ar @{ - } +<.4em,0em> \ar @{ - } -<.4em,0em> \ar @{ - } +<0em,.36em> \ar @{ - } -<0em,.36em>},<0em,-.019em>*+<.8em>\frm{o}\endxy} \qw}
\newcommand{\qswap}{*=<0em>{\times} \qw}
\newcommand{\gategroup}[6]{\POS"#1,#2"."#3,#2"."#1,#4"."#3,#4"!C*+<#5>\frm{#6}}
\newcommand{\lstick}[1]{*!R!<.5em,0em>=<0em>{#1}}
\newcommand{\Qcircuit}[1][0em]{\xymatrix @*[o] @*=<#1>}
\newtheorem{proposition}{Proposition}[section]
\theoremstyle{remark}
\def\CA{{\cal A}}
\def\CD{{\cal D}}
\def\CE{{\cal E}}
\def\CH{{\cal H}}
\def\CI{{\cal J}}
\def\CK{{\cal K}}
\def\CN{{\cal N}}
\def\CO{{\cal O}}
\def\CQ{{\cal Q}}
\def\CR{{\cal R}}
\def\CS{{\cal S}}
\def\CT{{\cal T}}
\def\CU{{\cal U}}
\def\CV{{\cal V}}
\def\CW{{\cal W}}
\def\CY{{\cal Y}}
\def\bra#1{{\langle#1|}}
\def\ket#1{{|#1\rangle}}
\def\bracket#1#2{{\langle#1|#2\rangle}}
\begin{document}

\title{QUANTUM STEGANOGRAPHY AND QUANTUM ERROR-CORRECTION}

\author{Bilal A. Shaw}
\major{COMPUTER SCIENCE}
\month{August}
\year{2010}

\setcounter{secnumdepth}{4}
\setcounter{tocdepth}{4}

\maketitle

\dedication
    \begin{quote}
        \raggedleft {\em To my parents}\\
        \raggedleft {\em Afzal and Yaqoot Shaw,}\\
        \raggedleft {\em and to my sister}\\
        \raggedleft {\em Husna Shaw.}
    \end{quote}

\topmatter{Acknowledgments}
There are many people in my life without whom I would not have been able to find my way to the United States of America, and then to the University of Southern California (USC), and without whom I would not be writing my doctoral dissertation.  The path toward my doctoral thesis has been a long and arduous one.  My intellectual adventure would not have been possible without the undying support, will, generosity, passion, intellectual curiosity, and tough-love of my advisor.  Thank-you Todd for giving me a chance, and believing in me.

Had I not picked up \textit{Surely you are joking Mr.~Feynman} as a young boy growing up in Kashmir, I probably would never have enjoyed physics, let alone quantum physics.  Feynman's exploits, adventures, and passion for physics led me to pursue my doctorate.  His legacy continues to inspire me to strive for excellence in science, while at the same time reminding me to not take life too seriously.  Thank-you Dick!

I first got a taste of quantum information science (QIS) when I attended Computing Beyond Silicon Summer School (CBSSS) at Caltech in the summer of 2002, which I later helped to organize in the summer of 2004.  I had the pleasure of hearing John Preskill's talk on quantum fault-tolerance, and Isaac Chuang's talk on quantum computing.  At Caltech I met a brilliant young post-doctoral fellow by the name of Dave Bacon with whom I talked at length about quantum error-correction, and with whom I continued talking over the ensuing years.  I thank Dave for being my mentor.  I was fortunate that USC decided to delve into QIS by hiring Todd.  I was searching for Ph.D. advisors at the time, and having talked to Dave, I knew that Todd would be a perfect fit for me.  I am happy and fortunate in knowing that this came out to be true in the end.

I thank Leonard Adleman who was my advisor during my undergraduate and Masters years at USC.  He taught me the importance of mathematical rigor and maturity, doing good scientific research, and the art of giving an excellent talk.  With him I learned first-hand that we live in an increasingly multi-disciplinary world.  It is not enough to be just a mathematician, or just a biologist, or a physicist.  The new world demands young graduates to move seamlessly and effortlessly between various fields.  It demands us to be both a frog and an eagle; to be able to wallow in the mud at the edges of a pond, and dive into its depth without fear, and at the same time to be able to soar like an eagle above mountain-tops, with courage.  He taught me not to be afraid to work on ``cool'' ideas.  I also thank Nickolas Chelyapov who during this time taught me many molecular biology techniques and the discipline required of a good scientist.  I thank one of my best friends Paul Rothemund who gave me his full support while I was Len's student.

I thank Daniel Lidar who taught me quantum error-correction (QEC) and the theory of open quantum systems at USC.  The first paper that I wrote was a direct result of a project he gave his students to complete in his QEC course.  Later when I gave a talk based on this project at the American Physical Society's March meeting in 2008 in New Orleans, I won the best student presentation award.  I am deeply grateful to Daniel for being an excellent educator and scientist.

I thank Igor Devetak whom I met in 2005, and who later taught me quantum Shannon theory.  I thank him for his friendship and for the various discussions he and I had on information theory which helped shape my own intellectual landscape.  I also thank him for being part of Todd's research group meetings where I was among the first students to witness the birth of entanglement-assisted quantum codes.  This coding scheme ultimately allowed researchers to import all of classical coding theory into quantum information science.

I thank my friend, collaborator, and colleague Mark Wilde.  His passion for this field helped kindle my own when I felt as if the fire within me was starting to die.  His brilliance and ambition has been a source of inspiration to me.  I would also like to thank him while I was in New Orleans attending the American Physical Society's March meeting, for his parents opened the doors of their heart and their home to me.

I thank Min-Hsiu Hsieh, Hari Krovi, Jose Gonzalez, Stephen Jordan, Zhicheng Luo, Ognyan Oreshkov, Shesha Raghunathan, and Martin Varbanov with whom I have had various discussions on quantum information science in particular, and life in general.  Ognyan was a big help to me in completing my project on quantum error-correction.  I had an excellent time working on one of Igor Devetak's class projects with Hari.

I owe many thanks to Gerrielyn Ramos, Milly Montenegro, Anita Fung, Tim Boston, Lizsl DeLeon, Steve Schrader, and Jun Camacho who did an excellent job in taking care of my paperwork and making sure that I was on top of non-research related things.  I would not have been able to complete my research at USC without the support that I received from the computer science department at USC.  The teaching assistantships helped me to survive in an overly expensive city.  I thank Michael Crowley and David Wilczynski for taking me on as their teaching assistant.

I would like to thank the folks outside my USC circle who supported me with their kind words, thoughts, prayers and blessings.  At face value the Ph.D. is an original piece of research, but to me it has been more than mere tinkering with ideas, writing papers and giving talks.  To me it has signified my intellectual, spiritual, and emotional growth as a human being and as a scientist.  I am indebted to my yoga instructors Travis Eliot and Jennifer Pastiloff for introducing me to yoga.  This has been my salvation when I needed to unhook and unplug from the vibe of Los Angeles and USC.  I am deeply thankful to Los Angeles that nurtured my pursuits outside of academia.  I have basked in her beautiful sunshine, enjoyed walks on her beaches and hiked in her mountains.  I thank all the talented baristas at Peet's Coffee and Tea in Brentwood who provided me with free coffee while I was doing research.

Lastly I thank my friends Iftikhar Burhanuddin, Tony Barnstone, Matthew Harmon, and Jonathan Trachtman.  Matt and J. T. expanded my musical horizons and allowed me to explore musical genres that helped me during the many hours I spent writing my thesis and my papers.  My escape from science has been writing poetry and to that I owe much thanks to Tony.  I thank Ifti for all the interesting conversations we had on algorithms.  I also thank my uncle Bashir Shaw and my aunt Khursheed Shaw who provided a roof over my head for three years when I was an undergraduate at Whittier College.  I thank all of my teachers in Burn Hall High School, Kashmir, Saint Columba's High School, New Delhi, and Whittier College, California.

In the end none of this would have been possible without the unconditional and unwavering love of my father Afzal, my mother Yaqoot, my sister Husna, and the memory of my days as a child growing up in Kashmir.

% Table Of Contents
\tableofcontents

% List of tables
\listoftables

% List of figures
\listoffigures

\topmatter{Abstract}
Quantum error-correcting codes have been the cornerstone of research in quantum information science (QIS) for more than a decade.  Without their conception, quantum computers would be a footnote in the history of science.  When researchers embraced the idea that we live in a world where the effects of a noisy environment cannot completely be stripped away from the operations of a quantum computer, the natural way forward was to think about importing classical coding theory into the quantum arena to give birth to quantum error-correcting codes which could help in mitigating the debilitating effects of decoherence on quantum data.  We first talk about the six-qubit quantum error-correcting code and show its connections to entanglement-assisted error-correcting coding theory and then to subsystem codes.  This code bridges the gap between the five-qubit (perfect) and Steane codes.  We discuss two methods to encode one qubit into six physical
qubits. Each of the two examples corrects an arbitrary single-qubit error. The first example is a degenerate six-qubit quantum error-correcting code. We explicitly provide the stabilizer generators, encoding circuits, codewords, logical Pauli
operators, and logical CNOT operator for this code. We also show how to convert this code into a non-trivial subsystem code
that saturates the subsystem Singleton bound.  We then prove that a six-qubit code without entanglement assistance cannot
simultaneously possess a Calderbank-Shor-Steane (CSS) stabilizer and correct an arbitrary single-qubit error.  A corollary of this
result is that the Steane seven-qubit code is the smallest single-error correcting CSS code.  Our second example is the construction of
a non-degenerate six-qubit CSS entanglement-assisted code. This code uses one bit of entanglement (an \textit{ebit}) shared between the sender (Alice) and the receiver (Bob) and corrects an arbitrary single-qubit error. The code we obtain is globally equivalent to the Steane seven-qubit code and thus corrects an arbitrary error on the receiver's half of the ebit as well.  We prove that this code is the smallest code with a CSS structure that uses only one ebit and corrects an arbitrary single-qubit error on the sender's side. We discuss the advantages and disadvantages for each of the two codes.

In the second half of this thesis we explore the yet uncharted and relatively undiscovered area of quantum steganography.  Steganography is the process of hiding secret information by embedding it in an ``innocent'' message.  We present protocols for hiding quantum information in a codeword of a quantum error-correcting code passing through a channel.  Using either a shared classical secret key or shared entanglement Alice disguises her information as errors in the channel.  Bob can retrieve the hidden information, but an eavesdropper (Eve) with the power to monitor the channel, but without the secret key, cannot distinguish the message from channel noise.  We analyze how difficult it is for Eve to detect the presence of secret messages, and estimate rates of steganographic communication and secret key consumption for certain protocols.  We also provide an example of how Alice hides quantum information in the perfect code when the underlying channel between Bob and her is the depolarizing channel.  Using this scheme Alice can hide up to four stego-qubits.

%\topmatter{Preface}
% Your Preface(optional).

% Begin Body
\mainmatter

\chapter{Introduction}

\begin{saying}
When I heard the learn'd astronomer;\\
When the proofs, the figures, were ranged in columns before me;\\
When I was shown the charts and the diagrams, to add, divide, and\\
measure them;\\
When I, sitting, heard the astronomer, where he lectured with much\\
applause in the lecture-room,\\
How soon, unaccountable, I became tired and sick;\\
Till rising and gliding out, I wander'd off by myself,\\
In the mystical moist night-air, and from time to time,\\
Look'd up in perfect silence at the stars.\\
---\textit{Walt Whitman}
\end{saying}

\section{A Brief History of Quantum Information Science}
\lettrine{A}lthough I am not sure what exactly was going through James Clerk Maxwell's (1831---1879) mind when he formulated his now famous laws of electromagnetism, I can certainly imagine that he had no idea that a few decades into the future Thomas Alva Edison residing in the New World would construct several devices operating on the physics that he [Maxwell] worked to unify, namely electricity and magnetism.  Moreover, he probably had no idea that in the future universities would award doctoral degrees in electrical engineering, and that there would be entire departments dedicated to understanding and engineering this force of Nature.  Maxwell's work on electromagnetism has often been dubbed as the ``second great unification in physics'' second only to Sir Isaac Newton's first unification.

Around the time that Edison was designing electrical machines in his copious notebooks and subsequently inventing them, the world of physics was undergoing a revolution of its own. This revolution started when physicists observed phenomena (paradoxes) in Nature that could not be explained through classical physics.  These included black-body radiation, the photoelectric effect, the stability of atoms (ultraviolet catastrophe), and matter waves (particle-wave duality).  Through the efforts of brilliant scientists such as Einstein, Bohr, Schr\"{o}dinger, Heisenberg, Pauli, Born and still others, these paradoxes were ultimately resolved.  Their efforts led to the creation of the new field of quantum mechanics, when finally researchers realized that the sub-atomic world is intrinsically probabilistic, and that there is an underlying uncertainty in simultaneously measuring certain real-world observables such as the momentum and position of a particle.  While the physicists were thinking about the atomic world, Claude Shannon ushered the field of information theory with his seminal paper of 1948 in which he established the source coding and channel coding theorems~\cite{shannon1948}.  The former gives us the ultimate limit on the compressibility of information while the latter tells us that given a noisy channel, one can encode information in such a way that the receiver of information can reliably decode it.  Of course this assumes that the designer of error-correcting codes has certain knowledge of the noise in the channel.  Without Shannon's seminal work we would not have the modern communication devices that have become so ubiquitous, and which we take for-granted every day.  While Claude Shannon was busy at Bell Labs solidifying his ideas on information theory and mathematical logic, Alan Turing was laying the foundation of modern computing theory.  Turing wanted to know if there was a universal machine that could compute solutions to computable functions.  All modern day computers from laptops to supercomputers are based on the simple notion that given a machine with enough memory, a finite set of symbols and a read/write head that can store its own state, one can perform complex computations.  Alan Turing's simple yet deep insight is a testament to the power of great ideas and their ability to transform society (for the better)~\cite{turing1937}.

In science, unification occurs when ideas in its various fields have had enough time to mature.  Several researchers before Maxwell had separately worked on electricity and magnetism before he unified these two pictures into the electromagnetic wave.  Among these researchers the notable ones were Andr\'{e}-Marie Ampere, Benjamin Franklin, J. J. Thompson, Oliver Heaviside, Joseph Henry, Nikola Tesla, and still others.  Similarly, it took a few decades for quantum physics, information theory, and computer science to mature before these seemingly separate fields could be united into quantum information science.

The story of this unification started when Richard Feynman and others asked the question of whether a computer operating on the laws of quantum physics might give an appreciable speed-up in simulating quantum physics problems that may be hard to solve on a classical computer~\cite{feynman1982,benioff1980,manin1980}.  This question was answered by Seth Lloyd in~\cite{lloyd1996}.  In the 1970s Charles Bennett (IBM) showed that any computation can in principle be done reversibly, based on earlier work by Rolf Landauer in the 1960s~\cite{landauer1982, bennett1973}.  Rolf Launder showed that if a computer erases a single bit of information, the amount of energy dissipated into the environment is at least $k_{B}T\ln{2}$, where $k_{B}$ is Boltzmann's constant and $T$ is the temperature of the environment of the computer~\cite{mikeandike2000}.  If a computation can be performed reversibly then no information is erased and hence there is no dissipation of energy.    To realize the theory of quantum computation it was crucial to make this connection between reversibility and energy dissipation.  This was because we evolve closed quantum system via unitary operators which are themselves reversible.

One of the hallmarks of quantum mechanics is that one can write superpositions of various quantum states.  One can then imagine exploiting these states in a parallel fashion via operators that transform these states.  Researchers quickly realized that naive applications of quantum parallelism added nothing to the power of the computer.  But in 1985, David Deutsch found a clever algorithm that exploited quantum parallelism indirectly to solve a problem more efficiently than any possible classical computer~\cite{deutsch1985}.  The problem was artificial, but it was the first example where a quantum computer was shown to be more powerful than a classical computer.  Richard Jozsa and David Deustch later generalized this problem and they showed that there was an exponential gap in query complexity between a quantum and a classical computer.  We were starting to see a grand unification of quantum mechanics with computer science.

Meanwhile in 1984 Charles Bennett and Gilles Brassard found a clever way to exploit the properties of quantum mechanics for information processing~\cite{bb84}.  They were able to exploit the uncertainty principle as a way to distribute a cryptographic key to two parties, Alice and Bob, with perfect security.  In their set-up single quanta were used to send bits of the cryptographic key in one of two possible complementary variables.  If Eve tried to intercept the quantum bits and measure them, this would automatically disturb them in such a way that it could always be detected by Alice and Bob.  Upon learning this, they would abort the protocol as it had been compromised.  This scheme was aptly termed as the BB84 (Bennett-Brassard) protocol for quantum key distribution.  We often hear people refer to the latter as quantum cryptography.  We must make it clear that this is a misnomer.  All that this protocol does is distribute keys securely to two parties.  Alice and Bob can subsequently use their shared keys to send cryptographic messages to each other.  We should also mention that this is the only quantum information protocol which has been realized commercially.  Charles Bennett and his collaborators found even more interesting quantum communication protocols.  In particular they found one where a sender could send an unknown quantum state to a receiver using a classical channel and consuming shared entanglement between the sender and the receiver.  We call this quantum communication protocol as quantum teleportation~\cite{prl.70.1895}.  Bennett along with Stephen Weisner engineered a protocol by which a sender could transmit classical information by consuming shared entanglement and utilizing a quantum channel between the sender and the receiver~\cite{prl.69.2881}.  We call this quantum communication protocol as super-dense coding.

Even before Charles Bennett and company started exploiting the power of quantum physics to perform information theoretical tasks, Stephen Wiesner had indicated in a paper \textit{Conjugate Coding} that it was impossible to perfectly duplicate or clone two or more unknown nonorthogonal quantum states.  This paper which was submitted to \textit{IEEE Transactions on Information Theory} in 1970 was summarily rejected as it was written in a language incomprehensible to computer scientists, information theorists, and electrical engineers. His work was finally published in its original form in~\cite{wiesner1983} in 1983 in \textit{Association of Computing Machinery, Special Interest Group in Algorithms and Computation Theory}.  In the same paper Wiesner also suggested that his no-cloning theorem could also enable the possibility of money whose authenticity would be guaranteed by the laws of quantum physics.  The no-cloning theorem places a severe constraint on what can be achieved by uniting ideas from quantum mechanics and information theory.  This was one of the earliest concerns raised by researchers on the viability of an operational quantum computer.

There was an uneasiness in the research community that perhaps a quantum computer would only be good for solving toy problems and, moreover, clever randomized algorithms could probably beat these toy quantum algorithms.  In addition because of the sensitive nature of quantum systems and their susceptibility to noisy environments researchers believed that any quantum state that one would prepare as input to a quantum computer would rapidly decohere because of unwanted interactions with its environment.  Another major objection to a viable quantum computer was the quantum mechanical principle that measurement destroys a quantum state.  As we shall see in subsequent chapters that despite these objections researchers were able to forge ahead and develop quantum error-correcting codes that reduced the negative effects of decoherence and got around the measurement problem.

In 1994 Daniel Simon discovered a quantum algorithm for another toy problem that provided an exponential separation in query complexity over a classical randomized algorithm~\cite{simon1994}.  The stage was being set for a major breakthrough.  In the same year a mathematician by the name of Peter Shor at AT\&T Bell Labs discovered two polynomial time quantum algorithms~\cite{shor1994,shor1997}.  The first algorithm could factor integers in polynomial time, whereas the second algorithm gave an exponential speed-up over the discrete logarithm problem.  Both factoring and the discrete logarithm problems are extremely hard problems.  The reason why Shor's algorithm turned several heads was because firstly it became apparent that quantum algorithms were good at something other than solving mere toy problems, and secondly the security of well known cryptosystems such as RSA~\cite{rsa1978} relies on the hardness of factoring while the security of ElGamal cryptosystem~\cite{elgamal1985} relies on the hardness of the discrete logarithm problem.  Researchers in cryptography have been working on post-quantum cryptography.  In the event when large-scale quantum computers become operational, which public-key schemes would be resistant to quantum computer attacks?

The explosion in the number of papers being published in quantum information science is extraordinary.  In just a few years the field has matured to the point where it is no longer possible for a single researcher to keep track of all the papers being published.  In the theoretical arena alone researchers are working on quantum error-correcting codes, fault-tolerant quantum computing, quantum algorithms, quantum Shannon theory, quantum entanglement, quantum complexity theory, and open quantum systems.  There is of course a parallel effort to realize various quantum computing and quantum Shannon schemes experimentally through solid-state physics, nanophotonics, and Bose-Einstein condensates.

So perhaps in the near future when quantum information science comes of age, the term ``quantum computer'' will be a household name, just as PCs or MACs are today.  Perhaps in the future QIS will not be a loose conglomerate of such academic departments as electrical engineering, computer science, physics, and mathematics, but will be a department in its own right.  Maybe we will routinely hear quantum states being teleported over quantum networks, or high school students running quantum simulations on their personal quantum computers, or big business deals being carried out using quantum cryptography and quantum steganography.  Perhaps then the efforts and passion of all the hardworking researchers in this field will be truly realized on a societal level; a triumph.

\section{Thesis Organization}
This thesis is a major contribution to quantum error-correction theory and quantum Shannon theory.  This thesis is essentially divided into two parts.  In the first half we present the theory of quantum error-correction starting with quantum stabilizer codes in the next chapter.  The six-qubit code bridges the gap between the five-qubit (perfect code) and the Steane codes.  The five-qubit has been extensively studied in~\cite{mikeandike2000} because it was the first example of the smallest non-degenerate code that could correct an arbitrary single-qubit error~\cite{perfectcode1996}.  The seven-qubit code, popularly known as the Steane code after its discoverer Andrew Steane, has also been studied extensively because it lends itself nicely to a fault-tolerant design of universal quantum gates~\cite{steane1996}.  In fact if we open~\cite{mikeandike2000} we find that the authors provide a detailed study of these two codes, but none for the six-qubit code.  We hope that when quantum computers become operational, the six-qubit code will provide a middle ground for encoding quantum data.

In the second half of this thesis we present the theory of quantum steganography.  It came as a surprise to us that this sub-area of quantum Shannon theory had barely been explored given that classical steganography has been around since Simmons first posed the problem in 1983 using information-theoretic terminology~\cite{simmons1983}.  In order to develop quantum steganography we needed to develop intuition about classical steganography using classical error-correcting codes.  In Chapter~\ref{chap:classicalsteganography} we present classical ideas on information hiding.  After we develop our intuition for classical steganography we present our full formalism for quantum steganography in Chapter~\ref{chap:quantumsteg}.  We provide the rate of steganographic information and give specific examples of how to hide qubits using quantum error-correcting codes transmitted over the binary-symmetric and depolarizing channels.  In Chapter~\ref{chap:hidingqinfo} we show how one can use the second protocol detailed in Chapter~\ref{chap:quantumsteg} to hide up to four stego-qubits in the perfect code.  For a specific encoding we give the optimal number of stego-qubits that Alice can send to Bob over the depolarizing channel as a function of the error-rate.  In the same chapter we also show the amount of key consumed by Alice and Bob to realize this protocol.  We hope that this work will blossom into a sub-field of quantum information science just as quantum cryptography has in recent years.

The mathematics behind these ideas is simple and straightforward, and all that one requires to fully comprehend this thesis is mathematical maturity, knowledge of college level linear algebra, and familiarity with Dirac notation used in quantum mechanics.  We end the thesis with closing remarks in which we outline some further work that one can pursue in quantum steganography.

%%%%%%%%%%%%%%%%%%%%%%%%%%%%%%%%%%%%%% NEW CHAPTER %%%%%%%%%%%%%%%%%%%%%%%%%%%%%%%%%%%%%%%%%%%%%%%%%%%%%%%%%%%%%%%%%
%%%%%%%%%%%%%%%%%%%%%%%%%%%%%%%%%%%%%%%%%%%%%%%%%%%%%%%%%%%%%%%%%%%%%%%%%%%%%%%%%%%%%%%%%%%%%%%%%%%%%%%%%%%%%%%%%%%%

\chapter{Quantum Stabilizer Codes}
\label{chap:stabilizercodes}
\begin{saying}
It is not knowledge, but the act of learning, \\
not possession but the act of getting there,\\
which grants the greatest enjoyment.\\
---\textit{Carl Friedrich Gauss}
\end{saying}
\lettrine{T}he stabilizer formalism is to quantum error-correction what additive codes are to classical linear error-correction.  It is a powerful mathematical framework based on group theory.  Using this formalism a quantum-code engineer can import classical codes using the CSS construction (dual-containing codes)~\cite{steane1996,PRA.54.1098.1996,mikeandike2000}, and design quantum codes with good error-correcting properties and good rates.  More recently the stabilizer formalism has been used to design graph states~\cite{graph2005}.  It is used in cluster-state quantum computation~\cite{cluster2001}, and the theory of entanglement~\cite{arx2004fattal}.  It has also been extended to entanglement-assisted quantum codes~\cite{science2006brun}.  Much of the stabilizer formalism was worked out be Daniel Gottesman in his brilliant thesis~\cite{thesis97gottesman}.  For more details we direct the reader to his thesis.

In this chapter we review the stabilizer formalism for block-codes, the steps involved in error-correction, the Clifford operations that one uses to construct the encoding unitary circuits for these codes.  We assume the reader is familiar with elementary group theory at the undergraduate college level~\cite{booklang}.  In this chapter we have used Mark Wilde's thesis for some of the exposition~\cite{thesismarkwilde}.

\section{The Stabilizer Formalism}
One of the hallmarks of the stabilizer formalism is that it gives us a compact representation of a quantum error-correcting code.  Instead of keeping track of the amplitudes of complicated superpositions of quantum codewords, we can just keep track of the \textit{stabilizer generators} of the code (which are linear in the number of qubits).  As an example consider the following GHZ state:
\begin{equation}
\label{sec:stabformalism_ghz}
\ket{\psi}_{GHZ} \equiv \frac{\ket{000}+\ket{111}}{\sqrt{2}}.
\end{equation}
The following operators leave the GHZ state unchanged:
\begin{eqnarray}
\label{sec:ghzstabilizers}
I_{1}I_{2}I_{3} & \equiv & I \otimes I \otimes I, \nonumber \\
Z_{1}Z_{2} & \equiv & Z \otimes Z \otimes I, \nonumber \\
Z_{1}Z_{3} & \equiv & Z \otimes I \otimes Z, \nonumber \\
Z_{2}Z_{3} & \equiv & I \otimes Z \otimes Z, \nonumber \\
X_{1}X_{2}X_{3} & \equiv & X \otimes X \otimes X. \nonumber
\end{eqnarray}
We say that the GHZ state is fixed or stabilized by the operators above.  The main idea in the stabilizer formalism is that one can just think of these operators and their transformation as they are acted upon by error operators, encoding unitary operators, and measurement operators.  The stabilizer is a subgroup of the general Pauli group $\mathbb{G}_{n}$ on $n$ qubits.  More specifically $\mathbb{G}_{n}$ is an $n$-fold tensor product of the matrices from the Pauli set $\mathbb{P} = \left\{I, X, Y, Z\right\}$:
\[
I\equiv%
\begin{bmatrix}
1 & 0\\
0 & 1
\end{bmatrix}
,\ X\equiv%
\begin{bmatrix}
0 & 1\\
1 & 0
\end{bmatrix}
,\ Y\equiv%
\begin{bmatrix}
0 & -i\\
i & 0
\end{bmatrix}
,\ Z\equiv%
\begin{bmatrix}
1 & 0\\
0 & -1
\end{bmatrix}
.
\]
We write the set $\mathbb{G}_{n}$ as:
\begin{equation}
\mathbb{G}_{n}=\left\{  e^{i\phi}A_{1}\otimes\cdots\otimes A_{n}:\forall j\in\left\{
1,\ldots,n\right\}  ,\ \ A_{j}\in\mathbb{P},\ \ \phi\in\left\{  0,\pi/2,\pi
,3\pi/2\right\}  \right\}  .
\end{equation}
An example of an element from $\mathbb{G}_{n}$ is $X_{1} \equiv X \otimes I \otimes \ldots \otimes I$, where $X$ acts on the first qubit, while the $n-1$ identity operators act on rest of the $n-1$ qubits.  While writing these $n$-fold tensor operators we will often omit the tensor product sign, and just indicate the operator as $X_{1}$ or as $XI \ldots I$.  We need the factors $\pm 1, \pm i$ in front of the group elements so that the group is closed under multiplication.  Three interesting properties of the elements of $\mathbb{G}_{n}$ are that they either commute or anticommute with each other, each has eigenvalue +1 or -1, and the product of each element with itself is the $n$-fold identity operator.  The stabilizer is a subgroup $\mathbb{S}_{n}$ of $\mathbb{G}_{n}$ such that all the elements in it commute with each other.  Each element of $\mathbb{S}_{n}$ stabilizes some number of quantum states.  The intersection of all these subspaces fixed by each element of $\mathbb{S}_{n}$ constitutes the codespace.  In group theory one can reconstitute the entire group by the underlying group operation between some specific elements of the group called generators.  So this way we get a very compact representation of the group.  Instead of listing all the elements of the group, we just specify its generators.  As an example, consider the the group $\mathbb{G}_{3}$ whose members 3-fold tensor product of Pauli operators and whose members operate on three qubits.  This group has $4^{3}*4 = 256$ members.  We enumerate some of the members of this group below:
\[
\mathbb{G}_{3} \equiv \left\{\pm III, \pm iIII, \pm XII, \pm iXII, \ldots , \pm ZZZ, \pm iZZZ \right\}.
\]
Let $\mathbb{S}_{3} = \left\{III, ZZI, IZZ, ZIZ \right\}$ be a subgroup of $\mathbb{G}_{3}$.  Each member of $\mathbb{S}_{3}$ fixes a subspace. $ZZI$ fixes the states $\ket{000}, \ket{001}, \ket{110}$, and $\ket{111}$.  Similarly, $IZZ$ stabilizes the states $\ket{000}, \ket{011}, \ket{100}$, and $\ket{111}$.  $III$ stabilizes the states $\ket{000}$ and $\ket{111}$, while $ZIZ$ stabilizes $\ket{000}, \ket{010}, \ket{101}$, and $\ket{111}$.  The codespace is the intersection of the subspaces spanned by all the members of $\mathbb{S}_{3}$.  So the codespace $\mathbb{C}_{3} = \left\{\ket{000},\ket{111}\right\}$.  Notice that we did not really need to write all the four members of $\mathbb{S}_{3}$.  We can represent $\mathbb{S}_{3}$ compactly through its generators $ZZI$ and $IZZ$.  We write the generator set as $G_{\mathbb{S}_{3}} = \left\langle ZZI, IZZ\right\rangle$.  With these two generators we can generate $\mathbb{S}_{3}$ as follows: $III = (ZZI)(ZZI)$ and $ZIZ = (ZZI)(IZZ)$.  So when we give a description of a code in terms of its stabilizers, we mean the stabilizer generators.  The stabilizer generators form an independent set, in the sense that no generator is a product of two or more generators.  In the section on the conversion of stabilizer generators to binary matrices, this translates into the rows of the matrix being linearly independent---no row in the matrix can be obtained by the product of two or more rows.  The stabilizer for a quantum error-correcting code cannot contain the element $-I$.  To see this assume that the codespace is not the trivial subspace.  Let $\ket{\psi}$ be a state that is stabilized $-I$.  This implies that $-I\ket{\psi} = \ket{\psi}$, which would mean that $\ket{\psi}$ is the trivial state, and hence a trivial subspace which we assumed was not the case.  The stabilizer generators must all commute to constitute a valid codespace.  These stabilizer generators for quantum codes function as parity check matrices for classical linear codes.

An element $g_{1}\neq \pm I$ from the set $\mathbb{G}_{n}$ has two eigenspaces labeled by eigenvalue +1 and -1 of equal size $2^{n}/2 = 2^{n-1}$.  Now pick another element $g_{2}$ from $\mathbb{G}_{n}$ different from both $\pm I$ and $g_{1}$, but which commutes with $g_{1}$.  $g_{2}$ also has two eigenspaces of equal size $2^{n-1}$ labeled by eigenvalue +1 and -1.  But since $g_{1}$ and $g_{2}$ commute they have simultaneous eigenspaces.  So both operators can now stabilize quantum states in four eigenspaces which we label as $\left\{++,+-,-+,--\right\}$.  Each eigenspace has size $2^{n}/2^{2} = 2^{n-2}$.  In this way we can keep picking up elements from $\mathbb{G}_{n}$ and subdividing the $2^{n}$ dimensional Hilbert space into subspaces of equal sizes.  These subspaces are all orthogonal to each other.  So if we choose $n-k$ elements $g_{1}, g_{2}, \ldots , g_{n-k}$ from $\mathbb{G}_{n}$ such that they all commute with each other and none is equal to $\pm I$, then we end up dividing the Hilbert space into $2^{n-k}$ subspaces each of size $2^{k}$.  Call this set $\mathbb{S}_{n}$.  Clearly $\mathbb{S}_{n} \subset \mathbb{G}_{n}$.  We can label these subspaces as $\left\{++ \ldots +, -+ \ldots +, \ldots , -- \ldots -\right\}$.  So each subspace can contain a Hilbert space of $k$ qubits.  We choose a special subspace represented by the label $++ \ldots +$ which is the simultaneous +1 eigenspace of the elements of $\mathbb{S}_{n}$.  We call this subspace the codespace and it represents a quantum error-correcting code that encodes $k$ logical qubits into $n$ physical qubits, and we indicate this as $[[n,k]]$.  There is nothing special about the simultaneous +1 eigenspace.  We could have easily chosen any of the other subspaces as well.  So in addition to the commutation property, the stabilizer generators must all have eigenvalues equal to +1.  When an error occurs it moves the codeword out of the simultaneous +1 eigenspace into one of the orthogonal subspaces.  It is this orthogonality property of these subspaces that allows us to detect the error and safely recover from it.  We can multiply the stabilizer generators and obtain an equivalent representation.  This operation would be equivalent to adding bases together in vector notation.  In the next section we list the steps of how a quantum error-correcting code operates.

\begin{figure}
[ptb]
\begin{center}
\includegraphics[
natheight=3.386600in,
natwidth=8.973300in,
height=1.9614in,
width=5.1742in
]%
{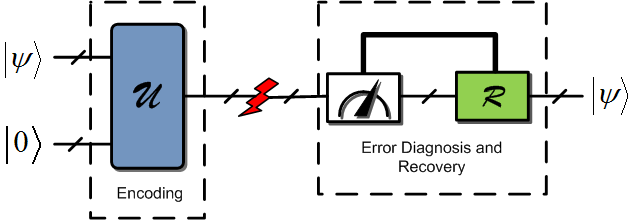}%
\caption{Steps in quantum error-correction. Thin lines denote quantum
information and thick lines denote classical information. Slanted bars denote
multiple qubits. Alice the sender of quantum information encodes a multi-qubit state $\left\vert
\psi\right\rangle $ using ancilla qubits $\left\vert
0\right\rangle $. She sends the encoded state over a noisy quantum channel.
Bob on receiving the quantum information performs multi-qubit measurements to extract classical information about
the errors. He finally performs a recovery operation $R$\ to reverse the
channel error.  This recovery operation is composed of firstly correcting the errors and secondly
decoding the corrected state.}%
\label{fig:stabilizer-code}%
\end{center}
\end{figure}

\section{Quantum Error-Correction}
\label{sec:quantum error-correction}
An $[[n,k]]$ quantum error-correcting code operates when:
\begin{enumerate}
	\item Alice encodes her $k$ qubits of quantum information along with $n-k$ ancilla qubits $\ket{0}^{\otimes n-k}$ into the general multi-qubit state $\ket{\psi}$.  She encodes this into the simultaneous +1-eigenspace of the stabilizer generators $g_{1}, g_{2}, \ldots , g_{n-k}$ as shown in Figure~\ref{fig:stabilizer-code}.  In this way, Alice has encoded $k$ logical qubits into $n$ physical qubits.
	\item Alice sends her $n$-qubit state (quantum codeword) over a noisy quantum channel to Bob, using the channel $n$ times.
	\item Bob on receiving the codeword performs a measurement of the $n-k$ stabilizer generators to determine if an error has occurred.  Recall that the stabilizer generators are composed from Pauli matrices which are Hermitian and hence valid observables on which one can perform measurements and obtain classical information.  In the ideal case when no errors occur, the measurement will give Bob a bit-string of $n-k$ +1 values.  When an error occurs the codeword moves out of the +1-eigenspace to one of the the other $2^{n-k}-1$ subspaces.  When a measurement of a stabilizer generators gives Bob a value of -1, then he knows exactly which error has occurred as this new bit-string differs from the all +1 bit-string, and he can undo that error by reapplying the error on that particular qubit.  These measurements performed by Bob do not disturb the original quantum state.  Moreover it suffices to correct only a discrete set of errors $\CE$ even though errors in quantum information can be continuous~\cite{PRA.52.R2493.1995, mikeandike2000}.  $\CE$ is a subset of the general Pauli group $\mathbb{G}_{n}$.  Bob can uniquely identify which error has occurred if the following condition holds:
\begin{equation}
\forall E_{a},E_{b}\in\mathcal{E\ \ \ \ \ }\exists\ g_{i}\in\mathbb{S}_{n}%
:\left\{  g_{i},E_{a}^{\dag}E_{b}\right\}  =0\text{ or }E_{a}^{\dag}E_{b}%
\in\mathbb{S}_{n}.
\end{equation}
What the above condition is stating is that as long as an error anticommutes with one of the stabilizer generators, Bob ought to be able to detect this.  This is the active correction part.  The second condition says that there can be ``error'' operators that commute with the stabilizer generators.  These are undetectable by Bob, however, Bob does not care about them.  This is the passive correction part.
\item If Bob can diagnose which error has occurred, he can correct for it during the recovery procedure as shown in Figure~\ref{fig:stabilizer-code}.  Once he obtains the corrected codeword, he can decode it to obtain the original $k$ information qubits.
\end{enumerate}
\begin{figure}
[ptb]
\begin{center}
\includegraphics[
natheight=3.386600in,
natwidth=8.973300in,
height=1.9614in,
width=5.1742in
]%
{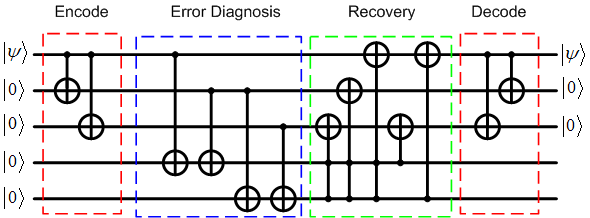}
\caption{Alice encodes her one qubit of quantum information with the help of two ancilla qubits
using the encoding unitary shown in the dashed red box in the left side of the figure.  She sends the codeword to Bob who
uses the error-diagnosis circuit shown in the dashed blue box.  The result of this measurement is projected on to the two ancilla qubits at the bottom.  The information from these ancillas (syndrome result) is then used to apply the recovery procedure.  Bob can now finally decode the quantum state to obtain the original one qubit of quantum information.}
\label{fig:threebitcode}
\end{center}
\end{figure}
We now give a specific example of the $[[3,1]]$ stabilizer code that can correct a single bit-flip error.  With this code Alice can encode one logical qubit into three physical qubits.  So $n = 3$ and $k = 1$.  This code has $n-k = 3-1 = 2$ stabilizer generators. These are $g_{1} = ZZI$ and $g_{2} = IZZ$.  The codewords for this code are $\ket{000}$ and $\ket{111}$.  Figure~\ref{fig:threebitcode} shows the
\begin{table}\centering
\begin{tabular}
[c]{|c||c||c|}
\hline
Target & Conjugation & Target \\ \hline \hline
I & $HIH^{\dagger}$ & $I$ \\ \hline
X & $HXH^{\dagger}$ & $Z$ \\ \hline
Y & $HYH^{\dagger}$ & $-Y$ \\ \hline
Z & $HZH^{\dagger}$ & $X$ \\
\hline
\end{tabular}
\caption{Hadamard transformation of the Pauli matrices.}
\label{tbl:hadamardtable}
\end{table}
%%%%%%%%%%%%%%%%%%
\begin{table}\centering
\begin{tabular}
[c]{|c||c||c|}
\hline
Target & Conjugation & Target \\ \hline \hline
I & $PIP^{\dagger}$ & $I$ \\ \hline
X & $PXP^{\dagger}$ & $Y$ \\ \hline
Y & $PYP^{\dagger}$ & $X$ \\ \hline
Z & $PZP^{\dagger}$ & $Z$ \\
\hline
\end{tabular}
\caption{Phase transformation of the Pauli matrices.}
\label{tbl:phasetable}
\end{table}
encoding unitary, the error diagnosis and recovery circuits as well as the decoding unitary.  The bit patterns corresponding to bit-flip errors are: $X_{1} = XII = \left\{-1,+1\right\}, X_{2} = IXI = \left\{-1, -1\right\}, X_{3} = IIX = \left\{+1, -1\right\}, I = III = \left\{+1, +1\right\}$.  Since the $[[3,1]]$ code is a very simple code, the encoding unitary circuits are sparse.  In a later chapter when we exploit quantum error-correcting codes to hide quantum information we will deal with the $[[7,1]]$ Steane code whose encoding unitary circuit consists of thirty-six quantum gates!
\section{Encoding Unitary}
The stabilizer generators also give us a way to generate the encoding circuit for any stabilizer code.  These unitaries belong to a special class called Clifford unitaries.  What makes them special is that when we conjugate an element from $\mathbb{G}_{n}$ with a Clifford unitary, it produces another element of $\mathbb{G}_{n}$.  Specifically, given a Clifford unitary $U$ and an element $G$ from $\mathbb{G}_{n}$, $UGU^{\dagger} = G'$, where $G'$ is an element of $\mathbb{G}_{n}$.  Gottesman showed in his thesis~\cite{thesis97gottesman} that CNOT, Hadamard, and Phase gates are sufficient to construct any Clifford unitary.  The CNOT acts on two qubits, whereas the Hadamard and Phase gates act on a single qubit.  In Section~\ref{sec:eaqeccircuit} of Chapter~\ref{chap:sixqubitcode}, we detail an algorithm that generates the encoding circuit for the six-qubit code.  The matrix corresponding to the Hadamard gate is:
\begin{equation}
H=\frac{1}{\sqrt{2}}%
\begin{bmatrix}
1 & 1\\
1 & -1
\end{bmatrix}.
\end{equation}
The matrix for the Phase gate is:
\begin{equation}
P=%
\begin{bmatrix}
1 & 0\\
0 & i
\end{bmatrix}.
\end{equation}
\begin{table}\centering
\begin{tabular}
[c]{|c|c||c||c|c|}
\hline
Target & Control & Conjugation & Target & Control\\ \hline \hline
$I$ & $I$ & $CNOT (I \otimes I) CNOT^{\dagger}$ & $I$ & $I$\\ \hline
$I$ & $X$ & $CNOT (I \otimes X) CNOT^{\dagger}$ & $X$ & $I$\\ \hline
$I$ & $Y$ & $CNOT (I \otimes Y) CNOT^{\dagger}$ & $X$ & $Y$\\ \hline
$I$ & $Z$ & $CNOT (I \otimes Z) CNOT^{\dagger}$ & $I$ & $Z$\\ \hline
$X$ & $I$ & $CNOT (X \otimes I) CNOT^{\dagger}$ & $X$ & $I$\\ \hline
$X$ & $X$ & $CNOT (X \otimes X) CNOT^{\dagger}$ & $I$ & $X$\\ \hline
$X$ & $Y$ & $CNOT (X \otimes Y) CNOT^{\dagger}$ & $I$ & $Y$\\ \hline
$X$ & $Z$ & $CNOT (X \otimes Z) CNOT^{\dagger}$ & $X$ & $Z$\\ \hline
$Y$ & $I$ & $CNOT (Y \otimes I) CNOT^{\dagger}$ & $Y$ & $Z$\\ \hline
$Y$ & $X$ & $CNOT (Y \otimes X) CNOT^{\dagger}$ & $Z$ & $Y$\\ \hline
$Y$ & $Y$ & $CNOT (Y \otimes Y) CNOT^{\dagger}$ & $-Z$ & $X$\\ \hline
$Y$ & $Z$ & $CNOT (Y \otimes Z) CNOT^{\dagger}$ & $Y$ & $I$\\ \hline
$Z$ & $I$ & $CNOT (Z \otimes I) CNOT^{\dagger}$ & $Z$ & $Z$\\ \hline
$Z$ & $X$ & $CNOT (Z \otimes X) CNOT^{\dagger}$ & $-Y$ & $Y$\\ \hline
$Z$ & $Y$ & $CNOT (Z \otimes Y) CNOT^{\dagger}$ & $Y$ & $X$\\ \hline
$Z$ & $Z$ & $CNOT (Z \otimes Z) CNOT^{\dagger}$ & $Z$ & $I$\\ \hline
\hline
\end{tabular}
\caption{CNOT transformation of two qubits.}
\label{tbl:cnottable}
\end{table}
The matrix for the CNOT is slightly more complicated:
\begin{equation}
\text{CNOT}=%
\begin{bmatrix}
1 & 0 & 0 & 0\\
0 & 1 & 0 & 0\\
0 & 0 & 0 & 1\\
0 & 0 & 1 & 0
\end{bmatrix}.
\end{equation}
The standard basis for the one qubit Pauli group $\mathbb{G}_{1}$ is $X$ and $Z$.  We list the transformation of this basis under conjugation with the Hadamard and Phase gates in Table~\ref{tbl:hadamardtable} and in Table~\ref{tbl:phasetable} respectively.  We list the conjugate transformation of two-qubit Pauli matrices with the $CNOT$ gate in Table~\ref{tbl:cnottable}.  These tables will be useful in the thesis later when we calculate the encoding of various error operators.
\section{Concluding Remarks}
The theory of error-correcting codes is vast and the number of papers published in this sub-field of quantum information science is unprecedented.  We only provided the mathematical framework necessary to understand the results of Chapter~\ref{chap:sixqubitcode}.  There is a further mathematical tool called the Pauli-to-binary isomorphism that is important in understanding and appreciating the results of the next chapter.  Instead of reproducing the tool here, we refer the reader to~\cite{arx2006brun}.

%%%%%%%%%%%%%%%%%%%%%%%%%%%%%%%%%%% NEW CHAPTER %%%%%%%%%%%%%%%%%%%%%%%%%%%%%%%%%%%%%%%%%%%%%%%%%%%%%%%%%%%%%%%%%%%%
%%%%%%%%%%%%%%%%%%%%%%%%%%%%%%%%%%%%%%%%%%%%%%%%%%%%%%%%%%%%%%%%%%%%%%%%%%%%%%%%%%%%%%%%%%%%%%%%%%%%%%%%%%%%%%%%%%%%
\chapter{The Six-Qubit Code}
\label{chap:sixqubitcode}
\begin{saying}
It matters if you just don't give up.\\
---\textit{Stephen Hawking}
\end{saying}

\section{Introduction}
\lettrine{I}t has been more than a decade since Peter Shor's seminal paper on quantum
error correction \cite{PRA.52.R2493.1995}.  He showed how to protect one qubit
against decoherence by encoding it into a subspace of a Hilbert space larger
than its own. For the first time, it was possible to think about quantum
computation from a practical standpoint.

Calderbank and Shor then provided asymptotic rates for the
existence of quantum error-correcting codes and gave upper bounds
for such rates \cite{PRA.54.1098.1996}.  They defined a quantum
error-correcting code as an isometric map that encodes $k$ qubits
into a subspace of the Hilbert space of $n$ qubits.  As long as
only $t$ or fewer qubits in the encoded state undergo errors, we
can decode the state correctly.  The notation for describing such
codes is $[[n,k,d]]$, where $d$ represents the distance of
the code, and the code encodes $k$ logical qubits into $n$
physical qubits.

These earlier codes are examples of additive or stabilizer codes. Additive codes
encode quantum information into the +1
eigenstates of $n$-fold tensor products of Pauli operators
\cite{PRA.54.1862.1996,thesis97gottesman}.  Gottesman developed an elegant
theory, the stabilizer formalism, that describes error correction, detection,
and recovery in terms of algebraic group theory~\cite{thesis97gottesman}.

Steane constructed a seven-qubit code that encodes one qubit, corrects an arbitrary
single-qubit error, and is an example of a Calderbank-Shor-Steane (CSS)\ code
\cite{steane1996}.  The five-qubit quantum error-correcting code is a \textquotedblleft perfect
code\textquotedblright\ in the sense that it encodes one qubit with the
smallest number of physical qubits while still correcting an arbitrary single-qubit
error \cite{perfectcode1996,PRA.54.3824.1996}.

Even though every stabilizer code is useful for fault-tolerant
computation \cite{PRA.54.1862.1996,thesis97gottesman}, CSS codes
allow for simpler fault-tolerant procedures. For example, an
encoded CNOT gate admits a transversal implementation without the
use of ancillas if and only if the code is of the CSS type
\cite{thesis97gottesman}.  The five-qubit code is not a CSS code
and does not possess the simple fault-tolerant properties of CSS
codes \cite{mikeandike2000}. The Steane code is a CSS code and
is well-suited for fault-tolerant computation because it has
bitwise implementations of the Hadamard and the phase gates as
well (the logical $X$ and $Z$ operators have bitwise
implementations for any stabilizer code \cite{PRA.54.1862.1996}).
However, an experimental realization of the seven-qubit code may
be more difficult to achieve than one for the five-qubit code
because it uses two additional physical qubits for encoding.

Calderbank {\it et al.}~discovered two distinct six-qubit quantum
codes \cite{ieee1998calderbank} which encode one qubit and correct
an arbitrary single-qubit error. They discovered the first of
these codes by extending the five-qubit code and the other one
through an exhaustive search of the encoding space. Neither of
these codes is a CSS code.

The five-qubit code and the Steane code have been studied
extensively \cite{mikeandike2000}, but the possibility for encoding one
qubit into six has not received much attention except for the
brief mention in Ref.~\cite{ieee1998calderbank}. In this thesis, we
bridge the gap between the five-qubit code and the Steane code by
discussing two examples of a six-qubit code. We also present
several proofs concerning the existence of single-error-correcting
CSS codes of a certain size. One of our proofs gives insight into
why Calderbank {\it et al.}~were unable to find a six-qubit CSS
code. The other proofs use a technique similar to the first proof
to show the non-existence of a CSS entanglement-assisted code that
uses fewer than six local physical qubits where one of the local qubits is half of one ebit, and corrects
an arbitrary single-qubit error.

We structure our work according to our four main results. We first
present a degenerate six-qubit quantum code and show how to
convert this code to a subsystem code. Our second result is a
proof for the non-existence of a single-error-correcting CSS
six-qubit code. Our third result is the construction of a
six-qubit CSS entanglement-assisted quantum code. This code is
globally equivalent to the Steane code. We finally
show that the latter is the smallest example of an
entanglement-assisted CSS code that corrects an arbitrary
single-qubit error.

In Section~\ref{sec:isaac}, we present a degenerate
six-qubit quantum error-correcting code that corrects an
arbitrary single-qubit error.  We present the logical Pauli operators
, CNOT and encoding circuit for this code.  We also prove
that a variation of this code gives us a non-trivial example of
a subsystem code that saturates the
subsystem Singleton bound~\cite{klap0703213}.

In Section~\ref{sec:ogy}, we present a proof that a single-error-correcting CSS six-qubit
code does not exist.
Our proof enumerates all possible CSS forms for the five
stabilizer generators of the six-qubit code and shows that none of
these forms corrects the set of all single-qubit errors.

Section~\ref{sec:bilal-mark} describes the construction of a
six-qubit non-degenerate entanglement-assisted CSS code and
presents its stabilizer generators, encoding circuit, and logical Pauli operators. This code
encodes one logical qubit into six local physical qubits. One of
the physical qubits used for encoding is half of an ebit that the
sender shares with the receiver. The six-qubit
entanglement-assisted code is globally equivalent to
the seven-qubit Steane code~\cite{steane1996} and thus corrects an
arbitrary single-qubit error on all of the qubits (including the
receiver's half of the ebit). This ability to correct errors on
the receiver's qubits in addition to the sender's qubits is not
the usual case with codes in the entanglement-assisted paradigm, a
model that assumes the receiver's halves of the ebits are noise
free because they are already on the receiving end of the channel.
We show that our example is a trivial case of a more general
rule---every $[[n,1,3]]$ code is equivalent to a $[[n-1,1,3;1]]$
entanglement-assisted code by using any qubit as Bob's half of the
ebit.

Finally, in section~\ref{sec:ogycss}, we present a proof that the
Steane code is an example of the smallest entanglement-assisted
code that corrects an arbitrary single-qubit error on the sender's
qubits, uses only one ebit, and possesses the CSS form.

Section~\ref{sec:eaqeccircuit} gives a procedure to obtain the encoding circuit for
the six-qubit CSS entanglement-assisted code. It also lists a
table detailing the error-correcting properties for the
degenerate six-qubit code.

%%%%%%%%%%%%%%%%%%%%%%% NEW SECTION %%%%%%%%%%%%%%%%%%%%%%%%%%%%%%%%%%%%%%%%%%%%%%%%%%%%%%%%
\section{Degenerate Six-Qubit Quantum Code}

\label{sec:isaac}This section details an example of a six-qubit
code that corrects an arbitrary single-qubit error.  We explicitly present the
stabilizer generators, encoding circuit, logical codewords,
logical Pauli operators and CNOT\ operator for this
code. We also show how to convert this code into a
subsystem code where one of the qubits is a gauge qubit.  We
finish this section by discussing the advantages and disadvantages
of this code.

Calderbank {\it et al.} mention the existence of two non-equivalent six-qubit codes~\cite{ieee1998calderbank}.
Their first example is a trivial extension of the five-qubit code.
They append an ancilla qubit to the five-qubit code to obtain this code.
Their second example is a non-trivial six-qubit code. They argue that there are no other codes ``up to equivalence.''
Our example is not reducible to the trivial six-qubit code because every one of its qubits is entangled with the others.
It therefore is equivalent to the second non-trivial six-qubit code in Ref.~\cite{ieee1998calderbank}
according to the arguments of Calderbank  {\it et al.}

\begin{table}[tbp] \centering
%EndExpansion%
\begin{tabular}
[c]{c|cccccc}\hline\hline
$h_{1}$ & $Y$ & $I$ & $Z$ & $X$ & $X$ & $Y$\\
$h_{2}$ & $Z$ & $X$ & $I$ & $I$ & $X$ & $Z$\\
$h_{3}$ & $I$ & $Z$ & $X$ & $X$ & $X$ & $X$\\
$h_{4}$ & $I$ & $I$ & $I$ & $Z$ & $I$ & $Z$\\
$h_{5}$ & $Z$ & $Z$ & $Z$ & $I$ & $Z$ & $I$\\ \hline
$\overline{X}$ & $Z$ & $I$ & $X$ & $I$ & $X$ & $I$\\
$\overline{Z}$ & $I$ & $Z$ & $I$ & $I$ & $Z$ & $Z$
\\\hline\hline
\end{tabular}
\caption{Stabilizer generators $h_1$, \ldots, $h_5$, and logical operators $\bar{X}$ and $\bar{Z}$ for the six-qubit code.  The convention in the above generators is that $Y=ZX$.}\label{tbl:isaac-613}
\end{table}%

Five generators specify the degenerate six-qubit code.
Table~\ref{tbl:isaac-613} lists the generators $h_{1}$, \ldots, $h_{5}$ in the stabilizer $\mathcal{S}$, and the logical operators $\overline{X}$ and
$\overline{Z}$ for the six-qubit code. Figure~\ref{fig:six-qubit-1}\ illustrates an
encoding circuit for the six-qubit code.

The quantum error-correcting conditions guarantee that the six-qubit
code corrects an arbitrary single-qubit error
\cite{mikeandike2000}. Specifically, the error-correcting
conditions are as follows: a stabilizer $\mathcal{S}$ with
generators $s_i$ where $i=1,\ldots,n-k$ (in our case $n=6$ and
$k=1$), corrects an error set $\mathcal{E}$ if every error pair
$E^{\dag}_{a}E_{b}\in \mathcal{E}$ either anticommutes with at
least one stabilizer generator
\begin{equation}
\exists\ s_{i}\in\mathcal{S}:\left\{  s_{i},E_{a}^{\dag}E_{b}\right\}  =0,
\end{equation}
or is in the stabilizer,
\begin{equation}
E_{a}^{\dag}E_{b}\in\mathcal{S}.
\end{equation}
These conditions imply the ability to correct any linear
combination of errors in the set $\mathcal{E}$
\cite{mikeandike2000,book2007mermin}.  At least one
generator from the six-qubit stabilizer anticommutes with each of the
single-qubit Pauli errors, $X_{i},Y_{i},Z_{i}$ where
$i=1,\ldots,6$, because the generators have at least one
$Z$ and one $X$ operator in all six positions. Additionally, at
least one generator from the stabilizer anticommutes with each
pair of two distinct Pauli errors (except $Z_{4}Z_{6}$, which is
in the stabilizer $\mathcal{S}$).
Table~\ref{table:sixonethree1} lists such a
generator for every pair of distinct Pauli errors for the
six-qubit code. These arguments and the table listings prove that
the code can correct an arbitrary single-qubit error.
\begin{figure}
[ptb]
\begin{center}
\[
\Qcircuit@C= 0.5em @R = 0.5em @!R{
\lstick{\ket{0}} & \gate{H} & \qw & \qw & \qw & \qw & \qw & \qw & \qw & \qw & \ctrl{5} & \qw & \ctrl{5} & \qw & \qw\\
\lstick{\ket{0}} & \gate{H} & \qw & \qw & \qw & \qw & \qw & \qw & \qw & \ctrl{4} & \qw & \qw & \qw & \qw & \qw\\
\lstick{\ket{0}} & \qw & \qw & \qw & \qswap & \ctrl{3} & \gate{H} & \ctrl{2} & \gate{H} & \qw & \qw & \qw & \targ & \gate{H} & \qw\\
\lstick{\ket{0}} & \gate{H} & \ctrl{2} & \gate{H} & \qw & \qw & \qw & \targ & \qw & \qw & \qw & \qw & \targ & \qw & \qw\\
\lstick{\ket{0}} & \qw & \qw & \qw & \qw & \qw & \qw & \targ & \qw & \targ & \qw & \qw & \targ & \qw & \qw\\
\lstick{\ket{\psi}} & \qw & \targ & \qw & \qswap\qwx[-3] & \targ & \qw & \qw & \qw & \targ & \targ & \gate{H} & \targ & \qw & \qw\\
}
\]
\end{center}
\caption
{Encoding circuit for the first six-qubit code.  The \textit{H} gate is a Hadamard gate.  For example, we apply
a Hadamard on qubit four followed by a CNOT with qubit four as the control qubit and qubit six as the target qubit. }
\label{fig:six-qubit-1}
\end{figure}
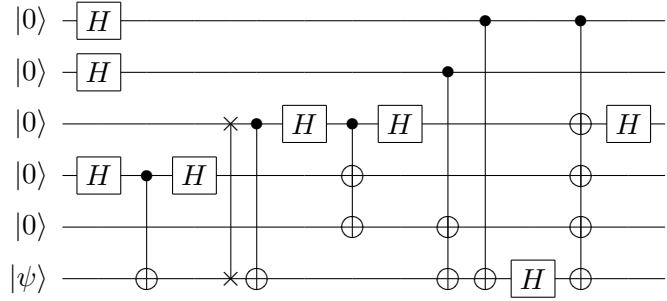
The logical basis states for the six-qubit code are as follows:
\begin{align*}
|\overline{0}\rangle &  =%
\begin{array}
[c]{c}%
|000000\rangle-|100111\rangle+|001111\rangle-|101000\rangle\ -\\
|010010\rangle+|110101\rangle+|011101\rangle-|111010\rangle
\end{array},\\
|\overline{1}\rangle &  =%
\begin{array}
[c]{c}%
|001010\rangle+|101101\rangle+|000101\rangle+|100010\rangle\ -\\
|011000\rangle-|111111\rangle+|010111\rangle+|110000\rangle
\end{array},
\end{align*}
where we suppress the normalization factors of the above codewords.

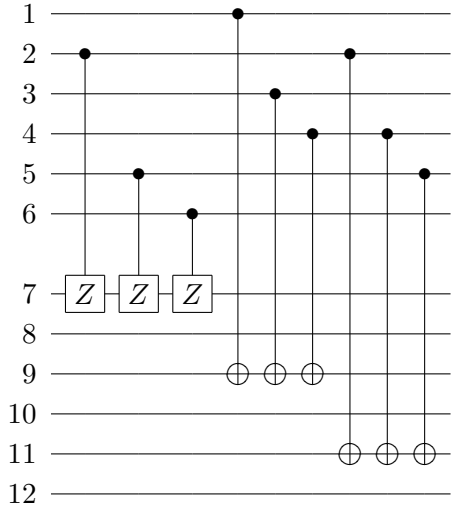
\begin{figure}
[ptb]
\begin{center}
\[
\Qcircuit @C = .5em @R = 0.1em @!R{
\lstick{1} & \qw & \qw & \qw & \ctrl{9} & \qw & \qw & \qw & \qw & \qw & \qw\\
\lstick{2} & \ctrl{6} & \qw & \qw & \qw & \qw & \qw & \ctrl{10} & \qw & \qw & \qw\\
\lstick{3} & \qw & \qw & \qw & \qw & \ctrl{7} & \qw & \qw & \qw & \qw & \qw\\
\lstick{4} & \qw & \qw & \qw & \qw & \qw & \ctrl{6} & \qw & \ctrl{8} & \qw & \qw\\
\lstick{5} & \qw & \ctrl{3} & \qw & \qw & \qw & \qw & \qw & \qw & \ctrl{7} & \qw\\
\lstick{6} & \qw & \qw & \ctrl{2} & \qw & \qw & \qw & \qw & \qw & \qw & \qw\\
& & & & & & & & & &\\
\lstick{7} & \gate{Z} & \gate{Z} & \gate{Z} & \qw & \qw & \qw & \qw  & \qw & \qw & \qw\\
\lstick{8} & \qw & \qw & \qw & \qw & \qw & \qw & \qw & \qw & \qw & \qw\\
\lstick{9} & \qw & \qw & \qw & \targ & \targ & \targ & \qw & \qw & \qw & \qw\\
\lstick{10} & \qw & \qw & \qw & \qw & \qw & \qw & \qw & \qw & \qw & \qw\\
\lstick{11} & \qw & \qw & \qw & \qw & \qw & \qw & \targ & \targ & \targ & \qw\\
\lstick{12} & \qw & \qw & \qw & \qw & \qw & \qw & \qw & \qw & \qw & \qw
}
\]
\end{center}
\caption{Logical CNOT for the six-qubit quantum code.  The first six qubits represent a logical source qubit and the last six represent a logical target qubit.  For example we begin the circuit by applying a CZ (controlled-Z) gate from source qubit two to target qubit seven.}\label{fig-logicalcnot-1}
\end{figure}

A series of CNOT and controlled-$Z$ operations implement the logical CNOT
operation for the six-qubit code.  Let CN$\left(  i,j\right)  $ denote a CNOT
acting on physical qubits $i$ and $j$ with qubit $i$ as
the control and qubit $j$ as the target. Let CZ$\left(  i,j\right)  $
denote controlled-$Z$ operations.\ The logical CNOT for the six-qubit code is as follows:%
\begin{multline*}
\overline{\text{CNOT}}=\text{CZ}\left(  2,7\right)  \ \text{CZ}\left(
5,7\right)  \ \text{CZ}\left(  6,7\right)  \ \text{CN}\left(  1,9\right)  \\
\text{CN}\left(  3,9\right)  \ \text{CN}\left(  4,9\right)  \ \text{CN}%
\left(  2,11\right) \ \text{CN}\left(  4,11\right)  \ \text{CN}\left(  5,11\right)
\end{multline*}
Figure~\ref{fig-logicalcnot-1} depicts the logical CNOT\ acting on two logical qubits
encoded with the six-qubit code.

Both the six-qubit code and the five-qubit code correct
an arbitrary single-qubit error. But the six-qubit code has the
advantage that it corrects a larger set of errors than the
five-qubit code. This error-correcting capability comes at the expense of a larger
number of qubits---it corrects a larger set of errors because the Hilbert
space for encoding is larger than that for the five-qubit code. In
comparison to the Steane code, the six-qubit code uses a smaller
number of qubits, but the disadvantage is that it does not admit
a simple transversal implementation of the logical CNOT. In
addition, the Steane code admits a bitwise implementation
of all logical single-qubit Clifford gates whereas the six-qubit
code does not.

\subsection{Subsystem Code Construction}

\label{subsystem}We convert the degenerate six-qubit code from the previous
section into a subsystem code. The degeneracy inherent in the code allows us
to perform this conversion.  The code
still corrects an arbitrary single-qubit error after we replace one of the
unencoded ancilla qubits with a gauge qubit.

We briefly review the history of subsystem codes. The essential
insight of Knill {\it et al.} was that the most general way to
encode quantum information is into a subsystem rather than
a subspace \cite{knillPRL00}. In the case when the information is
encoded in a single subsystem, the Hilbert space decomposes as
$\CH = (\CH_{A} \otimes \CH_{B})\oplus \CH_{C}$ where the
subsystem $\CH_{A}$ stores the protected information. Errors that
act on subsystem $\CH_{B}$, also known as the gauge subsystem, do
not require active correction because $\CH_{B}$ does not store any
valuable information. This passive error-correction ability of a
subsystem code may lead to a smaller number of stabilizer
measurements during the recovery process and may lead to an
improvement of the accuracy threshold for quantum computation \cite{aliferisPRL07}.
Kribs {\it et al.} recognized in Ref.~\cite{kribsPRL05} that
this subsystem structure of a Hilbert space is useful
for active quantum error-correction as well (Knill {\it et al.} did not explicitly recognize this ability in Ref.~\cite{knillPRL00}.)

We now detail how to convert the six-qubit code from the
previous section into a subsystem code.  The sixth unencoded qubit is the
information qubit and the encoding operation transforms it into
subsystem $\CH_A$. We convert the fourth unencoded ancilla qubit to a
gauge qubit. We simply consider it as a noisy qubit so that the
operators $X_4$ and $Z_4$ have no effect on the quantum
information stored in subsystem $\CH_{A}$. The operators $X_4$ and
$Z_4$ generate the unencoded gauge group. The encoding circuit in
Figure~\ref{fig:six-qubit-1} transforms these unencoded operators
into $X_4$ and $Z_4Z_6$ respectively. These operators together
generate the encoded gauge subgroup $H = \left\langle
X_4,Z_4Z_6\right\rangle$.  Errors in this subgroup do not affect the
encoded quantum information.  The code is still able to correct an
arbitrary single-qubit error because each one of the single-qubit
Pauli error pairs anticommutes with at least one of the generators
from the new stabilizer $\widetilde{\mathcal{S}}=\left\langle
h_1,h_2,h_3,h_5\right\rangle$, or belong to $H$
\cite{poulinPRL05}. Table~\ref{table:sixonethree1} shows this
property for all error pairs. The code passively corrects the error pairs $X_4$, $Z_4Z_6$,
$Y_4Z_6$ because they belong to the gauge subgroup.

% TABLE FOR SUBSYSTEM CODE
\begin{table}[tbp] \centering
%EndExpansion%
\begin{tabular}
[c]{c|cccccc}\hline\hline
$h_{1}$ & $Y$ & $I$ & $Z$ & $X$ & $X$ & $Y$\\
$h_{2}$ & $Z$ & $X$ & $I$ & $I$ & $X$ & $Z$\\
$h_{3}$ & $I$ & $Z$ & $X$ & $X$ & $X$ & $X$\\
$h_{5}$ & $Z$ & $Z$ & $Z$ & $I$ & $Z$ & $I$\\ \hline
$H_{X}$ & $I$ & $I$ & $I$ & $X$ & $I$ & $I$\\
$H_{Z}$ & $I$ & $I$ & $I$ & $Z$ & $I$ & $Z$\\ \hline
$\overline{X}$ & $Z$ & $I$ & $X$ & $I$ & $X$ & $I$\\
$\overline{Z}$ & $I$ & $Z$ & $I$ & $I$ & $Z$ & $Z$
\\\hline\hline
\end{tabular}
\caption{Stabilizer generators $h_1, h_2, h_3$ and $h_5$, gauge subgroup generators $H_{X}$ and $H_{Z}$, and logical operators $\bar{X}$ and $\bar{Z}$ for the six-qubit code.  The convention in the above generators is that $Y=ZX$.}\label{tbl:subsystem}%
%TCIMACRO{\TeXButton{E}{\end{table}}}%
%BeginExpansion
\end{table}%
%EndExpansion

The six-qubit single-error-correcting subsystem code
discussed above saturates the Singleton bound for subsystem codes
\cite{klap0703213},
\begin{equation}
n -k - r \geq 2(d - 1),
\end{equation}
where for our code, $n=6$, $k=1$, $r=1$, and $d=3$.
This code is the smallest non-trivial subsystem
code that corrects an arbitrary single-qubit
error \footnote{A trivial way to saturate this bound is to add a noisy qubit to the five-qubit code!}.  One of
the advantages of using the subsystem construction is that we only
need to perform four stabilizer measurements instead of five
during the recovery process.

\section{Non-existence of a $[[6,1,3]]$ CSS Code}

\label{sec:ogy}Our proposition below proves that it is impossible
for a six-qubit code to possess the CSS\ structure while
correcting an arbitrary single-qubit error.  An immediate
corollary of this proposition is that the seven-qubit code is the smallest
single-error-correcting CSS code.

\begin{proposition}
There is no six-qubit code that encodes one qubit, possesses the CSS\ structure, and corrects an
arbitrary single-qubit error.
\end{proposition}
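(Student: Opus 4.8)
\emph{Reformulation.} The plan is to recast the CSS condition in classical‑coding terms. A CSS stabilizer for an $[[6,1]]$ code has five generators, $r$ of pure $X$‑type and $5-r$ of pure $Z$‑type for some $r$. Writing the $X$‑type generators as the rows of a binary matrix $H_X$ and the $Z$‑type ones as the rows of $H_Z$, commutation of the generators is $H_X H_Z^{T}=0$. Put $C_2=\mathrm{rowspan}(H_X)$ and $C_1^{\perp}=\mathrm{rowspan}(H_Z)$, so $C_2\subseteq C_1\subseteq\mathbb{F}_2^{6}$ with $\dim C_2=r$, $\dim C_1^{\perp}=5-r$, $\dim C_1=r+1$, $\dim C_2^{\perp}=6-r$. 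Feeding the products $E_a^{\dagger}E_b$ (which range over $I$, the weight‑$1$ Paulis, and a few weight‑$2$ Paulis) into the error‑correcting conditions stated earlier gives a clean equivalence: the code corrects every single‑qubit error iff both nontrivial cosets $C_1\setminus C_2$ and $C_2^{\perp}\setminus C_1^{\perp}$ have minimum Hamming weight at least $3$; indeed, a weight‑$\le2$ vector $u\in C_1\setminus C_2$ produces the uncorrectable pair $E_a^{\dagger}E_b=X^{(u)}$, and symmetrically on the $Z$ side. So the proposition is equivalent to the non‑existence of a chain $C_2\subset C_1\subseteq\mathbb{F}_2^{6}$ with $\dim C_1=\dim C_2+1$ and both coset minimum weights $\ge3$. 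The involution $(C_1,C_2)\mapsto(C_2^{\perp},C_1^{\perp})$ preserves the hypotheses and swaps the two coset conditions (it just exchanges the roles of the $X$‑ and $Z$‑generators), so it suffices to treat $r=\dim C_2\in\{0,1,2\}$.

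\emph{Easy cases $r=0,1$.} If $r=0$ then $C_2=\{0\}$, $C_2^{\perp}=\mathbb{F}_2^{6}$, and the unique nontrivial coset of the hyperplane $C_1^{\perp}=x^{\perp}$ contains $e_i$ whenever $x_i=1$, a weight‑$1$ vector; contradiction. If $r=1$, write $C_2=\langle a\rangle$ with $\mathrm{wt}(a)=w$. The weight‑$1$ instances of the $C_2^{\perp}\setminus C_1^{\perp}$ condition force $e_i\in C_1^{\perp}$ for every $i$ with $a_i=0$, and the weight‑$2$ instances force $e_i+e_j\in C_1^{\perp}$ for every $i,j$ with $a_i=a_j=1$; these span independent subspaces of dimensions $6-w$ (supported off $\mathrm{supp}(a)$) and $w-1$ (supported on $\mathrm{supp}(a)$), so $\dim C_1^{\perp}\ge5$, contradicting $\dim C_1^{\perp}=4$.

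\emph{Main case $r=2$.} Here $C_1^{\perp}$ is a $[6,3]$ code and $C_2^{\perp}$ a $[6,4]$ code with $C_1^{\perp}\subseteq C_2^{\perp}$. A binary $[6,4,3]$ code would be MDS, but the only binary MDS codes are the trivial ones (repetition, single‑parity‑check, full space), so $C_2^{\perp}$ has minimum distance $\le2$; its low‑weight word cannot lie in $C_2^{\perp}\setminus C_1^{\perp}$ and hence lies in $C_1^{\perp}$. Thus $C_1^{\perp}$ contains a word $w_0$ of weight $1$ or $2$, so $C_1\subseteq w_0^{\perp}$. If $\mathrm{wt}(w_0)=1$, say $w_0=e_i$, then $C_1$ and $C_2$ are supported off coordinate $i$; deleting that coordinate yields exactly a CSS $[[5,1,3]]$ configuration, and iterating the same MDS step one coordinate at a time drives it down to a CSS $[[4,1,3]]$ code, which does not exist (by the quantum Singleton bound $n-k\ge2(d-1)$, or by the $r\le1$ dimension count above with $n=4$). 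If $\mathrm{wt}(w_0)=2$, every codeword of the $[6,3]$ code $C_1$ agrees on the two coordinates of $w_0$; combined with the fact that the four words of $C_1\setminus C_2$ all have weight $\ge3$ while $C_1$ has only eight words, this pins the admissible weight distributions of $C_1$ to a short list, and for each one checks that the companion condition on $C_2^{\perp}\setminus C_1^{\perp}$ fails.

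\emph{Main obstacle and remarks.} Everything outside $r=2$ is a one‑line dimension count; the real content is the $r=2$ analysis, and within it the weight‑$2$ sub‑case, where no clean counting argument is available and one must enumerate the few possible $[6,3]$ codes $C_1$ — equivalently, enumerate the handful of admissible column patterns of the $5\times6$ binary parity‑check matrix and exhibit an uncorrectable error pair for each. That finite but tedious enumeration is the heart of the proof, and it is exactly what explains why Calderbank \emph{et al.} never found a six‑qubit CSS code; the corollary that the Steane code is the smallest single‑error‑correcting CSS code then follows, smaller CSS codes being excluded either by the quantum Singleton bound or by this proposition together with the non‑existence of a CSS $[[5,1,3]]$ code.
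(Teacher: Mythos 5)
Your reformulation is correct, and your treatment of the easy cases is sound: the dimension counts for $r=0,1$ are in fact tighter than the paper's corresponding step, which dismisses a forced weight-one $Z$-stabilizer by saying it ``would make for a trivial code'' rather than by counting. The duality $(C_1,C_2)\mapsto(C_2^{\perp},C_1^{\perp})$ and the observation that the nonexistence of a binary $[6,4,3]$ (MDS) code forces a word of weight at most $2$ into $C_1^{\perp}$ give a genuinely different, and arguably cleaner, organization than the paper's direct enumeration of column patterns of the two $X$-type generators. But the argument is incomplete exactly where the proposition's content lies.

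Two concrete gaps. First, in the weight-$1$ branch of $r=2$, ``iterating the same MDS step one coordinate at a time drives it down to a CSS $[[4,1,3]]$ code'' presumes that at the $n=5$ stage the low-weight word of $C_2'^{\perp}$ again has weight $1$. It need not: a binary $[5,3,3]$ code also fails to exist, so you do get a word of weight $\le 2$ in $C_2'^{\perp}$, but if its weight is $2$ you land in the weight-$2$ analysis at $n=5$, which you have not carried out, and the iteration stalls. Second, and more seriously, the weight-$2$ branch at $n=6$ --- which you yourself call ``the heart of the proof'' --- is only asserted: you claim the constraints ``pin the admissible weight distributions of $C_1$ to a short list'' and that ``for each one checks that the companion condition fails,'' but you produce neither the list nor the checks. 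That finite enumeration is precisely what the proposition amounts to; it is where the paper spends most of its proof, running through the multiplicity patterns of the three admissible columns of the $X$-generator matrix ($2+2+2$, $3+2+1$, $3+3$, and a column repeated more than three times) and exhibiting in each case either more independent forced weight-$2$ words of $C_1^{\perp}$ than its dimension $3$ permits or an uncorrectable weight-one $X$-error. Until that case analysis is actually executed, the hard case of the proposition remains unproved.
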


\begin{proof}
We first suppose that a code with the above properties exists. If
a $[[6,1,3]]$ CSS code exists, its stabilizer $\mathcal{S}$ must
have five generators:
\begin{equation}
\mathcal{S}=\langle g_{1},...,g_{5}\rangle.
\end{equation}
The CSS\ structure implies that each of these generators includes
$X$ operators only or $Z$ operators only (except for the
identity). The set of correctable Pauli errors $\{E_{j}\}$ in the
Pauli group acting on six qubits satisfies
$\{E_{i}E_{j},\mathcal{S}\}=0$ unless $E_{i}E_{j}\in \mathcal{S}$,
for all $i,j$. We show below that no set of five CSS\ stabilizer
generators acting on six qubits can correct an arbitrary
single-qubit error and possess the CSS\ structure.

First assume that such generators exist. It is not possible that
all generators consist of the same type of operators (all $X$ or
all $Z$) because single-qubit errors of the same type ($X$ or $Z$)
are then not correctable. Consider the possibility that there is
one generator of one type, say $X$, and four generators of the
other type, say $Z$. If the generator of type $X$ has an identity
acting on any qubit, say the first one, then the error $Z_{1}$
commutes with all generators. This error is not correctable unless
it belongs to the stabilizer. But if it belongs to the stabilizer,
the first qubit of the code must be fixed in the state
$|0\rangle$, which makes for a trivial code. The other possibility
is that the $X$-type generator has the form $g_{1}=XXXXXX$. But
then any combination of two $Z$-errors ($Z_{i}Z_{j}$) commutes
with it, and so they have to belong to the stabilizer. But there
are five independent such combinations of errors ($Z_{1}Z_{2}$,
$Z_{1}Z_{3}$, $Z_{1}Z_{4}$, $Z_{1}Z_{5}$, $Z_{1}Z_{6}$) and only
four generators of the $Z$ type. Therefore, it is impossible for
the code to have four generators of one type and one generator of
the other type.

The only possibility left is that there are two generators of one type, say
$X$, and three generators of the other type, say $Z$. The two $X$-type generators
should not both have identity acting on any given qubit because a $Z$ error on
that qubit commutes with all generators. Such an error cannot belong to the
stabilizer because it would again make for a trivial code. Specifically, we write the two $X$-type generators
($g_{1}$ and $g_{2}$) one above the other
\begin{equation}
\begin{matrix}
g_{1}\\
g_{2}
\end{matrix}
=
\begin{matrix}
- & - & - & - & - & -\\
- & - & - & - & - & -
\end{matrix},
\label{eq:matrix}
\end{equation}
where we leave the entries unspecified in the above equation,
but they are either $X$ or $I$. Both generators cannot have the column
\begin{equation}
\begin{matrix}
I\\
I
\end{matrix}
\nonumber
\end{equation}
in (\ref{eq:matrix}) because both generators
cannot have identities acting on the same qubit. Thus, only three different
columns can build up the generators in (\ref{eq:matrix}):
\begin{equation}
\begin{matrix}
I\\
X
\end{matrix}
\hspace{0.5cm},\hspace{0.5cm}
\begin{matrix}
X\\
I
\end{matrix}
\hspace{0.5cm},\hspace{0.5cm}
\begin{matrix}
X\\
X
\end{matrix}
\hspace{0.5cm}\nonumber
\end{equation}
We distinguish the following cases:
\begin{enumerate}
\item Each column appears twice.
\item One column appears three times, another column appears twice, and the
third column appears once.
\item One column appears three times and another column appears three times.
\item At least one column appears more than three times.
\end{enumerate}
If one and the same column appears on two different places, say qubit one and
qubit two as in the following example,
\begin{equation}
\begin{matrix}
g_{1}\\
g_{2}
\end{matrix}
=
\begin{matrix}
X & X & - & - & - & -\\
I & I & - & - & - & -
\end{matrix}
\hspace{0.5cm},
\end{equation}
then a pair of $Z$ errors on these qubits ($Z_{1}Z_{2}$) commutes with all
generators, and therefore belongs to the stabilizer.

In the first case considered above, there are three such pairs of errors,
which up to a relabeling of the qubits can be taken to be $Z_{1}Z_{2}$,
$Z_{3}Z_{4}$, $Z_{5}Z_{6}$. They can be used as stabilizer generators because
these operators are independent. But then the following pairs of single-qubit
$X$ errors commute with all generators: $X_{1}X_{2}$, $X_{3}X_{4}$,
$X_{5}X_{6}$. This possibility is ruled out because the latter cannot be part
of the stabilizer generators.

In the second case, up to a relabeling of the qubits, we have the following pairs of
$Z$ errors that commute with the stabilizer: $Z_{1}Z_{2}$, $Z_{1}Z_{3}$,
$Z_{2}Z_{3}$, $Z_{4}Z_{5}$. Only three of all four are independent, and they
can be taken to be stabilizer generators. But then all three generators of
$Z$-type have the identity acting on the sixth qubit, and therefore the error
$X_{6}$ is not correctable (and it cannot be a stabilizer generator because
it would make for a trivial code).

In the third case, the pairs $Z_{1}Z_{2}$, $Z_{1}Z_{3}$, $Z_{2}Z_{3}$, $Z_{4}Z_{5}$,
$Z_{4}Z_{6}$, $Z_{5}Z_{6}$ (up to a relabeling), four of which are
independent, commute with the stabilizer. But they cannot all belong to the
stabilizer because there are only three possible generators of the $Z$-type.

Finally, in the fourth case, we have three or more independent pairs of $Z$ errors
commuting with the stabilizer (for example $Z_{1}Z_{2}$, $Z_{1}Z_{3}$,
$Z_{1}Z_{4}$, which corresponds to the first four columns being identical). If
the independent pairs are more than three, then their number is more
than the possible number of generators. If they are exactly three, we can take
them as generators. But then $Z$-type generators act trivially upon two qubits,
and therefore $X$ errors on these qubits are not correctable. This last step
completes the proof.
\end{proof}

\section{Non-degenerate Six-Qubit CSS Entanglement-Assisted Quantum Code}

We detail the construction of a six-qubit
CSS entanglement-assisted quantum code in this section. We first
review the history of entanglement-assisted quantum coding and
discuss the operation of an entanglement-assisted code. We then
describe our construction. It turns out that the code we obtain is
equivalent to the Steane code \cite{steane1996} when including
Bob's qubit, and therefore is not a new code. It suggests,
however, a general rule for which we present a proof---every
$[[n,1,3]]$ code is equivalent to a $[[n-1,1,3;1]]$
entanglement-assisted code with any qubit serving as Bob's half of
the ebit. Even though our code is a trivial example of this rule,
it is instructive to present its derivation from the perspective
of the theory of entanglement-assisted codes.

\label{sec:bilal-mark}Bowen constructed an example of a quantum
error-correcting code that exploits shared entanglement between sender and
receiver \cite{PhysRevA.66.052313}. Brun, Devetak, and Hsieh later generalized
Bowen's example and developed the entanglement-assisted stabilizer formalism
\cite{arx2006brun,science2006brun}. This theory is an extension of the
standard stabilizer formalism and uses shared entanglement to formulate
stabilizer codes. Several references provide a review
\cite{arx2006brun,science2006brun,arxiv2007brun} and generalizations of the
theory to entanglement-assisted operator codes
\cite{isit2007brun,arxiv2007brun}, convolutional entanglement distillation
protocols \cite{arx2007wilde}, continuous-variable codes
\cite{arx2007wildeEA}, and entanglement-assisted quantum convolutional codes
\cite{arx2007wildeEAQCC}. Gilbert {\it et al.}~also generalized their \textquotedblleft
quantum computer condition\textquotedblright\ for fault tolerance to the
entanglement-assisted case \cite{arx2007gilbert}.  Entanglement-assisted
codes are a special case of ``correlation-assisted codes'', where Bob's qubit
is also allowed to be noisy. Such codes are in turn instances
of general linear quantum error-correcting codes \cite{SL:07}.

An entanglement-assisted quantum error-correcting code operates as
follows. A sender and receiver share $c$ ebits before
communication takes place. The sender possesses her half of the
$c$ ebits, $n-k-c$ ancilla qubits, and $k$ information qubits. She
performs an encoding unitary on her $n$\ qubits and sends them
over a noisy quantum communication channel. The receiver combines
his half of the $c$\ ebits with the $n$ encoded qubits and
performs measurements on all of the qubits to diagnose the errors
from the noisy channel. The generators corresponding to the
measurements on all of the qubits form a commuting set. The
generators thus form a valid stabilizer, they do not disturb the
encoded quantum information, and they learn only about the errors
from the noisy channel. The notation for such a code is
$[[n,k,d;c]]$, where $d$ is the distance of the code.

The typical assumption for an entanglement-assisted quantum code
is that noise does not affect Bob's half of the ebits because they
reside on the other side of a noisy quantum communication channel
between Alice and Bob. Our $[[6,1,3;1]]$ entanglement-assisted code
is globally equivalent to the $[[7,1,3]]$ Steane code
and thus corrects errors on Bob's side as well. From a
computational perspective, a code that corrects errors on all
qubits is more powerful than a code that does not. From the
perspective of the entanglement-assisted paradigm, however, this feature
is unnecessary and may result in decreased
error-correcting capabilities of the code with respect to errors
on Alice's side.

We construct our code using the parity check matrix of a classical
code. Consider the parity check matrix for the $\left[
7,4,3\right]$ Hamming
code:%
\begin{equation}
\left[
\begin{array}
[c]{ccccccc}%
1 & 0 & 0 & 1 & 0 & 1 & 1\\
0 & 1 & 0 & 1 & 1 & 0 & 1\\
0 & 0 & 1 & 0 & 1 & 1 & 1
\end{array}
\right]
\end{equation}
The Hamming code encodes four classical bits and corrects a single-bit error.
We remove one column of the above parity check matrix to form a new parity check matrix
$H$\ as follows:%
\begin{equation}
H=\left[
\begin{array}
[c]{cccccc}%
1 & 0 & 0 & 1 & 0 & 1\\
0 & 1 & 0 & 1 & 1 & 0\\
0 & 0 & 1 & 0 & 1 & 1
\end{array}
\right]  .
\end{equation}
The code corresponding to $H$\ encodes three bits and still corrects a
single-bit error. We begin constructing the stabilizer for an
entanglement-assisted quantum code by using the CSS construction
\cite{isit2007brun,arxiv2007brun}:%
\begin{equation}
\left[  \left.
\begin{array}
[c]{c}%
H\\
0
\end{array}
\right\vert
\begin{array}
[c]{c}%
0\\
H
\end{array}
\right]  .\label{eq:EA-six-qubit-binary}%
\end{equation}
The left side of the above matrix is the \textquotedblleft Z\textquotedblright%
\ side and the right side of the above matrix is the \textquotedblleft
X\textquotedblright\ side. The isomorphism between $n$-fold tensor products of
Pauli matrices and $n$-dimensional binary vectors gives a correspondence
between the matrix in (\ref{eq:EA-six-qubit-binary})\ and the set of
Pauli generators below \cite{thesis97gottesman,mikeandike2000,arx2006brun}:%
\begin{equation}%
\begin{array}
[c]{cccccc}%
Z & I & I & Z & I & Z\\
I & Z & I & Z & Z & I\\
I & I & Z & I & Z & Z\\
X & I & I & X & I & X\\
I & X & I & X & X & I\\
I & I & X & I & X & X
\end{array}
\end{equation}
The above set of generators have good quantum error-correcting properties
because they correct an arbitrary single-qubit error. These properties follow
directly from the properties of the classical code. The problem with the above
generators is that they do not form a commuting set and thus do not correspond
to a valid quantum code. We use entanglement to resolve this problem by
employing the method outlined in
Ref.~\cite{arx2006brun,science2006brun,arxiv2007brun}.
%TCIMACRO{\TeXButton{B}{\begin{table}[tbp] \centering}}%
%BeginExpansion
\begin{table}[tbp] \centering
%EndExpansion%
\begin{tabular}
[c]{ccc}%
\begin{tabular}
[c]{c|c|cccccc}
& Bob & \multicolumn{6}{|c}{Alice}\\\hline\hline
$g_{1}^{\prime}$ & $I$ & $I$ & $Z$ & $I$ & $I$ & $I$ & $I$\\
$g_{2}^{\prime}$ & $I$ & $I$ & $I$ & $Z$ & $I$ & $I$ & $I$\\
$g_{3}^{\prime}$ & $Z$ & $Z$ & $I$ & $I$ & $I$ & $I$ & $I$\\
$g_{4}^{\prime}$ & $I$ & $I$ & $I$ & $I$ & $Z$ & $I$ & $I$\\
$g_{5}^{\prime}$ & $I$ & $I$ & $I$ & $I$ & $I$ & $Z$ & $I$\\
$g_{6}^{\prime}$ & $X$ & $X$ & $I$ & $I$ & $I$ & $I$ & $I$\\\hline
$\overline{X}^{\prime}$ & $I$ & $I$ & $I$ & $I$ & $I$ & $I$ & $X$\\
$\overline{Z}^{\prime}$ & $I$ & $I$ & $I$ & $I$ & $I$ & $I$ & $Z$%
\\\hline\hline
\end{tabular}
& \quad\quad &
\begin{tabular}
[c]{c|c|cccccc}
& Bob & \multicolumn{6}{|c}{Alice}\\\hline\hline
$g_{1}$ & $I$ & $Z$ & $I$ & $Z$ & $Z$ & $Z$ & $I$\\
$g_{2}$ & $I$ & $Z$ & $Z$ & $I$ & $I$ & $Z$ & $Z$\\
$g_{3}$ & $Z$ & $Z$ & $I$ & $I$ & $Z$ & $I$ & $Z$\\
$g_{4}$ & $I$ & $X$ & $X$ & $I$ & $I$ & $X$ & $X$\\
$g_{5}$ & $I$ & $I$ & $X$ & $X$ & $X$ & $I$ & $X$\\
$g_{6}$ & $X$ & $X$ & $I$ & $I$ & $X$ & $I$ & $X$\\\hline
$\overline{X}$ & $I$ & $I$ & $I$ & $I$ & $X$ & $X$ & $X$\\
$\overline{Z}$ & $I$ & $I$ & $Z$ & $Z$ & $I$ & $Z$ & $I$\\\hline\hline
\end{tabular}
\\
(a) &  & (b)
\end{tabular}
\caption{(a) The generators and logical operators for the
unencoded state. Generators $g'_3$ and $g'_6$ indicate that Alice
and Bob share an ebit. Alice's half of the ebit is her first qubit
and Bob's qubit is the other half of the ebit. Generators $g'_1$,
$g'_2$, $g'_4$, and $g'_5$ indicate that Alice's second, third,
fourth, and fifth respective qubits are ancilla qubits in the
state $\left\vert 0 \right\rangle$. The unencoded logical
operators $\bar{X}'$ and $\bar{Z}'$ act on the sixth qubit and
indicate that the sixth qubit is the information qubit. (b) The
encoded generators and logical operators for the $[[6,1,3;1]]$
entanglement-assisted quantum error-correcting code.}\label{tbl:ea-613}%
%TCIMACRO{\TeXButton{E}{\end{table}}}%
%BeginExpansion
\end{table}%
%EndExpansion

Three different but related methods determine the minimum number of ebits that
the entanglement-assisted quantum code requires:

\begin{enumerate}
\item Multiplication of the above generators with one another according to the
``symplectic
Gram-Schmidt orthogonalization algorithm''
forms a new set of generators \cite{arx2006brun,science2006brun}.
The error-correcting properties of the code are
invariant under these multiplications because the code is an additive code.
The resulting code has equivalent error-correcting properties and uses the
minimum number of ebits. We employ this technique in this work.

\item A slightly different algorithm in the appendix of Ref.~\cite{arx2007wilde}
determines the minimum number of ebits
required, the stabilizer measurements to perform, and the local encoding
unitary that Alice performs to rotate the unencoded state to the encoded state.
This algorithm is the most useful because it
\textquotedblleft kills three birds with one stone.\textquotedblright

\item The minimum number of ebits for a CSS\ entanglement-assisted code is equal to
the rank of $HH^{T}$ \cite{isit2007brun,arxiv2007brun}. This simple formula is
useful if we are only concerned with computing the minimum number of ebits. It
does not determine the stabilizer generators or the encoding circuit.
Our code requires one ebit to form a valid
stabilizer code because the rank
of $HH^{T}$ for our code is equal to one.
\end{enumerate}

Table~\ref{tbl:ea-613}(b) gives the final form of the stabilizer
for our entanglement-assisted six-qubit code. We list both the
unencoded and the encoded generators for this code in
Table~\ref{tbl:ea-613}.

Our code uses one ebit shared between sender and receiver in the encoding
process. The sender performs a local encoding unitary that encodes one qubit
with the help of four ancilla qubits and one ebit.

The symplectic Gram-Schmidt algorithm yields a symplectic matrix that rotates
the unencoded symplectic vectors to the encoded symplectic vectors. The
symplectic matrix corresponds to an encoding unitary acting on the unencoded
quantum state \cite{arx2006brun,science2006brun}. This correspondence results
from the Stone-von Neumann Theorem and unifies the Schr\"{o}dinger and
Heisenberg pictures for quantum error correction \cite{eisert-2003-1}.

The symplectic Gram-Schmidt algorithm also determines the logical operators
for the code. Some of the vectors in the symplectic matrix that do not
correspond to a stabilizer generator are equivalent to the logical operators
for the code. It is straightforward to determine which symplectic vector
corresponds to which logical operator ($X$ or $Z$) by observing the action of
the symplectic matrix on vectors that correspond to the unencoded $X$ or $Z$
logical operators.

For our code, the symplectic matrix is as follows:%
\begin{equation}
\left[  \left.
\begin{array}
[c]{cccccc}%
1 & 0 & 0 & 1 & 0 & 1\\
1 & 0 & 1 & 1 & 1 & 0\\
1 & 1 & 0 & 0 & 1 & 1\\
0 & 0 & 0 & 0 & 0 & 0\\
0 & 0 & 0 & 0 & 0 & 0\\
0 & 1 & 1 & 0 & 1 & 0\\
0 & 0 & 0 & 0 & 0 & 0\\
0 & 0 & 0 & 0 & 0 & 0\\
0 & 0 & 0 & 0 & 0 & 0\\
0 & 0 & 0 & 1 & 0 & 1\\
0 & 1 & 0 & 1 & 0 & 1\\
0 & 0 & 0 & 0 & 0 & 0
\end{array}
\right\vert
\begin{array}
[c]{cccccc}%
0 & 0 & 0 & 0 & 0 & 0\\
0 & 0 & 0 & 0 & 0 & 0\\
0 & 0 & 0 & 0 & 0 & 0\\
1 & 1 & 0 & 0 & 1 & 1\\
0 & 1 & 1 & 1 & 0 & 1\\
0 & 0 & 0 & 0 & 0 & 0\\
1 & 0 & 0 & 1 & 0 & 1\\
0 & 0 & 1 & 1 & 1 & 1\\
0 & 0 & 1 & 0 & 1 & 0\\
0 & 0 & 0 & 0 & 0 & 0\\
0 & 0 & 0 & 0 & 0 & 0\\
0 & 0 & 0 & 1 & 1 & 1
\end{array}
\right]
\end{equation}
The index of the rows of the above matrix corresponds to the operators in the
unencoded stabilizer in Table~\ref{tbl:ea-613}(a). Therefore, the first five
rows correspond to the encoded $Z$ operators in the stabilizer and the sixth
row corresponds to the logical $\overline{Z}$ operator. As an example, we can
represent the unencoded logical $\overline{Z}$ operator in
Table~\ref{tbl:ea-613}(a) as the following binary vector:%
\begin{equation}
\left[  \left.
\begin{array}
[c]{cccccc}%
0 & 0 & 0 & 0 & 0 & 1
\end{array}
\right\vert
\begin{array}
[c]{cccccc}%
0 & 0 & 0 & 0 & 0 & 0
\end{array}
\right]  .
\end{equation}
Premultiplying the above matrix by the above row vector gives the binary form
for the encoded logical $\overline{Z}$ operator. We can then translate this
binary vector to a six-fold tensor product of Paulis equivalent to the
logical $\overline{Z}$ operator in Table~\ref{tbl:ea-613}(b). Using this same
idea, the first row of the above matrix corresponds to Alice's Paulis in
$g_{3}$, the second row to $g_{1}$, the third row to $g_{2}$, the fourth row
to $g_{4}$, the fifth row to $g_{5}$, and the seventh row to $g_{6}$. The last
six rows in the above matrix correspond to encoded $X$ operators and it is
only the last row that is interesting because it acts as a logical $X$ operator.

\begin{figure}
[ptb]
\begin{center}
\[
\Qcircuit@C= 0.5em @R = 0.5em @!R{
& & & \qw& \qw& \qw& \qw& \qw& \qw& \qw& \qw& \qw& \qw& \qw & \qw & \qw & \qw\gategroup{1}{3}{2}{3}{0.5em}{\{}\\
& & \lstick{\raisebox{2em}{$\ket{\Phi^{+}}^{BA}$}} & \qw& \qw& \qw& \qw& \qw& \qw& \qw& \qw& \qw& \ctrl{5} & \gate{H} & \ctrl{5} & \gate{H} & \qw\\
& \lstick{\ket{0}^A} & \gate{H} & \qw & \qw & \qw & \qw & \qw & \qw & \qw & \ctrl{3} & \gate{H} & \qw & \qw & \qw & \qw & \qw \\
& \lstick{\ket{0}^A} & \gate{H} & \qw & \qw & \qw & \qw& \ctrl{3} & \gate{H} & \qswap & \qw & \qw & \qw & \qw & \qw & \qw & \qw\\
& \lstick{\ket{0}^A} & \gate{H} & \qw & \qw & \ctrl{2} & \gate{H} & \qw & \qw & \qw &\targ & \gate{H} & \targ & \gate{H} & \targ & \gate{H} & \qw\\
& \lstick{\ket{0}^A} & \gate{H} & \ctrl{1} & \gate{H} & \targ& \qw & \qw & \qw & \qswap\qwx[-2] & \targ & \gate{H} & \qw & \qw & \qw & \qw & \qw\\
& \lstick{\ket{\psi}^A} & \qw &\targ & \gate{H} & \targ & \qw & \targ \qw & \qw & \qw & \qw & \qw & \targ & \gate{H} & \targ & \gate{H} & \qw\\
}
\]
\end{center}
\caption
{Encoding circuit for the [[6,1,3;1]] code.  The ``H'' gate is a Hadamard gate.}\label{fig-eacodecircuit}
\end{figure}
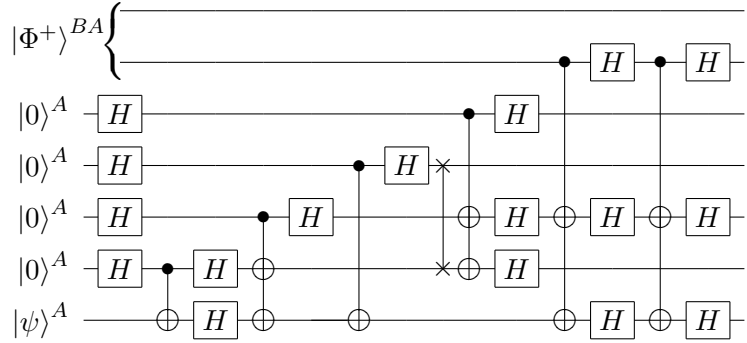%
%EndExpansion

Figure~\ref{fig-eacodecircuit} gives the encoding circuit for the
code.

We now detail the operations that give the equivalence of this
code to the seven-qubit Steane code. Consider the generators in
Table~\ref{tbl:ea-613}(b). Label the columns from left to right as
$1$, $2$, \ldots, $7$ where ``1'' corresponds to Bob's column.
Replace the generator $g_1$ by $g_1 g_2 g_3$, and the generator
$g_5$ by $g_5 g_6$. Switch the new generators $g_4$ and $g_5$.
Switch columns 2 and 3. Switch columns 1 and 5. Cyclically permute
the columns once so that 1 becomes 7, 2 becomes 1, 3 becomes 2,
..., 7 becomes 6. The resulting code is exactly the Steane code if
one reads it from right to left (i.e., up to the permutation $1
\leftrightarrow 7$, $2 \leftrightarrow 6$, $3 \leftrightarrow 5$).

Inspection of the encoded logical operators in
Table~\ref{tbl:ea-613}(b) reveals that Alice can perform logical
$\overline{X}$ and $\overline{Z}$ operations locally. Since the
CNOT\ has a transversal implementation for the Steane code, if
Alice and Bob possess two logical qubits each encoded with this
entanglement-assisted code, they can apply an encoded CNOT
transversally by the use of classical communication to coordinate
their actions. We point out, however, that the idea of computation
in the entanglement-assisted paradigm is not well motivated, since
if classical communication is allowed, Alice could send the
initial state to Bob and inform him of the operations that
need to be applied. An interesting open question is if there exist
codes that allow fault-tolerant computation on Alice's
side only.

From this example, we observe that some subset of the
entanglement-assisted codes correct errors on Bob's side.  This
phenomenon can be understood as an instance of the more general
correlation-assisted codes and linear quantum error-correction
theory detailed in Ref.~\cite{SL:07}. It may be useful from a
practical standpoint to determine which entanglement-assisted
codes satisfy this property. Here we provide an answer for the
case of single-error-correcting codes that use one bit of entanglement.

\begin{proposition}
Every $[[n,1,3]]$ code is equivalent
to a $[[n-1,1,3;1]]$ code with any qubit serving as Bob's half of
the ebit.
\end{proposition}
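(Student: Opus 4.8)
The plan is to show that given any $[[n,1,3]]$ stabilizer code, one can pick an arbitrary physical qubit, ``hand it to Bob,'' and reinterpret the remaining $n-1$ qubits together with that qubit as a $[[n-1,1,3;1]]$ entanglement-assisted code. The first step is to recall that an $[[n,1,3]]$ code has $n-1$ stabilizer generators $g_1,\dots,g_{n-1}$ together with logical operators $\overline{X},\overline{Z}$, and that these $n+1$ Pauli operators together generate (a subgroup of) $\mathbb{G}_n$ satisfying the usual commutation relations: the $g_i$ all commute, $\overline{X}$ and $\overline{Z}$ anticommute with each other and commute with every $g_i$. Now fix a qubit, say qubit $1$, and designate it as Bob's half of an ebit; qubits $2,\dots,n$ are Alice's. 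I would then \emph{relabel} the operators by splitting each $n$-fold Pauli into its action on qubit $1$ and its action on qubits $2,\dots,n$. The claim is that after an appropriate symplectic change of generators, exactly one generator pair becomes ``entangled'' across the Alice/Bob cut, which is precisely the statement that the code needs one ebit.

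The key mechanism is the symplectic Gram-Schmidt procedure referenced in the excerpt. The strategy is: view the $n-1$ commuting generators $g_i$ (and the logical operators) restricted to Alice's $n-1$ qubits. If, after restriction, they still formed a commuting set, Alice would have a genuine $[[n-1,1,3]]$ code on her own — but the quantum Singleton / distance-$3$ counting, or more directly the fact that we started from a code that genuinely uses qubit $1$, shows the restricted operators cannot all commute. The core computation is to determine the \emph{rank of the anticommutation (symplectic Gram) matrix} of Alice's restricted generators: I would argue this rank is exactly $2$, i.e. there is essentially one offending anticommuting pair, because the only place anticommutation can be introduced is through the (single) qubit that was removed — two Pauli operators that commute on all $n$ qubits but anticommute on qubits $2,\dots,n$ must have had ``mismatched'' $X$/$Z$ content on qubit $1$, and qubit $1$ is a single qubit so its local symplectic form has rank at most $2$. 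Hence at most one ebit is needed; and it is exactly one (not zero) precisely when qubit $1$ is genuinely entangled in the code, which holds for any qubit of a nondegenerate-enough $[[n,1,3]]$ code — and if some qubit happened to be unentangled the code would be a padded $[[n-1,1,3]]$ code and the statement is trivial in the other direction. Then the entanglement-assisted formalism of Brun--Devetak--Hsieh (cited as \cite{arx2006brun,science2006brun}) guarantees that this symplectic data assembles into a valid $[[n-1,1,3;1]]$ code: Bob's qubit supplies the missing anticommuting partner, restoring a commuting stabilizer.

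The distance claim is the last piece: I would check that the error-correcting conditions are preserved. Since the full set of operators on $n$ qubits detects all weight-$\le 2$ errors on those $n$ qubits (distance $3$), and Bob's qubit is assumed noiseless in the entanglement-assisted model, any single-qubit error on Alice's $n-1$ qubits is in particular a single-qubit error on the original $n$ qubits, so it is still detected/corrected by the (now larger, Bob-inclusive) stabilizer; the entanglement-assisted error-correcting conditions are literally the same algebraic conditions restricted to errors supported on Alice's side. The main obstacle I anticipate is \textbf{nailing down that the Gram-matrix rank is exactly $1$ ebit's worth (rank $2$) rather than $0$** — that is, cleanly arguing one never needs more than one ebit and, for the converse direction of ``equivalence,'' that the construction is genuinely reversible. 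The rank-$\le 2$ bound is a one-line symplectic observation (a single removed qubit contributes a rank-$\le 2$ block to the symplectic form), so the substance is really just organizing the symplectic bookkeeping and invoking the cited entanglement-assisted construction; I expect the authors' proof is correspondingly short, essentially ``remove a column, compute $\mathrm{rank}(HH^T)$-style, and appeal to \cite{arx2006brun}.''
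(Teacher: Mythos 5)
Your proposal is correct and is essentially the paper's argument: the paper carries out your rank-$2$ observation concretely, by multiplying generators so that the chosen qubit's column is reduced to exactly one $X$ and one $Z$ (first collapsing repeated $X$'s, $Y$'s, and $Z$'s, then eliminating the $Y$ by products with the others), which is precisely the standard form of Bob's half of an ebit; the hypothesis that the code corrects arbitrary errors on that qubit is what guarantees the column contains at least two distinct Paulis, matching your "exactly one ebit, not zero" step. You correctly anticipated both the mechanism and its brevity, so there is nothing to add beyond noting that the paper does the symplectic bookkeeping by hand rather than citing the general $\mathrm{rank}(HH^{T})$ machinery.
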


\begin{proof}
We prove this proposition by showing that any column in the table of
stabilizer generators for such a code can be reduced to the
standard form of Bob's column in an entanglement-assisted code
(containing exactly one $X$ and one $Z$ operator). Without loss of
generality, consider the column corresponding to the first qubit.
This column generally may contain $X$, $Y$, $Z$, or $I$ operators,
but if the code corrects any error on the first qubit, there must
be at least two different Pauli operators in this column. We can
reduce this column to the desired form as follows. Choose one of
the generators that contains $X$ on the first qubit, and replace
each of the other generators that contain an $X$ there by its
product with the chosen generator. Do the same for $Y$ and $Z$.
Thus we are left with at most one generator with $X$, one with $Y$
and one with $Z$. To eliminate $Y$, we replace it by its product
with the $X$ and $Z$ generators. If either $X$ or $Z$ is missing,
we replace the $Y$ generator with its product with the other
non-trivial generator.
\end{proof}

This result can be understood as a reflection of the fact that in
a code that corrects arbitrary single-qubit errors, every qubit is
maximally entangled with the rest and therefore can be thought of
as part of an ebit. The latter can also be seen to follow from the
property that every single-qubit error must send the code space to
an orthogonal subspace.

Note that for the case of $[[n,1,3;c]]$ codes with $c>1$, the
relation could be more complicated. If such a code corrects
an arbitrary single-qubit error, it is equivalent to an
$[[n+c,1,3]]$ code, but it is not obvious whether a $[[n+c,1,3]]$ code
can be interpreted as a $[[n,1,3;c]]$ code because the type of
entanglement that exists between $c$ qubits and the rest $n$
qubits may not be the same as that of $c$ e-bits.

\section{Non-existence of $[[n,1,3;1]]$ CSS codes for $n\leq 5$}
\label{sec:ogycss}

We now show that there does not exist a smaller entanglement-assisted
CSS code that uses only one ebit and corrects an arbitrary
single-qubit error on Alice's side. The proof is similar to that
for the non-existence of a $[[6,1,3]]$ CSS code.

\begin{proposition}
There does not exist an $[[n,1,3;1]]$ entanglement-assisted CSS
code for $n\leq 5$.
\end{proposition}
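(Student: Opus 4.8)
The plan is to mimic the proof of the non-existence of a $[[6,1,3]]$ CSS code. Suppose, toward a contradiction, that an $[[n,1,3;1]]$ entanglement-assisted CSS code exists with $n\leq 5$. Its stabilizer $\mathcal{S}$ acts on $n+1$ qubits --- Alice's $n$ local qubits together with Bob's half of the single shared ebit --- and, since the code encodes one qubit using one ebit, $\mathcal{S}$ has $n$ independent commuting generators. By the CSS hypothesis each generator is (apart from identities) of pure $X$-type or pure $Z$-type; let $a$ be the number of $X$-type generators, $b=n-a$ the number of $Z$-type generators, and record the $X$-type generators as the rows of a binary matrix $A$ and the $Z$-type generators as the rows of a binary matrix $B$, with columns indexed by the $n+1$ qubits. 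The $X$-type elements of $\mathcal{S}$ then form a subspace of dimension exactly $a$ and the $Z$-type elements one of dimension exactly $b$.

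First I would reduce to $a\geq 2$ and $b\geq 2$. If $a=0$ no $Z$ error on Alice's side is detectable, and symmetrically if $b=0$. If $a=1$, correctability of every single-qubit $Z$ error forces the unique $X$-type generator to act as $X$ on each of Alice's $n$ qubits (otherwise some $Z_i$ commutes with every generator and, lest the code be trivial, is a weight-one logical operator on Alice's side, contradicting distance $3$). But then $Z_iZ_j$ commutes with every generator for all $i\neq j$ in Alice's block, so all $\binom{n}{2}$ operators $Z_iZ_j$ lie in $\mathcal{S}$; these span the $(n-1)$-dimensional space of even-weight $Z$-operators on Alice's qubits, which forces the $Z$-type part of $\mathcal{S}$ to equal that space. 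A short check then exhibits a logical $\overline{Z}$ supported on at most one of Alice's qubits ($Z_{\mathrm{Bob}}$ or $Z_iZ_{\mathrm{Bob}}$ modulo the stabilizer), again contradicting distance $3$. Hence $a\geq 2$, and by the $X\leftrightarrow Z$ symmetry $b\geq 2$, so $n=a+b\geq 4$; only $n=4$ (with $a=b=2$) and $n=5$ (with $\{a,b\}=\{2,3\}$) survive.

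What remains is a finite case analysis, organized exactly like the ``each column appears twice / three times / at least four times'' casework of the $[[6,1,3]]$ proof, now applied to the column-repetition patterns of $A$ and of $B$ restricted to Alice's $n$ qubits. The tools that close each pattern are: (i) every Alice-column of $A$ and of $B$ is nonzero (correctability of weight-one errors, modulo trivial codes); (ii) if Alice-columns $i$ and $j$ of $B$ coincide then $X_iX_j\in\mathcal{S}$, and symmetrically for $A$; (iii) no Pauli of Alice-weight at most two can be a logical operator of a distance-$3$ code, so ``commutes with everything'' genuinely forces membership in $\mathcal{S}$; (iv) since the $X$-type (resp.\ $Z$-type) part of $\mathcal{S}$ has dimension $a$ (resp.\ $b$), no family of more than $a$ (resp.\ $b$) independent forced weight-two elements can occur --- and in the borderline patterns where these forced elements exactly fill that part, the surviving generators are supported inside a proper subset of the qubits, so either some Alice-column of the other matrix vanishes, contradicting (i), or no generator touches Bob's qubit, contradicting the standard form of Bob's column; (v) Alice-columns $i$ and $j$ cannot coincide in both $A$ and $B$, for then the pair $Y_i,Y_j$ violates the error-correction condition; (vi) in standard entanglement-assisted form Bob's column carries one $Z$ (in a $Z$-type generator) and one $X$ (in an $X$-type generator). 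One also checks in passing that the mixed errors $X_iZ_j$ and the remaining $Y$-type pairs impose nothing beyond (i)--(v). Since $a,b\leq 3$ and $n\leq 5$, only a handful of patterns arise.

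I expect no conceptually hard step; the real cost is the bookkeeping of verifying that every repetition pattern for $n=4$ and $n=5$ is actually eliminated, especially the ``tight'' patterns where the contradiction comes not from a dimension count but from a vanishing Alice-column via (i), a low-weight logical operator via (iii), or Bob's column via (vi). One could instead try to short-circuit everything by noting that an $[[n,1,3;1]]$ CSS code with Bob's qubit absorbed is a CSS stabilizer code on $n+1$ qubits and invoking the non-existence of a $[[m,1,3]]$ CSS code for $m\leq 6$; but establishing the distance-$3$ property for errors on Bob's qubit is precisely the delicate point that the direct case analysis avoids.
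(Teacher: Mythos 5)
Your plan is the same as the paper's: the paper's proof explicitly transplants the column-repetition case analysis of the $[[6,1,3]]$ CSS argument to the entanglement-assisted setting, rules out the all-one-type and $1$-versus-$(n-1)$ splits of the generators, and then grinds through the patterns for the two $X$-type columns on Alice's qubits for $n=5$ and $n=4$. (The only structural difference is at the low end: the paper disposes of $n\leq 3$ by a subspace-dimension count valid even for non-CSS codes, whereas you derive $n\geq 4$ from $a\geq 2$ and $b\geq 2$; for the CSS statement both work.) However, two of your tools are misjustified, and the deferred ``bookkeeping'' hides the one genuinely delicate case.

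First, tool (v) is wrong as stated. If Alice-columns $i$ and $j$ coincide in both $A$ and $B$, then $X_iX_j$, $Z_iZ_j$, and hence $Y_iY_j=(Z_iZ_j)(X_iX_j)$ all commute with every generator and are forced \emph{into} $\mathcal{S}$; every remaining pair such as $X_iZ_j$ or $X_iY_j$ is then detectable by (i). So the pair $Y_i,Y_j$ is a degenerate pair, not a violation -- the configuration is constrained only through the dimension count (iv), not excluded outright. Second, in your $a=1$ elimination the closing contradiction via ``a logical $\overline{Z}$ of the form $Z_iZ_{\mathrm{Bob}}$ contradicts distance $3$'' does not go through: for an entanglement-assisted code only Alice's qubits suffer errors, $Z_i$ alone anticommutes with the all-$X$ generator and is therefore detectable, and a normalizer element supported partly on Bob's noiseless qubit does not certify a small distance. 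The correct contradiction is the one the paper uses and your tool (vi) already supplies: the forced pairs $Z_1Z_j$ exhaust the $(n-1)$-dimensional $Z$-type part of $\mathcal{S}$ entirely on Alice's side, leaving no room for the $Z$-type generator that must act on Bob's qubit (equivalently, the generators then commute when restricted to Alice's side, so no ebit is consumed and only $k=0$ can be encoded). Finally, be aware that in the hardest pattern for $n=5$ (two columns each appearing twice), the forced elements $Z_1Z_2, Z_3Z_4$ fill only two of the three $Z$-dimensions, so neither of your advertised closing mechanisms (a vanishing Alice-column, or no generator reaching Bob) fires; the paper instead notes that $X_1X_2$ and $X_3X_4$ are not in $\mathcal{S}$, forces the last $Z$-type generator to anticommute with both, pins down its form as $ZIZIZ|Z$ up to relabeling, and then exhibits an anticommutation with one of the $X$-type generators. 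Your mechanism (iii) covers this only in contrapositive form, so that case needs to be written out rather than waved at.
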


\begin{proof}
We being this proof by giving a dimensionality argument for the non-existence
of quantum codes (CSS or non-CSS) with $n < 4$.  This can be easily seen as follows.
Assume that the code is non-degenerate. There are $3n$ different single-qubit errors on
Alice's side, which means that there must exist $3n+1$ orthogonal
subspaces of dimension $2$ inside the entire $2^{n+1}$-dimensional
Hilbert space, i.e., $(3n+1)2\leq 2^{n+1}$. This is impossible for
$n<4$. Since for $n\leq3$ the number of generators is at most 3,
and two of the generators have to act non-trivially on Bob's side,
we can have degeneracy with respect to errors on Alice's side only
for $n=3$ with exactly one of the generators being equal to a pair
of errors on Alice's side. These two errors would be the only
indistinguishable single-qubit errors on Alice's side (no other
pair of errors on Alice's side can belong to the stabilizer),
which reduces the number of required orthogonal subspaces from
$3\times 3+1=10$ to 9. The required dimensions are $2\times 9=18$
and they cannot fit in the $2^4=16$-dimensional Hilbert space.

Suppose that there exists a $[[5,1,3;1]]$ CSS code. Its stabilizer
must have 5 generators ($S=\langle g_1,..., g_5\rangle$), each
consisting of only $X$ and $I$ operators or $Z$ and $I$ operators.
For an entanglement-assisted code, the generators must be of
the form
\begin{equation}
\begin{tabular}
[c]{cccccc|c}
$g_1=$ & $-$ & $-$ &$-$ & $-$ & $-$ & $X$\\
$g_2=$ & $-$ & $-$ &$-$ & $-$ & $-$ & $Z$\\
$g_3=$ & $-$ & $-$ &$-$ & $-$ & $-$ & $I$\\
$g_4=$ & $-$ & $-$ &$-$ & $-$ & $-$ & $I$ \\
$g_5=$ & $-$ & $-$ &$-$ & $-$ & $-$ & $I$
\end{tabular}
\end{equation}
where we have left the entries on Alice's side unspecified. The
set of correctable Pauli errors on Alice's side $\{E_j\in
\mathcal{P}_5\}$ (where $\mathcal{P}_5$ is the five-qubit Pauli
group) must satisfy $\{E_iE_j,S\}=0$ unless $E_i E_j \in S$, for
all $i,j=1,2,3,4,5$. All generators cannot be of the same type
($X$ or $Z$). The possibility that there is one generator of one
type, say $X$, and four generators of the other ($Z$) type, is
also ruled out because the $X$-type generator would have to be of
the form $g_1=XXXXX|X$ in order that every qubit is acted upon
non-trivially by at least one $X$ operator from the stabilizer.
This would mean, however, that any combination of two $Z$-errors
($Z_iZ_j$, $i,j=1,2,3,4,5$) would commute with the stabilizer, and
so it would have to belong to the stabilizer. There are four
independent such combinations of errors
($Z_1Z_2$,$Z_1Z_3$,$Z_1Z_4$,$Z_1Z_5$) which would have to be the
other four generators. But then there would be no possibility for
a $Z$ operator on Bob's side (as in $g_2$). Therefore, this is
impossible.

The only possibility is that there are 2 generators of one type,
say $X$, and 3 generators of the other type ($Z$). The two
$X$-type generators should not both have identity acting on any
given qubit on Alice's side because a $Z$ error on that qubit would
commute with all generators. Consider the following form for the two $X$-type generators:
\begin{equation}
\begin{tabular}
[c]{cccccc|c}
$g_1=$ & $-$ &$-$ & $-$ & $-$ & $-$ & $X$\\
$g_3=$ & $-$ &$-$ & $-$ & $-$ & $-$ & $I$
\end{tabular}
\end{equation}
There are three different columns that can fill the unspecified
entries in the above table:
\begin{equation}
\begin{matrix}
I\\
X
\end{matrix}\hspace{0.5cm}, \hspace{0.5cm}
\begin{matrix}
X\\
I
\end{matrix}\hspace{0.5cm}, \hspace{0.5cm}
\begin{matrix}
X\\
X
\end{matrix}\hspace{0.5cm}.
\notag
\end{equation}
We distinguish the following cases: two columns appear twice
and one column appears once, one column appears three times and
another column appears twice, one column appears three times and
each of the other columns appears once, at least one column
appears more than three times.

In the first case, up to relabeling of the qubits, we
distinguish the following possibilities:
\begin{equation}
\begin{tabular}
[c]{cccccc|c}
$g_1'=$ & $I$ & $I$ & $X$ & $X$ & $X$ & $X$\\
$g_3'=$ & $X$ & $X$ & $I$ & $I$ & $X$ & $I$
\end{tabular}\label{possibility1}
\end{equation}
\begin{equation}
\begin{tabular}
[c]{cccccc|c}
$g_1''=$ & $X$ &$X$ & $I$ & $I$ & $X$ & $X$\\
$g_3''=$ & $X$ &$X$ & $X$ & $X$ & $I$ & $I$
\end{tabular}\label{possibility2}
\end{equation}
\begin{equation}
\begin{tabular}
[c]{cccccc|c}
$g_1'''=$ & $X$ &$X$ & $X$ & $X$ & $I$ & $X$\\
$g_3'''=$ & $X$ &$X$ & $I$ & $I$ & $X$ & $I$
\end{tabular}\label{possibility3}
\end{equation}
For each possibility, the pairs of errors $Z_1Z_2$ and $Z_3Z_4$
commute with the stabilizer and therefore they would have to be
equal to the stabilizer generators $g_4$ and $g_5$. But the pairs
of errors $X_1X_2$ and $X_3X_4$ would commute with $g_1$, $g_3$,
$g_4$ and $g_5$. Since these errors do not belong to the
stabilizer, they would have to anti-commute with $g_3$. Therefore,
up to interchanging the first and second, or the third and fourth
qubits, the generator $g_2$ must have the form
\begin{equation}
\begin{tabular}
[c]{cccccc|c} $g_3=$ & $Z$ &$I$ & $Z$ & $I$ & $Z$ & $Z$
\end{tabular}.
\end{equation}
(Note that the fifth entry must be $Z$ because there must be at
least one generator that has a $Z$ acting on that qubit.) But it
can be verified that for each of the possibilities
\eqref{possibility1}, \eqref{possibility2} and
\eqref{possibility3}, $g_3$ anti-commutes with one of the $X$-type
generators. Therefore, the first case is impossible.

In the second case, one of the possible columns appears three
times and another column appears twice, e.g.,
\begin{equation}
\begin{tabular}
[c]{cccccc|c}
$g_1=$ & $X$ &$X$ & $X$ & $X$ & $X$ & $X$\\
$g_3=$ & $X$ &$X$ & $X$ & $I$ & $I$ & $I$
\end{tabular}
\end{equation}
In such a case we would have three independent pairs of $Z$ errors
($Z_1Z_2$, $Z_1Z_3$ and $Z_4Z_5$) which commute with the
stabilizer and therefore have to belong to it. But then there
would be no possibility for a $Z$ operator on Bob's side (the
generator $g_2$). Therefore this case is impossible.

In the third case, one column appears three times and each other
column appears once, as in

\begin{equation}
\begin{tabular}
[c]{cccccc|c}
$g_1=$ & $X$ &$X$ & $X$ & $X$ & $I$ & $X$\\
$g_3=$ & $X$ &$X$ & $X$ & $I$ & $X$ & $I$
\end{tabular}
\end{equation}
In this case, the pairs of errors $Z_1Z_2$ and $Z_1Z_3$ commute
with the stabilizer and must be equal to $g_4$ and $g_5$. But in
order for the fourth and fifth qubits to be each acted upon by at
least one $Z$ operator from the stabilizer, the generator $g_2$
would have to be of the form
\begin{equation}
\begin{tabular}
[c]{cccccc|c} $g_2=$ & $-$ &$-$ & $-$ & $Z$ & $Z$ & $Z$
\end{tabular}
\end{equation}
This means that the pair of errors $X_4X_5$ commutes with the
stabilizer, and since it is not part of the stabilizer, this case
is also impossible.

Finally, if one column appears more than three times, there would
be at least three independent pairs of $Z$ errors on Alice's side
which have to belong to the stabilizer. This leaves no possibility
for a $Z$ operator on Bob's side, i.e., this case is also ruled
out. Therefore, a $[[5,1,3;1]]$ CSS code does not exist.

In a similar way we can show that a $[[4,1,3;1]]$ CSS code does not
exist. Such a code would have 4 generators of the form
\begin{equation}
\begin{tabular}
[c]{ccccc|c}
$g_1=$ &  $-$ &$-$ & $-$ & $-$ & $X$\\
$g_2=$ &  $-$ &$-$ & $-$ & $-$ & $Z$\\
$g_3=$ &  $-$ &$-$ & $-$ & $-$ & $I$\\
$g_4=$ &  $-$ &$-$ & $-$ & $-$ & $I$
\end{tabular}
\end{equation}
The possibilities that all of the generators are of the same type,
or that one generator is of one type and the other three are of
the other type, are readily ruled out by arguments similar to
those for the $[[5,1,3;1]]$ code. The only possibility is two
$X$-type generators and two $Z$-type generators. The table of the
$X$-type generators
\begin{equation}
\begin{tabular}
[c]{ccccc|c}
$g_1=$  &$-$ & $-$ & $-$ & $-$ & $X$\\
$g_3=$  &$-$ & $-$ & $-$ & $-$ & $I$
\end{tabular}
\end{equation}
has to be filled by the same three columns we discussed before. As
we saw in our previous arguments, in the case when one column
appears three or more times there are at least two independent
pairs of errors on Alice's side which commute with the stabilizer.
These errors would have to belong to the stabilizer, but this
leaves no possibility for a $Z$ operator on Bob's side. In the
case when one column appears twice and another column appears
twice, the situation is analogous. The only other case is when one
column appears twice and each of the other two columns appears
once, as in
\begin{equation}
\begin{tabular}
[c]{ccccc|c}
$g_1=$  &$X$ & $X$ & $I$ & $X$ & $X$\\
$g_3=$  &$X$ & $X$ & $X$ & $I$ & $I$
\end{tabular}
\end{equation}
Since in this case the pair of errors $Z_1Z_2$ would commute with
the stabilizer, this pair would have to be equal to the generator
$g_4$. The third and fourth qubits each have to be acted upon by
at least one $Z$ operators from the stabilizer. Thus the generator
$g_2$ would have to have the form
\begin{equation}
\begin{tabular}
[c]{ccccc|c} $g_2=$  &$-$ & $-$ & $Z$ & $Z$ & $Z$
\end{tabular}.
\end{equation}
But then the pair $X_3X_4$ which does not belong to the stabilizer
would commute with all stabilizer generators. Therefore a
$[[4,1,3;1]]$ CSS code does not exist.\end{proof}

We point out that a $[[4,1,3;1]]$ non-CSS code was found in
Ref.~\cite{science2006brun}. This is the smallest
possible code that can encode one qubit with the use of only one
ebit, and at the same time correct an arbitrary single-qubit error
on Alice's side. Here we have identified an example of the smallest
possible CSS code with these characteristics.

\section{Entanglement-Assisted Encoding Circuit}
\label{sec:eaqeccircuit}
Here we detail an algorithm that generates the encoding circuit for the $[[6,1,3;1]]$ code.
We follow the recipe outlined in the appendix of Ref.~\cite{arx2007wilde}.
We begin by first converting the stabilizer generators in Table~\ref{tbl:ea-613}(b) into a
binary form which we refer to as a $Z|X$ matrix.  We obtain the the left $Z$ submatrix
by inserting a ``1'' wherever we see a $Z$ in the stabilizer generators.  We obtain the
$X$ submatrix by inserting a ``1'' wherever we see a corresponding $X$
in the stabilizer generator.  If there is a $Y$ in the generator, we insert a ``1''
in the corresponding row and column of both the $Z$ and $X$ submatrices.

The idea is to convert matrix~\ref{appendixeqn1} to matrix~\ref{appendixeqn17} through a series of row and column operations.
The binary form of the matrix in~\ref{appendixeqn1} corresponds to the stabilizer generators in Table~\ref{tbl:ea-613}(a).
We can use CNOT, Hadamard, Phase, and SWAP gates.
\begin{enumerate}
\item When we apply a CNOT gate from qubit $i$ to qubit $j$, it adds column $i$ to
column $j$ in the $X$ submatrix, and in the $Z$ submatrix it adds column $j$ to column $i$.
\item A Hadamard on qubit $i$ swaps
column $i$ in the $Z$ submatrix with column $i$ in the $X$ submatrix.
\item A Phase gate on qubit
$i$ adds column $i$ in the $X$ submatrix to column $i$ in the $Z$ submatrix.
\item When we apply a
SWAP gate from qubit $i$ to qubit $j$, we exchange column $i$ with column $j$ in $Z$ submatrix
and column $i$ and column $j$ in the $X$ submatrix.
\end{enumerate}
Row operations do not change the error-correcting properties of the code.  They do not
cost us in terms of gates.  They are also crucial in determining the minimum number of ebits for the code.

\begin{equation}
\left[  \left.
\begin{array}
[c]{cccccc}
1 & 0 & 0 & 1 & 0 & 1 \\
1 & 0 & 1 & 1 & 1 & 0 \\
1 & 1 & 0 & 0 & 1 & 1 \\
0 & 0 & 0 & 0 & 0 & 0 \\
0 & 0 & 0 & 0 & 0 & 0 \\
0 & 0 & 0 & 0 & 0 & 0
\end{array}
\right\vert
\begin{array}
[c]{cccccc}
0 & 0 & 0 & 0 & 0 & 0 \\
0 & 0 & 0 & 0 & 0 & 0 \\
0 & 0 & 0 & 0 & 0 & 0 \\
1 & 1 & 0 & 0 & 1 & 1 \\
1 & 0 & 1 & 1 & 1 & 0 \\
1 & 0 & 0 & 1 & 0 & 1
\end{array}
\right]
\label{appendixeqn1}
\end{equation}
We begin the algorithm by computing the symplectic product \cite{arx2006brun} between the various rows of the matrix.
The first row is symplectically orthogonal to the second row.  Moreover, it is symplectically orthogonal
to all the rows except row six.  So we swap the second row with the sixth row.
\begin{equation}
\left[  \left.
\begin{array}
[c]{cccccc}
1 & 0 & 0 & 1 & 0 & 1 \\
0 & 0 & 0 & 0 & 0 & 0 \\
1 & 1 & 0 & 0 & 1 & 1 \\
0 & 0 & 0 & 0 & 0 & 0 \\
0 & 0 & 0 & 0 & 0 & 0 \\
1 & 0 & 1 & 1 & 1 & 0
\end{array}
\right\vert
\begin{array}
[c]{cccccc}
0 & 0 & 0 & 0 & 0 & 0 \\
1 & 0 & 0 & 1 & 0 & 1 \\
0 & 0 & 0 & 0 & 0 & 0 \\
1 & 1 & 0 & 0 & 1 & 1 \\
1 & 0 & 1 & 1 & 1 & 0 \\
0 & 0 & 0 & 0 & 0 & 0
\end{array}
\right]
\label{appendixeqn2}
\end{equation}
Now apply Hadamard gates to qubits, one, four and six.
This operation swaps the columns one, four and six on the $Z$ side with columns
one, four and six on the $X$ side.
\begin{equation}
\left[  \left.
\begin{array}
[c]{cccccc}
0 & 0 & 0 & 0 & 0 & 0 \\
1 & 0 & 0 & 1 & 0 & 1 \\
0 & 1 & 0 & 0 & 1 & 0 \\
1 & 0 & 0 & 0 & 0 & 1 \\
1 & 0 & 0 & 1 & 0 & 0 \\
0 & 0 & 1 & 0 & 1 & 0
\end{array}
\right\vert
\begin{array}
[c]{cccccc}
1 & 0 & 0 & 1 & 0 & 1 \\
0 & 0 & 0 & 0 & 0 & 0 \\
1 & 0 & 0 & 0 & 0 & 1 \\
0 & 1 & 0 & 0 & 1 & 0 \\
0 & 0 & 1 & 0 & 1 & 0 \\
1 & 0 & 0 & 1 & 0 & 0
\end{array}
\right]
\label{appendixeqn3}
\end{equation}
Apply a CNOT from qubit one to qubit four and a CNOT from qubit one to qubit six.
This operation adds column one to four and column one to column six on the $X$ side.  On the $Z$
side of the matrix, the CNOT operation adds column four to column one and column six to column one.
\begin{equation}
\left[  \left.
\begin{array}
[c]{cccccc}
0 & 0 & 0 & 0 & 0 & 0 \\
1 & 0 & 0 & 1 & 0 & 1 \\
0 & 1 & 0 & 0 & 1 & 0 \\
0 & 0 & 0 & 0 & 0 & 1 \\
0 & 0 & 0 & 1 & 0 & 0 \\
0 & 0 & 1 & 0 & 1 & 0
\end{array}
\right\vert
\begin{array}
[c]{cccccc}
1 & 0 & 0 & 0 & 0 & 0 \\
0 & 0 & 0 & 0 & 0 & 0 \\
1 & 0 & 0 & 1 & 0 & 0 \\
0 & 1 & 0 & 0 & 1 & 0 \\
0 & 0 & 1 & 0 & 1 & 0 \\
1 & 0 & 0 & 0 & 0 & 1
\end{array}
\right]
\label{appendixeqn4}
\end{equation}
Now apply a Hadamard gate on qubit one.
\begin{equation}
\left[  \left.
\begin{array}
[c]{cccccc}
1 & 0 & 0 & 0 & 0 & 0 \\
0 & 0 & 0 & 1 & 0 & 1 \\
0 & 1 & 0 & 0 & 1 & 0 \\
0 & 0 & 0 & 0 & 0 & 1 \\
0 & 0 & 0 & 1 & 0 & 0 \\
0 & 0 & 1 & 0 & 1 & 0
\end{array}
\right\vert
\begin{array}
[c]{cccccc}
0 & 0 & 0 & 0 & 0 & 0 \\
1 & 0 & 0 & 0 & 0 & 0 \\
0 & 0 & 0 & 1 & 0 & 0 \\
0 & 1 & 0 & 0 & 1 & 0 \\
0 & 0 & 1 & 0 & 1 & 0 \\
0 & 0 & 0 & 0 & 0 & 1
\end{array}
\right]
\label{appendixeqn5}
\end{equation}
Apply a Hadamard gate on qubit four and qubit six.  This operation swaps columns four and six on
$Z$ side with respective columns on the $X$ side.
\begin{equation}
\left[  \left.
\begin{array}
[c]{cccccc}
1 & 0 & 0 & 0 & 0 & 0 \\
0 & 0 & 0 & 0 & 0 & 0 \\
0 & 1 & 0 & 1 & 1 & 0 \\
0 & 0 & 0 & 0 & 0 & 0 \\
0 & 0 & 0 & 0 & 0 & 0 \\
0 & 0 & 1 & 0 & 1 & 1
\end{array}
\right\vert
\begin{array}
[c]{cccccc}
0 & 0 & 0 & 0 & 0 & 0 \\
1 & 0 & 0 & 1 & 0 & 1 \\
0 & 0 & 0 & 0 & 0 & 0 \\
0 & 1 & 0 & 0 & 1 & 1 \\
0 & 0 & 1 & 1 & 1 & 0 \\
0 & 0 & 0 & 0 & 0 & 0
\end{array}
\right]
\label{appendixeqn6}
\end{equation}
Finally, we apply a CNOT gate from qubit one to qubit four and another CNOT gate
from qubit one to qubit six.
\begin{equation}
\left[  \left.
\begin{array}
[c]{cccccc}
1 & 0 & 0 & 0 & 0 & 0 \\
0 & 0 & 0 & 0 & 0 & 0 \\
0 & 1 & 0 & 1 & 1 & 0 \\
0 & 0 & 0 & 0 & 0 & 0 \\
0 & 0 & 0 & 0 & 0 & 0 \\
0 & 0 & 1 & 0 & 1 & 1
\end{array}
\right\vert
\begin{array}
[c]{cccccc}
0 & 0 & 0 & 0 & 0 & 0 \\
1 & 0 & 0 & 0 & 0 & 0 \\
0 & 0 & 0 & 0 & 0 & 0 \\
0 & 1 & 0 & 0 & 1 & 1 \\
0 & 0 & 1 & 1 & 1 & 0 \\
0 & 0 & 0 & 0 & 0 & 0
\end{array}
\right]
\label{appendixeqn7}
\end{equation}
At this point we are done processing qubit one and qubit two.  We now proceed to manipulate columns
two through six on the $Z$ and $X$ side.  We apply a Hadamard gate on qubit two, four and five.
\begin{equation}
\left[  \left.
\begin{array}
[c]{cccccc}
1 & 0 & 0 & 0 & 0 & 0 \\
0 & 0 & 0 & 0 & 0 & 0 \\
0 & 0 & 0 & 0 & 0 & 0 \\
0 & 1 & 0 & 0 & 1 & 0 \\
0 & 0 & 0 & 1 & 1 & 0 \\
0 & 0 & 1 & 0 & 0 & 1
\end{array}
\right\vert
\begin{array}
[c]{cccccc}
0 & 0 & 0 & 0 & 0 & 0 \\
1 & 0 & 0 & 0 & 0 & 0 \\
0 & 1 & 0 & 1 & 1 & 0 \\
0 & 0 & 0 & 0 & 0 & 1 \\
0 & 0 & 1 & 0 & 0 & 0 \\
0 & 0 & 0 & 0 & 1 & 0
\end{array}
\right]
\label{appendixeqn8}
\end{equation}
Perform a CNOT gate from qubit two to qubit four and from qubit two to qubit five.
\begin{equation}
\left[  \left.
\begin{array}
[c]{cccccc}
1 & 0 & 0 & 0 & 0 & 0 \\
0 & 0 & 0 & 0 & 0 & 0 \\
0 & 0 & 0 & 0 & 0 & 0 \\
0 & 0 & 0 & 0 & 1 & 0 \\
0 & 0 & 0 & 1 & 1 & 0 \\
0 & 0 & 1 & 0 & 0 & 1
\end{array}
\right\vert
\begin{array}
[c]{cccccc}
0 & 0 & 0 & 0 & 0 & 0 \\
1 & 0 & 0 & 0 & 0 & 0 \\
0 & 1 & 0 & 0 & 0 & 0 \\
0 & 0 & 0 & 0 & 0 & 1 \\
0 & 0 & 1 & 0 & 0 & 0 \\
0 & 0 & 0 & 0 & 1 & 0
\end{array}
\right]
\label{appendixeqn9}
\end{equation}
Perform a Hadamard on qubit two.
\begin{equation}
\left[  \left.
\begin{array}
[c]{cccccc}
1 & 0 & 0 & 0 & 0 & 0 \\
0 & 0 & 0 & 0 & 0 & 0 \\
0 & 1 & 0 & 0 & 0 & 0 \\
0 & 0 & 0 & 0 & 1 & 0 \\
0 & 0 & 0 & 1 & 1 & 0 \\
0 & 0 & 1 & 0 & 0 & 1
\end{array}
\right\vert
\begin{array}
[c]{cccccc}
0 & 0 & 0 & 0 & 0 & 0 \\
1 & 0 & 0 & 0 & 0 & 0 \\
0 & 0 & 0 & 0 & 0 & 0 \\
0 & 0 & 0 & 0 & 0 & 1 \\
0 & 0 & 1 & 0 & 0 & 0 \\
0 & 0 & 0 & 0 & 1 & 0
\end{array}
\right]
\label{appendixeqn10}
\end{equation}
We have processed qubit three.  Now look at the submatrix from columns three to
six on the $Z$ and $X$ side.  Perform a SWAP gate between qubit three and qubit five.
This operation swaps column three with five in the $Z$ submatrix and column three and
five in the $X$ submatrix.
\begin{equation}
\left[  \left.
\begin{array}
[c]{cccccc}
1 & 0 & 0 & 0 & 0 & 0 \\
0 & 0 & 0 & 0 & 0 & 0 \\
0 & 1 & 0 & 0 & 0 & 0 \\
0 & 0 & 1 & 0 & 0 & 0 \\
0 & 0 & 0 & 1 & 1 & 0 \\
0 & 0 & 1 & 0 & 0 & 1
\end{array}
\right\vert
\begin{array}
[c]{cccccc}
0 & 0 & 0 & 0 & 0 & 0 \\
1 & 0 & 0 & 0 & 0 & 0 \\
0 & 0 & 0 & 0 & 0 & 0 \\
0 & 0 & 0 & 0 & 0 & 1 \\
0 & 0 & 0 & 0 & 1 & 0 \\
0 & 0 & 1 & 0 & 0 & 0
\end{array}
\right]
\label{appendixeqn11}
\end{equation}
Perform a Hadamard gate on qubit three, followed by a CNOT gate from qubit three to qubit six,
and another Hadamard on qubit three.
\begin{equation}
\left[  \left.
\begin{array}
[c]{cccccc}
1 & 0 & 0 & 0 & 0 & 0 \\
0 & 0 & 0 & 0 & 0 & 0 \\
0 & 1 & 0 & 0 & 0 & 0 \\
0 & 0 & 1 & 0 & 0 & 0 \\
0 & 0 & 1 & 1 & 0 & 0 \\
0 & 0 & 0 & 0 & 1 & 1
\end{array}
\right\vert
\begin{array}
[c]{cccccc}
0 & 0 & 0 & 0 & 0 & 0 \\
1 & 0 & 0 & 0 & 0 & 0 \\
0 & 0 & 0 & 0 & 0 & 0 \\
0 & 0 & 0 & 0 & 0 & 0 \\
0 & 0 & 0 & 0 & 1 & 1 \\
0 & 0 & 0 & 0 & 0 & 0
\end{array}
\right]
\label{appendixeqn12}
\end{equation}
Add row four to five.
\begin{equation}
\left[  \left.
\begin{array}
[c]{cccccc}
1 & 0 & 0 & 0 & 0 & 0 \\
0 & 0 & 0 & 0 & 0 & 0 \\
0 & 1 & 0 & 0 & 0 & 0 \\
0 & 0 & 1 & 0 & 0 & 0 \\
0 & 0 & 0 & 1 & 0 & 0 \\
0 & 0 & 0 & 0 & 1 & 1
\end{array}
\right\vert
\begin{array}
[c]{cccccc}
0 & 0 & 0 & 0 & 0 & 0 \\
1 & 0 & 0 & 0 & 0 & 0 \\
0 & 0 & 0 & 0 & 0 & 0 \\
0 & 0 & 0 & 0 & 0 & 0 \\
0 & 0 & 0 & 0 & 1 & 1 \\
0 & 0 & 0 & 0 & 0 & 0
\end{array}
\right]
\label{appendixeqn13}
\end{equation}
We have completed processing qubit four. Now focus on columns four to six. Apply a Hadamard gate on qubit four,
followed by CNOT gate from qubit four to qubit five, and again from qubit four to qubit six.
\begin{equation}
\left[  \left.
\begin{array}
[c]{cccccc}
1 & 0 & 0 & 0 & 0 & 0 \\
0 & 0 & 0 & 0 & 0 & 0 \\
0 & 1 & 0 & 0 & 0 & 0 \\
0 & 0 & 1 & 0 & 0 & 0 \\
0 & 0 & 0 & 0 & 0 & 0 \\
0 & 0 & 0 & 0 & 1 & 1
\end{array}
\right\vert
\begin{array}
[c]{cccccc}
0 & 0 & 0 & 0 & 0 & 0 \\
1 & 0 & 0 & 0 & 0 & 0 \\
0 & 0 & 0 & 0 & 0 & 0 \\
0 & 0 & 0 & 0 & 0 & 0 \\
0 & 0 & 0 & 1 & 0 & 0 \\
0 & 0 & 0 & 0 & 0 & 0
\end{array}
\right]
\label{appendixeqn14}
\end{equation}
Perform a Hadamard gate on qubit four.
\begin{equation}
\left[  \left.
\begin{array}
[c]{cccccc}
1 & 0 & 0 & 0 & 0 & 0 \\
0 & 0 & 0 & 0 & 0 & 0 \\
0 & 1 & 0 & 0 & 0 & 0 \\
0 & 0 & 1 & 0 & 0 & 0 \\
0 & 0 & 0 & 1 & 0 & 0 \\
0 & 0 & 0 & 0 & 1 & 1
\end{array}
\right\vert
\begin{array}
[c]{cccccc}
0 & 0 & 0 & 0 & 0 & 0 \\
1 & 0 & 0 & 0 & 0 & 0 \\
0 & 0 & 0 & 0 & 0 & 0 \\
0 & 0 & 0 & 0 & 0 & 0 \\
0 & 0 & 0 & 0 & 0 & 0 \\
0 & 0 & 0 & 0 & 0 & 0
\end{array}
\right]
\label{appendixeqn15}
\end{equation}
Now look at columns five and six.  Apply a Hadamard gate on qubit five and qubit six, followed by a
CNOT gate from qubit five to qubit six.
\begin{equation}
\left[  \left.
\begin{array}
[c]{cccccc}
1 & 0 & 0 & 0 & 0 & 0 \\
0 & 0 & 0 & 0 & 0 & 0 \\
0 & 1 & 0 & 0 & 0 & 0 \\
0 & 0 & 1 & 0 & 0 & 0 \\
0 & 0 & 0 & 1 & 0 & 0 \\
0 & 0 & 0 & 0 & 0 & 0
\end{array}
\right\vert
\begin{array}
[c]{cccccc}
0 & 0 & 0 & 0 & 0 & 0 \\
1 & 0 & 0 & 0 & 0 & 0 \\
0 & 0 & 0 & 0 & 0 & 0 \\
0 & 0 & 0 & 0 & 0 & 0 \\
0 & 0 & 0 & 0 & 0 & 0 \\
0 & 0 & 0 & 0 & 1 & 0
\end{array}
\right]
\label{appendixeqn16}
\end{equation}
Perform a Hadamard on qubit five.
\begin{equation}
\left[  \left.
\begin{array}
[c]{cccccc}
1 & 0 & 0 & 0 & 0 & 0 \\
0 & 0 & 0 & 0 & 0 & 0 \\
0 & 1 & 0 & 0 & 0 & 0 \\
0 & 0 & 1 & 0 & 0 & 0 \\
0 & 0 & 0 & 1 & 0 & 0 \\
0 & 0 & 0 & 0 & 1 & 0
\end{array}
\right\vert
\begin{array}
[c]{cccccc}
0 & 0 & 0 & 0 & 0 & 0 \\
1 & 0 & 0 & 0 & 0 & 0 \\
0 & 0 & 0 & 0 & 0 & 0 \\
0 & 0 & 0 & 0 & 0 & 0 \\
0 & 0 & 0 & 0 & 0 & 0 \\
0 & 0 & 0 & 0 & 0 & 0
\end{array}
\right]
\label{appendixeqn17}
\end{equation}
We have finally obtained a binary matrix that corresponds to the canonical stabilizer
generators in Table~\ref{tbl:ea-613}(a).  Figure~\ref{fig-eacodecircuit} gives the encoding
circuit for the all the quantum operations that we performed above.  Multiplying the above
operations in reverse takes us from the unencoded canonical stabilizers to the encoded ones.

%%%%%%%%%%%%%%%%%% BIG TABLE %%%%%%%%%%%%%%%%%%%%%

\begin{table*}[tbp] \centering
%EndExpansion%
\begin{tabular}
[c]{|c|c||c|c||c|c||c|c||c|c||c|c|}\hline\hline
Error & AG & Error & AG & Error & AG & Error & AG & Error & AG & Error & AG\\\hline\hline
$X_1X_2$ & $h_1$ & $X_1X_3$ & $h_2$ & $X_1X_4$ & $h_1$ &
$X_1X_5$ & $h_1$ & $X_1X_6$ & $h_5$ & $X_1Y_2$ & $h_1$
\\\hline
$X_1Y_3$ & $h_2$ & $X_1Y_4$ & $h_2$ & $X_1Y_5$ & $h_3$ &
$X_1Y_6$ & $h_1$ & $X_1Z_2$ & $h_1$ & $X_1Z_3$ & $h_1$
\\\hline
$X_1Z_4$ & $h_2$ & $X_1Z_5$ & $h_3$ & $X_1Z_6$ & $h_2$ &
$X_2X_3$ & $h_1$ & $X_2X_4$ & $h_3$ & $X_2X_5$ & $h_3$
\\\hline
$X_2X_6$ & $h_1$ & $X_2Y_3$ & $h_1$ & $X_2Y_4$ & $h_1$ &
$X_2Y_5$ & $h_1$ & $X_2Y_6$ & $h_2$ & $X_2Z_3$ & $h_5$
\\\hline
$X_2Z_4$ & $h_1$ & $X_2Z_5$ & $h_1$ & $X_2Z_6$ & $h_1$ &
$X_3X_4$ & $h_1$ & $X_3X_5$ & $h_1$ & $X_3X_6$ & $h_2$
\\\hline
$X_3Y_4$ & $h_3$ & $X_3Y_5$ & $h_2$ & $X_3Y_6$ & $h_1$ &
$X_3Z_4$ & $h_3$ & $X_3Z_5$ & $h_2$ & $X_3Z_6$ & $h_3$
\\\hline
$X_4X_5$ & $h_5$ & $X_4X_6$ & $h_1$ & $X_4Y_5$ & $h_1$ &
$X_4Y_6$ & $h_2$ & $X_4Z_5$ & $h_1$ & $X_4Z_6$ & $h_1$
\\\hline
$X_5X_6$ & $h_1$ & $X_5Y_6$ & $h_2$ & $X_5Z_6$ & $h_1$ &
$Y_1X_2$ & $h_2$ & $Y_1X_3$ & $h_1$ & $Y_1X_4$ & $h_2$
\\\hline
$Y_1X_5$ & $h_2$ & $Y_1X_6$ & $h_1$ & $Y_1Y_2$ & $h_3$ &
$Y_1Y_3$ & $h_1$ & $Y_1Y_4$ & $h_1$ & $Y_1Y_5$ & $h_1$
\\\hline
$Y_1Y_6$ & $h_3$ & $Y_1Z_2$ & $h_5$ & $Y_1Z_3$ & $h_2$ &
$Y_1Z_4$ & $h_1$ & $Y_1Z_5$ & $h_1$ & $Y_1Z_6$ & $h_1$
\\\hline
$Y_2X_3$ & $h_1$ & $y_2X_4$ & $h_2$ & $Y_2X_5$ & $h_2$ &
$Y_2X_6$ & $h_1$ & $Y_2Y_3$ & $h_1$ & $Y_2Y_4$ & $h_1$
\\\hline
$Y_2Y_5$ & $h_1$ & $Y_2Y_6$ & $h_5$ & $Y_2Z_3$ & $h_5$ &
$Y_2Z_4$ & $h_1$ & $Y_2Z_5$ & $h_1$ & $Y_2Z_6$ & $h_1$
\\\hline
$Y_3X_4$ & $h_1$ & $Y_3X_5$ & $h_1$ & $Y_3X_6$ & $h_2$ &
$Y_3Y_4$ & $h_5$ & $Y_3Y_5$ & $h_2$ & $Y_3Y_6$ & $h_1$
\\\hline
$Y_3Z_4$ & $h_5$ & $Y_3Z_5$ & $h_2$ & $Y_3Z_6$ & $h_5$ &
$Y_4X_5$ & $h_1$ & $Y_4X_6$ & $h_2$ & $Y_4Y_5$ & $h_2$
\\\hline
$Y_4Y_6$ & $h_1$ & $Y_4Z_5$ & $h_2$ & $Y_4Z_6$ & $h_4$ &
$Y_5X_6$ & $h_3$ & $Y_5Y_6$ & $h_1$ & $Y_5Z_6$ & $h_2$
\\\hline
$Z_1X_2$ & $h_1$ & $Z_1X_3$ & $h_5$ & $Z_1X_4$ & $h_1$ &
$Z_1X_5$ & $h_1$ & $Z_1X_6$ & $h_2$ & $Z_1Y_2$ & $h_1$
\\\hline
$Z_1Y_3$ & $h_3$ & $Z_1Y_4$ & $h_3$ & $Z_1Y_5$ & $h_2$ &
$Z_1Y_6$ & $h_1$ & $Z_1Z_2$ & $h_1$ & $Z_1Z_3$ & $h_1$
\\\hline
$Z_1Z_4$ & $h_3$ & $Z_1Z_5$ & $h_2$ & $Z_1Z_6$ & $h_3$ &
$Z_2X_3$ & $h_1$ & $Z_2X_4$ & $h_2$ & $Z_2X_5$ & $h_2$
\\\hline
$Z_2X_6$ & $h_1$ & $Z_2Y_3$ & $h_1$ & $Z_2Y_4$ & $h_1$ &
$Z_2Y_5$ & $h_1$ & $Z_2Y_6$ & $h_3$ & $Z_2Z_3$ & $h_2$
\\\hline
$Z_2Z_4$ & $h_1$ & $Z_2Z_5$ & $h_1$ & $Z_2Z_6$ & $h_1$ &
$Z_3X_4$ & $h_3$ & $Z_3X_5$ & $h_3$ & $Z_3X_6$ & $h_1$
\\\hline
$Z_3Y_4$ & $h_1$ & $Z_3Y_5$ & $h_1$ & $Z_3Y_6$ & $h_2$ &
$Z_3Z_4$ & $h_1$ & $Z_3Z_5$ & $h_1$ & $Z_3Z_6$ & $h_1$
\\\hline
$Z_4X_5$ & $h_1$ & $Z_4X_6$ & $h_2$ & $Z_4Y_5$ & $h_2$ &
$Z_4Y_6$ & $h_1$ & $Z_4Z_5$ & $h_2$ & $Z_4Z_6$ & $h_4$
\\\hline
$Z_5X_6$ & $h_3$ & $Z_5Y_6$ & $h_1$ & $Z_5Z_6$ & $h_2$ &
$X_1$ & $h_1$ & $X_2$ & $h_3$ & $X_3$ & $h_1$
\\\hline
$X_4$ & $h_4$ & $X_5$ & $h_5$ & $X_6$ & $h_1$ &
$Y_1$ & $h_2$ & $Y_2$ & $h_2$ & $Y_3$ & $h_3$
\\\hline
$Y_4$ & $h_3$ & $Y_5$ & $h_3$ & $Y_6$ & $h_3$ &
$Z_1$ & $h_1$ & $Z_2$ & $h_2$ & $Z_3$ & $h_3$
\\\hline
$Z_4$ & $h_3$ & $Z_5$ & $h_3$ & $Z_6$ & $h_1$ \\ \hline\hline
\end{tabular}
\caption{Distinct pairs of single-qubit Pauli errors for the $[[6,1,3]]$
quantum code. Each double-lined column lists a pair of single-qubit
errors and a corresponding anticommuting generator (AG) for the code.  $X_4$ and $Z_4Z_6$ lie in the gauge subgroup $H$.}\label{table:sixonethree1}%
%TCIMACRO{\TeXButton{E}{\end{table*}}}%
%BeginExpansion
\end{table*}%

\section{Concluding Remarks}
We have discussed two different examples of a six-qubit code
and have included a subsystem construction for the degenerate six-qubit code.  Our proof
explains why a six-qubit CSS code does not exist and clarifies
earlier results in Ref.~\cite{ieee1998calderbank} based on a
search algorithm. An immediate corollary of our result is that the
seven-qubit Steane code is the smallest CSS code capable of
correcting an arbitrary single-qubit error. An interesting open
problem is to generalize this tight lower bound to the setting of
CSS codes with a higher distance. We expect that our proof
technique may be useful for this purpose.

Our first example is a degenerate six-qubit code that corrects an arbitrary
single-qubit error. The presentation
of the encoding circuit and the operations required for a logical
$X$, $Z$, and CNOT\ should aid in the implementation and
operation of this code.  We have converted this code into
a subsystem code that is non-trivial and saturates the subsystem Singleton bound. Our six-qubit subsystem
code requires only four stabilizer measurements during the
recovery process.  This reduction in measurements may have implications for improving
fault-tolerance thresholds.

Our second example is an entanglement-assisted $[[6,1,3;1]]$ CSS
code that is globally equivalent to the Steane seven-qubit code.
We have presented the construction of this code from a set of six
non-commuting generators on six qubits. We have further shown that
every $[[n,1,3]]$ code can be used as a $[[n-1,1,3;1]]$
entanglement-assisted code.

Based on the proof technique that we used for the earlier
six-qubit code, we have shown that the Steane code is an example
of the smallest entanglement-assisted code that possesses the
CSS structure and uses exactly one ebit.  Here too, an interesting
open problem is the generalization of this tight lower
bound to higher distance entanglement-assisted codes or to codes
that use more than one ebit.

From the next chapter we begin the theory of quantum steganography.  It was important for us to
introduce the reader to stabilizer codes as we use them extensively in quantum steganography
to hide quantum information.

%%%%%%%%%%%%%%%%%%%%%%%%%%%%%% NEW CHAPTER %%%%%%%%%%%%%%%%%%%%%%%%%%%%%%%%%%%%%%%%%
%%%%%%%%%%%%%%%%%%%%%%%%%%%%%%%%%%%%%%%%%%%%%%%%%%%%%%%%%%%%%%%%%%%%%%%%%%%%%%%%%%%%
\chapter{Classical Steganography}
\label{chap:classicalsteganography}
\begin{saying}
What is that confers the noblest delight? What is that which swells a man's breast with pride above that which any other experience can bring to him? Discovery! To know that you are walking where none others have walked; that you are beholding what human eye has not seen before; that you are breathing a virgin atmosphere. To give birth to an idea -- to discover a great thought -- an intellectual nugget, right under the dust of a field that many a brain -- plow had gone over before. To find a new planet, to invent a new hinge, to find the way to make the lightnings carry your messages. To be the first -- that is the idea. To do something, say something, see something, before any body else -- these are the things that confer a pleasure compared with which other pleasures are tame and commonplace, other ecstasies cheap and trivial.\\
---\textit{Mark Twain}
\end{saying}

\section{Introduction}
\label{sec:qsteg_intro}
\lettrine{W}e begin the second half of the thesis with a classical model of steganography which we later extend to a quantum one in Chapter~\ref{chap:quantumsteg}.  Steganography is the science of hiding a message within a larger innocent-looking plain-text message, and communicating the resulting data over a communications channel or through a courier so that the steganographic message is readable only by the intended receiver.  The word comes from the Greek words \textit{steganos} which means ``covered,'' and \textit{graphia} which means ``writing.''  The art of information hiding dates back to 440 B.C. to the Greeks.  In \textit{The Histories}, Herodotus records two incidents of the use of steganography.  In the first incident Demaratus a Greek king uses a wax tablet to warn the Spartans of an impending attack by the Persian king Xerxes~\cite{Herodotus}.  Wax tablets were used as reusable surfaces to write on, which were constructed on wooden bases.  Demaratus scratched the steganographic message on the wood, and then covered it with beeswax.  Once the Spartans received the wax tablet from the courier, all that they needed to do was to melt the beeswax and read the hidden warning.  In another story Herodotus records how Histiaeus tattoos a secret message on the shaved scalp of his slave, and then waits for the hair to grow back before dispatching him to the Ionian city of Miletus.  After arriving in Miletus, the slave shaves his head and reveals the secret message to the city's regent, Aristagorus, who upon seeing the message is encouraged to rise up against the Persian king.  The ancient Chinese used a wooden mask with holes cut out at random places to use as a steganographic device.  They would place the wooden block on a blank sheet of paper and after writing the secret message in the holes they would fill in the blanks in the paper with regular text.  The mask acted as a secret key to unlock the hidden message.  In the seventeenth and eighteenth centuries people used logarithmic tables to hide secret messages by introducing errors in the least significant digits.

The term steganography was first used in 1499 by Johannes Trithemius in his \textit{Trithemius} which was one of the first treatises on the use of cryptographic and steganographic techniques.  The study of modern steganography was initiated by Simmons and can be stated as follows~\cite{simmons1983}.  Alice and Bob are imprisoned in two different cells that are far apart.  They would like to devise an escape plan but the only way they can communicate with each other is through a courier who is loyal to and under the command of the warden (Eve - adversary) of the penitentiary.  The courier leaks all information to the warden.  If the warden suspects that either Alice or Bob are conspiring to escape from the penitentiary, she will cut off all communication between them, and move both of them to a maximum security cell.  It is assumed that prior to their incarceration Alice and Bob had access to a shared secret key which they will later exploit to send secret messages hidden in a cover text.  Can Alice and Bob devise an escape plan without arousing the suspicion of the warden?

We should point out that steganography is inherently different from cryptography.  In the latter, using an encryption algorithm the sender encrypts the secret message utilizing a key (private or public).  The resulting message from such an encrypting procedure is called a ``cipher-text.''  To an eavesdropper such as Eve, who does not have access to the secret key, the cipher-text looks like gibberish.  If Eve observes that the transmitted cipher-text message is gibberish, she might realize that the message contains private information.  In stark contrast to the latter is steganography where we do not necessarily need to encrypt the intended secret message.  We merely hide it within a larger plain-text message, often referred to as the ``cover-text'' or ``cover-work.''  The resulting message, called the ``stego-text'' must appear as an oblivious or benign message to Eve.  In the usual steganographic protocol we assume that Alice and Bob have access to a secure, shared secret key.  Alice uses this key and an embedding function to hide a secret message, while Bob uses the same key to extract the secret message.  Alice can take this a step further by first encrypting the message and then hiding the resulting cipher-text inside the cover-text.  She may use the same key to encrypt the secret message producing a cipher-text and then hide it within a cover-text to produce a stego-text message.  Bob, on receiving the stego-text message would first have to extract the secret cipher-text and then run a decryption algorithm to obtain the final secret message.

\begin{figure}
[ptb]
\begin{center}
\includegraphics[
natheight=3.386600in,
natwidth=8.973300in,
height=1.9614in,
width=5.1742in
]%
{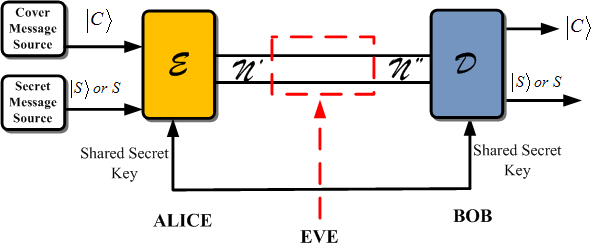}
\caption{There are three different inputs to the steganographic encoder $\CE$ : a cover-message $\ket{C}$; the secret message that we would like to hide, which can be quantum $\ket{S}$ or classical $S$; a shared secret key which may be quantum (ebit) $\ket{\CK}$ or classical $\CK$.  Eve can monitor some part of the noisy quantum channel $\CN$ shown in the red box.  Bob can decode the steganographic message using the decoder $\CD$ and the shared secret key $\ket{\CK}$ or $\CK$ and recover $\ket{C}$, and $\ket{S}$ or $S$ with very high probability.}
\label{fig:generalstegoprotocol}
\end{center}
\end{figure}
In this work we assume that Eve is a passive observer and all she can do is measure codewords.  The whole point of a steganographic protocol is not to apprise Eve that Alice and Bob are transmitting steganographic messages to each other.  The way to achieve this is for the message to look ``innocent'' to Eve.  Alice must be smart in applying the errors to the codewords and fool Eve into believing that the errors on the codewords are the result of an actual noise model on the quantum channel.  One of our major contributions to this theory is the notion of the security of a steganographic protocol which we elaborate in Chapter~\ref{chap:quantumsteg} by exploiting the properties of the diamond norm.  We emphasize that a steganographic protocol is considered broken if Eve detects the presence of any sort of covert activity on the channel.

Figure~\ref{fig:generalstegoprotocol} gives a schematic description of a general steganographic protocol between Alice and Bob, who share a secret key $\CK$ which may be classical or quantum (an ebit).  Table~\ref{tbl:protocol} shows the various scenarios of how one can hide a secret message in quantum codewords by using appropriate quantum error-correcting codes.
%We explore all the four ways of hiding information and explore the pros and cons of each when there is no noise in the quantum channel and when there is in the subsequent sections.
\begin{table}[tbp] \centering
\begin{tabular}{|c|c|c|}
\hline
\label{tbl:protocol}
Hidden Message & Key & Code \\ \hline \hline
Classical & Bit-string & CQECC \\ \hline
Classical & E-bit & EAQECC \\ \hline
Quantum & Bit-string & QECC \\ \hline
Quantum & E-bit & EAQECC \\ \hline
\hline
\end{tabular}
\caption{The first column represents message classical/quantum that we intend to hide.  The second column represents the shared secret key between Alice and Bob.  The third column represents the quantum error-correcting code.  Here CQECC stands for Classically-Enhanced Quantum Error-Correcting Code; QECC stands for Quantum Error-Correcting Code and EAQECC stands for Entanglement-Assisted Quantum Error-Correcting Code}
\end{table}

We organize the current chapter by first presenting a classical model of steganography by detailing two protocols that Alice uses to transmit steganographic information to Bob.  In Section~\ref{sec:exampleI} we give two examples of how Alice uses the syndromes of an error-correcting code to optimize the transmission of steganographic bits over a binary symmetric channel with bit-flip error-rate $p$.  In Section~\ref{sec:inner-outercode} we take a coding-theoretic approach to the problem.  Instead of encoding her steganographic data into the syndromes of an error-correcting code, Alice encodes her information into the codewords themselves using two different error-correcting codes which we refer to as the ``outer'' and ``inner'' codes.  The former is used as a cover-data, while we use the latter to encode steganographic bits.  We give numerical results on the optimal number of bits that Alice can send to Bob with the syndrome encoding using the three-bit and five-bit repetition codes over a noiseless binary-symmetric channel.  We also give numerical results when the BSC has noise for the three-bit repetition code.  We were unable to extend this to the five-bit code because the analysis became complicated and there were too many variables to optimize over.  We give numerical evidence of the optimal number of steganographic bits that Alice sends to Bob using the inner-outer coding technique.  We end the chapter with concluding remarks before proceeding to detail quantum steganography.

%%%%%%%%%%%%%%%%%%%%%%%%%% NEW SECTION %%%%%%%%%%%%%%%%%%%%%%%%%%%%%%%%%%%%%%%%%%%%%%%%%
\section{Syndrome Encoding}
\label{sec:exampleI}
\begin{saying}
How do I work? I grope.\\
---\textit{Albert Einstein}
\end{saying}
We begin with a very simple example of a classical steganographic protocol where Alice would like to hide a single bit of information, called a steganographic bit, in the syndromes of a classical error-correcting code.  Alice and Bob a priori decide on a particular code.  After encoding her information by disguising them as errors, she uses a noiseless binary symmetric-channel (BSC) to send the information to Bob.  For this example let us assume that they are using the [3,1,3] repetition code that encodes three physical bits into one logical bit, and can correct up to a single bit-flip error.  Here $n = 3$, $k = 1$, and the distance $d = 3$.  This code can correct a single bit-flip error.  The codewords for this code are straightforward.
\begin{eqnarray}
\label{eqn1}
\overline{0} = 000~, \\
\overline{1} = 111~.
\end{eqnarray}
The number of syndromes for this code are $2^{n-k} = 2^2 = 4$.  We label the syndromes $s_0,\ldots,s_3$, where the first syndrome $s_0$ always corresponds to no error on the codeword. The rest of the syndromes $s_1,s_2$, and $s_3$ correspond to single bit-flips on each of the three physical bits of the codeword.  We need a natural error model for the bit-flip channel.  We assume that single bit-flips occur with probability $p$ independently of each other.  The probability of no bit-flip is $(1-p)$, while as the probability of a bit-flip error is $p$.  There is no intrinsic noise in the channel.  The point of this protocol is for Alice to emulate a realistic noise model for the bit-flip channel and fool Eve into thinking the noise is actually coming from the environment or is due to the physics of the channel.  Later we will analyze the case where there is also intrinsic noise in the channel.  We give the channel probability distribution in Table~\ref{table1}.  Notice that for each of the transformed codewords we give a sum of the probabilities of correctable and uncorrectable errors.  For example if we observe the first row of Table~\ref{table1} we see that we can obtain the codeword in two different ways - when there is no error on 000, which gives the probability $(1-p)^3$, and when the codeword 111 is flipped to 000 by three independent errors on each of the three physical bits.  The probability of the latter occurring is $p^3$.  Similar reasoning gives us the total probabilities for the rest of the rows in Table~\ref{table1}.
\begin{table}
\begin{tabular}
[c]{|c|c|c|c|c|}
\hline
Error & Corrupted Codeword & Syndrome & Probability of Error & Probability\\ \hline \hline
No Error & 000 & $s_0$ & $(1-p)^3 + p^3$ & $p_0$\\ \hline
Error on bit 1 & 001 & $s_1$ & $p(1-p)^2 + p^2(1-p)$ & $p_1$\\\hline
Error on bit 2 & 010 & $s_2$ & $p(1-p)^2 + p^2(1-p)$ & $p_2$\\\hline
Error on bit 3 & 100 & $s_3$ & $p(1-p)^2 + p^2(1-p)$ & $p_3$\\
\hline
\end{tabular}
\caption{Channel model for the [3,1,3] repetition code}
\label{table1}
\end{table}
Alice and Bob have access to a shared binary string (the secret key) with a uniform probability distribution.  They also have a function that takes as input shared secret bits from this key, and outputs an ordered pair $(j,k)$ where $j,k \in \{0,1,2,3\}$ with a non-uniform probability distribution, $q_{jk}$.  The ordered pair $(j,k)$ provides both parties with the following map:
\begin{eqnarray}
\label{eqn2}
0 \rightarrow s_j~, \\
1 \rightarrow s_k~.
\end{eqnarray}
If Alice wants to transmit a steganographic bit (say the bit 0), she applies the single-bit error corresponding to syndrome $s_j$ with probability $q_{jk}$, or $s_k$ with probability $q_{kj}$.  When Bob gets the codeword, he knows how to decode the syndromes because he has access to the same mapping~(\ref{eqn2}) that Alice does.

As an example, imagine that Alice and Bob have access to a shared secret key and a function which they use to generate, say, the ordered pair $(1,0)$ with probability $q_{10}$.  If Alice wants to hide the bit 0, she applies a single bit-flip error to the least-significant-bit (LSB) of her codeword, (or two bit-flips to the second and third bits of her codeword) and transmits it to Bob.  This corresponds to syndrome $s_1$.  If she wants to send the bit 1, then she does nothing to her codeword, which corresponds to the syndrome $s_0$.  (Of course, Alice could have picked the reverse map as well with probability $q_{01}$, where she would have hidden bit 0 in syndrome $s_0$ and bit 1 in syndrome $s_1$.)
The point of steganography is not to apprise Eve that Alice is sending steganographic bits to Bob over codewords.  Imagine the scenario where Alice sends several single-bit steganographic messages to Bob.  It would be a disaster if she kept using the same syndromes for her steganographic bits.  After a few rounds Eve would immediately guess that the channel was being used to send steganographic messages, for each time Eve would just see either no error or error on the first bit.  This would be a very unnatural error model for the bit-flip channel.  So Alice has to apply errors in such as a way as to emulate a natural error model.  In applying errors she must match the probability distribution detailed in Table~\ref{table1}.  This amounts to finding a solution to the following system of linear equations:
\begin{equation}
\label{eqn4}
p_j = \frac{1}{2}\sum_{k = 0}^3 \left(q_{jk} + q_{kj}\right),
\end{equation}
where $j,k \in \{0,1,2,3\}$.  If the probability $p$ of an error is very low, then Alice will not be able to hide information without apprising Eve because the Shannon entropy of the channel will be very low.  For the probability distribution given in Table~\ref{table1}, the maximum entropy $H(p) = 2$ is achieved when $p = 1/2$.  Therefore, this is not a good code to send steganographic information to Bob, because most of the time Eve will be expecting no errors on the codewords.  So Alice can only occasionally transmit a steganographic bit in a codeword in order not to alert Eve to any covert activity on the channel.

Alice and Bob get around this problem by including the possibility that they will send no bit in a given code-block.  This is represented by the diagonal elements $q_{jj}$.  In this case Alice just applies the syndrome $s_j$, and sends the codeword (with no steganographic bit) to Bob.  Since Bob shares the secret key, he knows when to expect a steganographic bit.

How should Alice choose $q_{jk}$ and $q_{kj}$ in order to emulate the probability distribution on the channel?  We proceed with the solution to Equation~\ref{eqn4} as follows.  We first set all the diagonal terms except $q_{00}$ to zero: i.e., $q_{11} = q_{22} = q_{33} = 0$.  We do this to maximize the probability of sending a steganographic bit.  Now we expand Equation~(\ref{eqn4}) for $j = 1$, to get the following:
\begin{eqnarray}
\label{eqn5}
&& p_1 = \frac{1}{2}\bigg(q_{10}+q_{11}+q_{12}+q_{13}+q_{01}+q_{11}+q_{21}+q_{31}\bigg)~, \nonumber \\
&& p(1-p)^2 + p^2(1-p) = \frac{1}{2}\bigg(q_{10}+q_{12}+q_{13}+q_{01}+q_{21}+q_{31}\bigg)~.
\end{eqnarray}
A solution to the above equation can be:
\begin{equation}
\label{eqn6}
q_{10} = q_{12} = q_{13} = q_{01} = q_{21} = q_{31} = \frac{p(1-p)^2 + p^2(1-p)}{3}~.
\end{equation}
The above solution gives Alice a better chance of sending a steganographic bit without apprising Eve, because Alice is spreading the likelihood of transmitting a steganographic bit among different syndromes.  So there is a good chance of applying a different error each time that Alice sends a steganographic bit to Bob, and moreover, the probability of applying this error matches the expected distribution on the channel.  The solution for $p_2$ and $p_3$ can be worked out similarly:
\begin{eqnarray}
\label{eqn7}
q_{20} = q_{21} = q_{23} = q_{02} = q_{12} = q_{32} = \frac{p(1-p)^2 + p^2(1-p)}{3}~, \\
q_{30} = q_{31} = q_{32} = q_{03} = q_{13} = q_{23} = \frac{p(1-p)^2 + p^2(1-p)}{3}~.
\end{eqnarray}
Now we go back and find the value for $q_{00}$ by solving Equation~(\ref{eqn4}).  For $j = 0$ this corresponds to:
\begin{equation}
\label{eqn7a}
(1-p)^3 + p^3 = \frac{1}{2}\bigg(q_{00}+q_{01}+q_{02}+q_{03}+q_{00}+q_{10}+q_{20}+q_{30}\bigg).
\end{equation}
We already have the values for each of the terms in Equation~(\ref{eqn7a}) except $q_{00}$.  We, therefore, can solve for $q_{00}$:
\begin{equation}
\label{eqn8}
\fbox{$q_{00} = 4p^2 - 4p + 1$}
\end{equation}
We would like the probability of sending no steganographic bits to be as low as possible and so we would like $q_{00}$ to be as low as possible.  The average steganographic information $N_{avg}$ that Alice transmits to Bob is the sum of the non-diagonal q-terms:

\begin{eqnarray}
\label{eqn8a}
N_{avg} & = & \sum_{j\neq k} q_{jk}~, \\
 & = & q_{01}+q_{02}+q_{03}+q_{10}+q_{12}+q_{13}+q_{20}+q_{21}+q_{23}+q_{30}+q_{31}+q_{32}~, \nonumber\\
 & = & 12\bigg(\frac{p(1-p)^2 + p^2(1-p)}{3}\bigg)~, \nonumber \\
 & = & 4(p(1-p)^2 + p^2(1-p))~, \nonumber \\
 & = & 4p(1-p)~. \nonumber
\end{eqnarray}
We now proceed to give a more efficient solution to Equation~(\ref{eqn4}).  As in the previous solution we would like to make $q_{00}$ as small as possible.  So let $q_{01},q_{02},q_{03},q_{10},q_{20}$, and $q_{30}$ be non-zero and let all other q-terms be equal to zero.  In order to solve for $q_{00}$, we must solve:
\begin{equation}
\label{eqn9}
p_0 = \frac{1}{2}\bigg(q_{00}+q_{00}+\sum_{j=1}^3 \bigg(q_{0j}+q_{j0}\bigg)\bigg)~.
\end{equation}
Equation~(\ref{eqn9}) can be simplified as follows:
\begin{equation}
\label{eqn10}
p_0 = q_{00}+ \sum_{j=1}^3 \frac{1}{2}\bigg(q_{0j}+q_{j0}\bigg)~.
\end{equation}
We need the quantity within the sum in Equation~(\ref{eqn10}) in order to find $q_{00}$.  We set all the non-zero q-terms $q_{01},q_{02},q_{03},q_{10},q_{20}$, and $q_{30}$ equal to the probability of a single bit-flip error, $p^2(1-p) + (1-p)^2p$, as follows,
\begin{eqnarray}
\label{eqn11a}
p_j & = & \frac{1}{2}\bigg(q_{0j}+q_{j0}\bigg) = p^2(1-p)+(1-p)^2p~, \\
\label{eqn11b}
q_{0j} & = & q_{j0} =  p^2(1-p)+(1-p)^2p~,
\end{eqnarray}
for $j = 1,2,3$.  Substituting Equation~(\ref{eqn11b}) into Equation~(\ref{eqn10}) we obtain:
\begin{equation}
\label{eqn12}
\fbox{$q_{00} = 6p^2 - 6p + 1$}
\end{equation}
Since the Shannon entropy for this channel is very low, $p \ll 1$, most of the time the codewords will go uncorrupted through the channel.  We know that Eve is expecting to see this behavior on the channel.  Any large deviation from this expectation will apprise Eve.  So Alice takes advantage of this knowledge and uses the syndrome $s_0$ as often as she can in order to hide single bits.

We can calculate the average amount of information that Alice transmits to Bob:
\begin{eqnarray}
\label{eqn12a}
N_{avg} & = &\sum_{j \neq k} q_{jk}~, \nonumber \\
 & = & q_{01}+q_{02}+q_{03}+q_{10}+q_{12}+q_{13}+q_{20}+q_{21}+q_{23}+q_{30}+q_{31}+q_{32}~, \nonumber \\
 & = & 6(p^2(1-p)+(1-p)^2p)~, \nonumber \\
 & = & 6p(1-p)~.
\end{eqnarray}
Comparing with the first example, we can see that Alice is able to send more steganographic information in the latter example because she is utilizing the syndrome $s_0$ which corresponds to no error on the codeword.  This is evident from Figure~\ref{fig:threebitplot}.  When she hides one bit in this syndrome, she's able to do more because most of the time Eve expects this channel to not corrupt the codewords and so Eve will be expecting codewords $000$ or $111$.  Alice knows this and thus takes advantage by packing in more steganographic information.
%\begin{figure}
%[ptb]
%\begin{center}
%\includegraphics[
%%natheight=2.386600in,
%%natwidth=8.973300in,
%%height=1.9614in,
%%width=5.1742in
%natheight=8.386600in,
%natwidth=8.973300in,
%height=3.5in,
%width=4.0in
%]
%{threebitplot.png}%
%\caption{This plot shows the number of average bits sent by Alice to Bob versus the error-rate of the binary-symmetric channel.}%
%\label{fig:threebitplot}%
%\end{center}
%\end{figure}
\begin{figure}[htp]
  \begin{center}
    	 \includegraphics[width = 4.5in]{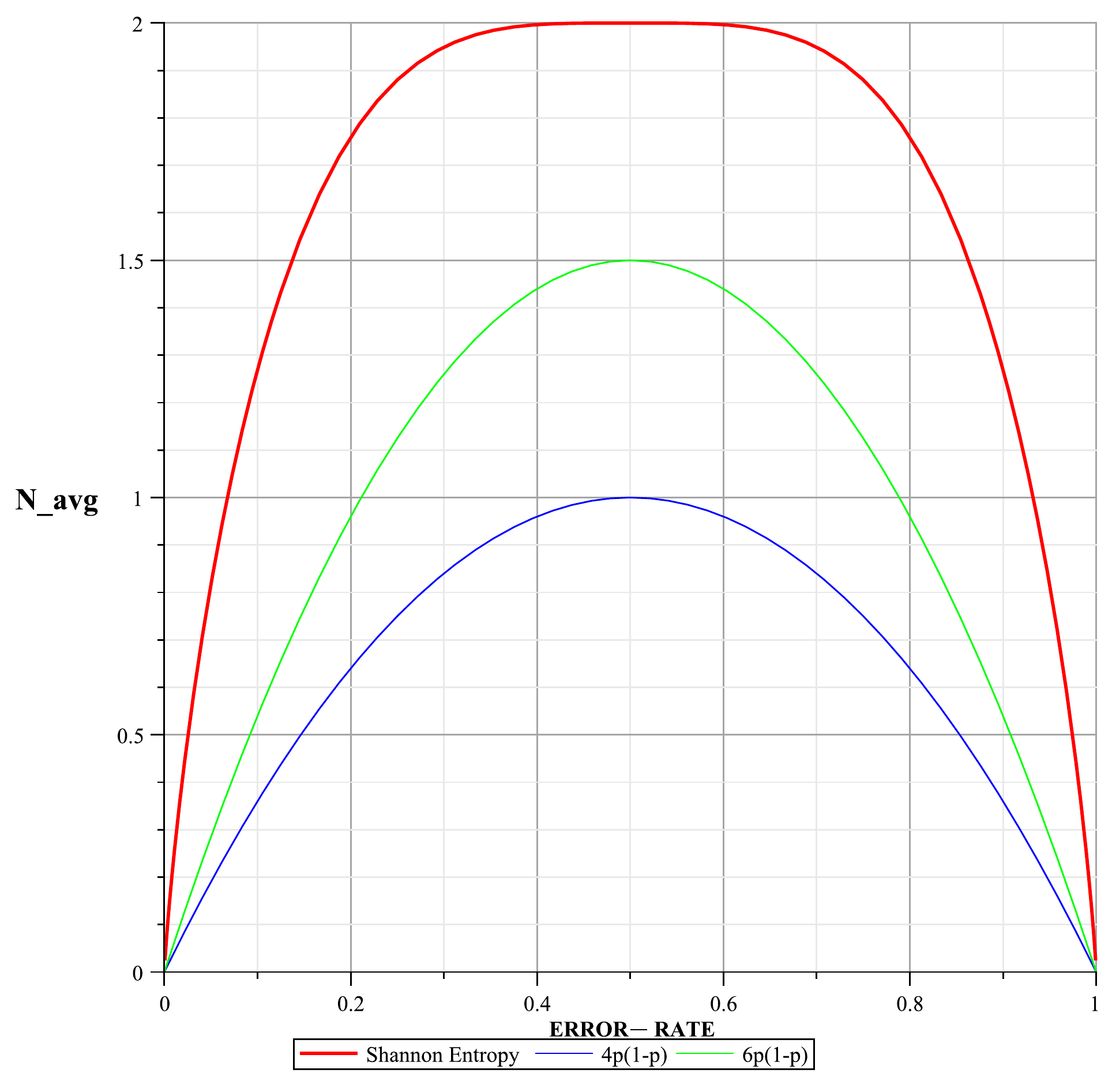}
  \end{center}
  \caption{Average Number of Steganographic Bits vs. Error-Rate $p$}
  \label{fig:threebitplot}
\end{figure}
\subsection{Noisy Bit-Flip Channel}
\label{exampleII}
We now investigate the case where Alice would like to transmit a steganographic bit to Bob in the presence of extra bit-flip noise on the channel.  In this example, as opposed to the previous noiseless case, when Alice sends her codewords through the channel, the channel may flip a bit and change the syndrome that corresponds to a steganographic bit.  As in the noiseless example in Section~\ref{sec:exampleI}, Alice and Bob have access to a shared binary string and a generating function which inputs some of these key bits and outputs an ordered pair $(j,k)$ with a non-uniform probability distribution $q_{jk}$.  Alice and Bob both have access to the following map:
\begin{eqnarray}
\label{eqn:exII-1}
0 \rightarrow s_j~, \\
1 \rightarrow s_k~.
\end{eqnarray}
If Alice wants to hide the bit 0, she applies the error corresponding to syndrome $s_j$ on her codeword with probability $q_{jk}$.  Unlike the example in Section~\ref{sec:exampleI}, now when the codeword passes through the channel there is a non-zero probability that the channel will apply its own error to the corrupted codeword that Alice is transmitting.  This gives rise to the following map:
\begin{eqnarray}
\label{eqn:exII-2}
s_j \rightarrow s_{l}~, \\
s_k \rightarrow s_{l'}~.
\end{eqnarray}
There is a non-zero conditional probability $p(l|j)$ and $p(l|k)$ with which the channel transforms Alice's intended syndrome $s_j$ (if she's hiding bit 0) or $s_k$ (if she's hiding bit 1) to $s_l$, respectively.  Alice would like to simulate this binary symmetric channel which now has additional noise from the environment.  In order to achieve this, we would have to solve the following set of equations:
\begin{equation}
\label{eqn:exII-3}
p_{l}' = \frac{1}{2}\sum_{j,k = 0}^3 q_{jk}\bigg(p(l|j) + p(l|k)\bigg)~,
\end{equation}
where $l = 0,1,\ldots,3$.  $p' > p$ is the error rate of the effective channel.  The effective probability distribution, corresponding to no flip, and bit-flips on bits one, two, and three of the noisy channel may be written as:
\begin{eqnarray}
\label{eqn:exII-3a}
p_{0}' & = & p'^3 + (1-p')^3~, \\
p_{1}' & = & p_{2}' = p_{3}' = p'(1-p')^2 + p'^2(1-p') = p'(1-p')~.
\end{eqnarray}

As in the example in Section~\ref{sec:exampleI}, we would like to determine the probability $q_{00}$ which corresponds to Alice not sending any bit to Bob.  Our goal is to minimize this probability, because we want Alice to transmit steganographic bits to Bob.  We begin by rewriting Equation~(\ref{eqn:exII-3}) for $l = 0$.  Note that we do not include diagonal q-terms (except $q_{00}$) in the following equation, as that corresponds to sending no steganographic bit.
\begin{equation}
\label{eqn:exII-4}
p_{0}' = \frac{1}{2}\sum_{j,k = 0}^3 q_{jk}\bigg(p(0|j) + p(0|k)\bigg)~.
\end{equation}
After expanding Equation~(\ref{eqn:exII-4}), we obtain the following equation:
\begin{equation}
\label{eqn:exII-5}
p_{0}' = \frac{1}{2}\bigg(q_{00}\bigg(p(0|0) + p(0|0)\bigg) + \ldots + q_{32}\bigg(p(0|3) + p(0|2)\bigg)\bigg)~.
\end{equation}
In order to solve Equation~(\ref{eqn:exII-5}), we need to determine the conditional probabilities, $p(0|0), p(0|1), p(0|2)$, and $p(0|3)$ which we have calculated in Table~\ref{tbl:CondProbTable}.
\begin{table}
\begin{tabular}
[c]{|c|c|c|c|}
\hline
Conditional Probability & Corrupted Codeword & Probability of Error  \\ \hline \hline
$p(0|0)$ & $000\stackrel{III}{\rightarrow}000 + 111\stackrel{XXX}{\rightarrow}000$ & $(1-p)^3 + p^3 = p_0$\\ \hline
$p(0|1)$ & $001\stackrel{IIX}{\rightarrow}000 + 110\stackrel{XXI}{\rightarrow}000$ & $p(1-p)^2 + p^2(1-p) = p_1$\\\hline
$p(0|2)$ & $010\stackrel{IXI}{\rightarrow}000 + 101\stackrel{XIX}{\rightarrow}000$ & $p(1-p)^2 + p^2(1-p) = p_1$\\\hline
$p(0|3)$ & $100\stackrel{XII}{\rightarrow}000 + 011\stackrel{IXX}{\rightarrow}000$ & $p(1-p)^2 + p^2(1-p) = p_1$\\
\hline
\end{tabular}
\caption{Conditional Probabilities - [3,1,3] Repetition Code}
\label{tbl:CondProbTable}
\end{table}
After substituting the value of the conditional probabilities from Table~\ref{tbl:CondProbTable} into Equation~(\ref{eqn:exII-5}) and rearranging the terms, we get the following:
\begin{equation}
\label{eqn:exII-6}
q_{00} = \frac{p_{0}'}{p_{0}} - \frac{1}{2}\frac{(p_0+p_1)}{p_0}\bigg(q_{01}+q_{02}+q_{03}+q_{10}+q_{20}+q_{30}\bigg)
- \frac{p_1}{p_0}\bigg(q_{12}+q_{13}+q_{21}+q_{23}+q_{31}+q_{32}\bigg)~.
\end{equation}
In order to solve Equation~(\ref{eqn:exII-6}), we need to determine the values for the various q-terms.  We begin by expanding the following equation:
\begin{eqnarray}
\label{eqn:exII-7}
p_{1}' & = & \frac{1}{2}\sum_{j = 1}^3 \sum_{k = 0}^3 q_{jk}\bigg(p(l|j) + p(l|k)\bigg)~, \nonumber \\
p_{1}' & = & \frac{1}{2}\bigg(q_{10}\bigg(p(1|1)+p(1|0)\bigg) + \ldots + q_{32}\bigg(p(1|3)+p(1|2)\bigg)\bigg)~.
\end{eqnarray}
We determine the conditional probabilities appearing in Equation~(\ref{eqn:exII-7}) in the same way as in Table~\ref{tbl:CondProbTable}.  After calculating and substituting the values for the conditional probabilities into Equation~(\ref{eqn:exII-7}) we get the following:
\begin{equation}
\label{eqn:exII-8}
p_{1}' = \frac{1}{2}(p_0+p_1)(q_{10}+q_{12}+q_{13}+q_{21}+q_{31})+p_1(q_{20}+q_{23}+q_{30}+q_{32})~.
\end{equation}
From Equation~(\ref{eqn:exII-8}), we can easily set the q-terms to the following:
\begin{eqnarray}
\label{eqn:exII-9}
&& q_{10}=q_{12}=q_{13}=q_{21}=q_{31} = \frac{1}{5}\frac{p_{1}'}{(p_0 + p_1)}~, \\
&& q_{20} = q_{23} = q_{30} = q_{32} = \frac{1}{8}\frac{p_{1}'}{p_{1}}~.
\end{eqnarray}
We still need to determine the values for $q_{01},q_{02}$, and $q_{03}$.  We begin by writing the following equation:
\begin{equation}
\label{eqn:exII-10}
p_{2}' = \frac{1}{2}\sum_{j = 0}^3\sum_{k = 1}^3 q_{jk}\bigg(p(2|j) + p(2|k)\bigg)~.
\end{equation}
After expanding Equation~(\ref{eqn:exII-10}), and rewriting, we get the following equation:
\begin{equation}
\label{eqn:exII-12}
p_{1}(q_{01}+q_{03}) + \frac{1}{2}\bigg(p_{0}+p_{1}\bigg)q_{02} = p_{2}'-\frac{1}{2}(p_{0}+p_{1})\bigg(q_{12}+q_{21}+q_{23}+q_{32}\bigg)-p_{1}(q_{13}+q_{31})~.
\end{equation}
But we already know the values of all the q-terms that appear on the right side of Equation~(\ref{eqn:exII-12}).  We substitute the values of these q-terms from Equation~(\ref{eqn:exII-9}):
\begin{equation}
\label{eqn:exII-13}
p_{1}q_{01} + \frac{1}{2}\bigg(p_{0}+p_{1}\bigg)q_{02}+p_{1}q_{03} = p_{2}'-\frac{1}{2}\bigg(p_{0}+p_{1}\bigg)\bigg(\frac{2}{5}\frac{p_{1}'}{(p_{0}+p_{1})}+\frac{1}{4}\frac{p_{1}'}{p_{1}}\bigg)-p_{1}\bigg(\frac{2}{5}\frac{p_{1}'}{(p_{0}+p_{1}}\bigg)~.
\end{equation}
Let us rewrite Equation~(\ref{eqn:exII-13}) as follows:
\begin{equation}
\label{eqn:exII-14}
p_{2}' = p_{1}q_{01} + \frac{1}{2}\bigg(p_{0}+p_{1}\bigg)q_{02}+p_{1}q_{03} + \gamma ~,
\end{equation}
where
\begin{equation}
\label{eqn:gamma}
\gamma = \frac{1}{2}\bigg(p_{0}+p_{1}\bigg)\bigg(\frac{2}{5}\frac{p_{1}'}{(p_{0}+p_{1})}+\frac{1}{4}\frac{p_{1}'}{p_{1}}\bigg)+p_{1}\bigg(\frac{2}{5}\frac{p_{1}'}{(p_{0}+p_{1}}\bigg)~.
\end{equation}
Now it is quite easy to set the values for $q_{01},q_{02},$ and $q_{03}$.
\begin{eqnarray}
\label{eqn:exII-15}
&& q_{01} = q_{03} = \frac{1}{3p_{1}}\bigg(p_{2}'-\gamma \bigg) = \frac{1}{3p_{1}}\bigg(p_{1}'-\gamma \bigg)~,\\
&& q_{02} = \frac{2}{3}\frac{p_{2}'-\gamma}{p_{0}+p_{1}} = \frac{2}{3}\frac{p_{1}'-\gamma}{p_{0}+p_{1}}~.
\end{eqnarray}
We can now finally substitute the values of the q-terms into Equation~(\ref{eqn:exII-6}):
\begin{equation}
\label{eqn:exII-16}
\fbox{$q_{00} = \frac{p_{0}'}{p_{0}} - \frac{1}{5}(\alpha+4\beta)\bigg(\frac{p_{1}'}{p_{0}+p_{1}}\bigg) - \frac{1}{4}(\alpha+\beta)\frac{p_{1}'}{p_{1}} - \frac{2}{3}\alpha\bigg(\frac{p_{1}'-\gamma}{p_{1}}\bigg)\bigg(\frac{p_{0}}{p_{0}+p_{1}}\bigg)$}
\end{equation}
where
\begin{eqnarray}
\label{eqn:exII-17}
&& \alpha = \frac{1}{2}\bigg(\frac{p_{0}+p_{1}}{p_{0}}\bigg)~, \\
&& \beta = \frac{p_{1}}{p_{0}}~,
\end{eqnarray}
and where we obtain $p_{0}'$ and $p_{1}'$ in terms of $p$ and $\delta p$ by expanding Equation~(\ref{eqn:exII-3a}):
\begin{eqnarray}
\label{eqn:exII-18}
&& p_{0}' = 1-3p(1-p)-3\delta p(1-2p)~, \\
&& p_{1}' = p_{2}' = p_{3}' = p(1-p-\delta p)+\delta{p}(1-p)~.
\end{eqnarray}
Next we would like to determine the amount of average information that Alice can transmit to Bob if the effective error rate $p' = p+ \delta p > p$.  The sum of the non-diagonal q-terms will give us the average information:
\begin{equation}
\label{eqn:exII-19}
N_{avg} = \sum_{j \neq k}^3 q_{jk}~.
\end{equation}
After expanding Equation~(\ref{eqn:exII-19}), we get:
\begin{equation}
\label{eqn:exII-20}
\fbox{$N_{avg} = \frac{2}{3}(p_{1}'-\gamma)\bigg(\frac{p_{0}+2p_{1}}{p_{1}(p_{0}+p_{1})}\bigg) + p_{1}'\bigg(\frac{p_{0}+3p_{1}}{2p_{1}(p_{0}+p_{1})}\bigg)$}
\end{equation}

\subsection{Numerical Simulation}
\label{numericsimI}
We would like to optimize the average number of bits that Alice can hide in a three-bit repetition code for both the noiseless and the noisy bit-flip channel.  In order to do this we set up the problem as a linear-optimization one, where the objective function (also known as the cost or profit function) that we would like to optimize is the average number of steganographic bits that Alice can transmit to Bob.
We would further like to extend this example to the five-bit repetition code that encodes one logical bit into five physical bits, and can correct up to two arbitrary bit-flips.  While we can implement the three-bit repetition code from Section~\ref{sec:exampleI}, it turns out that if we extend this naively to the five-bit case, we run into the problem of optimizing over several thousand variables.  This happens because unlike the three-bit case, the five-bit code has sixteen syndromes.  This gives Alice and Bob many ways to encode their steganographic bits.
With the three-bit code Alice and Bob can in addition to hiding a single bit, hide two bits.  They would use the following encoding map:
\begin{eqnarray}
\label{eqn:numericsimI-1}
00 \rightarrow s_{0}~, \\
01 \rightarrow s_{1}~, \\
10 \rightarrow s_{2}~, \\
11 \rightarrow s_{3}~.
\end{eqnarray}
However, if the error-rate of the channel is very low, then Alice will rarely transmit two steganographic bits.  On the average Alice will transmit one steganographic bit to Bob.  If Eve knows that the error-rate of the channel is low, where a single bit is flipped occasionally, she would be apprised of covert activity on the channel if Alice applies the two-bit encoding most of the time.  The ideal situation for Alice would be where she minimizes sending no bit at all through the channel, but where she can occasionally (rarely) hide two bits, thereby boosting her transmission rate to little over one bit over several uses of the channel.  The latter depends on the value of $p$.  We summarize the linear equality/inequality constraints along with the objective function $N_{avg}$ (average number of hidden bits) that we optimize, below:
\begin{eqnarray}
\label{eqn:numericsumI-2}
p_j & = &\frac{1}{2}\sum_{k = 0}^3 \left(q_{jk} + q_{kj}\right), \\
\sum_{j,k} q_{jk} & = & 1, \\
 q_{jk} & \geq & 0, \forall j,k, \\
 N_{avg} & = & \sum_{j \neq k} q_{jk} + 2\sum_{j \neq k \neq l \neq m}q_{jklm}.
\end{eqnarray}
%\begin{figure}
%[htb]
%\begin{center}
%\includegraphics[
%natheight=8.386600in,
%natwidth=8.973300in,
%height=3.5in,
%width=4.0in
%]
%{NoiselessThreeBits.png}%
%f{We plot the error-rate on the X-axis and the average number of bits, $N_{avg}$, on the Y-axis for various error sequences.  Here $k$ varies from 1 to 100.  As the error-rate increases, Alice can hide up to two bits.  The red curve plots the Shannon entropy of the probability distribution of the syndromes.}
%\label{fig:NoiselessThreeBits}
%\end{center}
%\end{figure}
\begin{figure}[htp]
  \begin{center}
    	 \includegraphics[height = 4.0in,width = 4.0in]{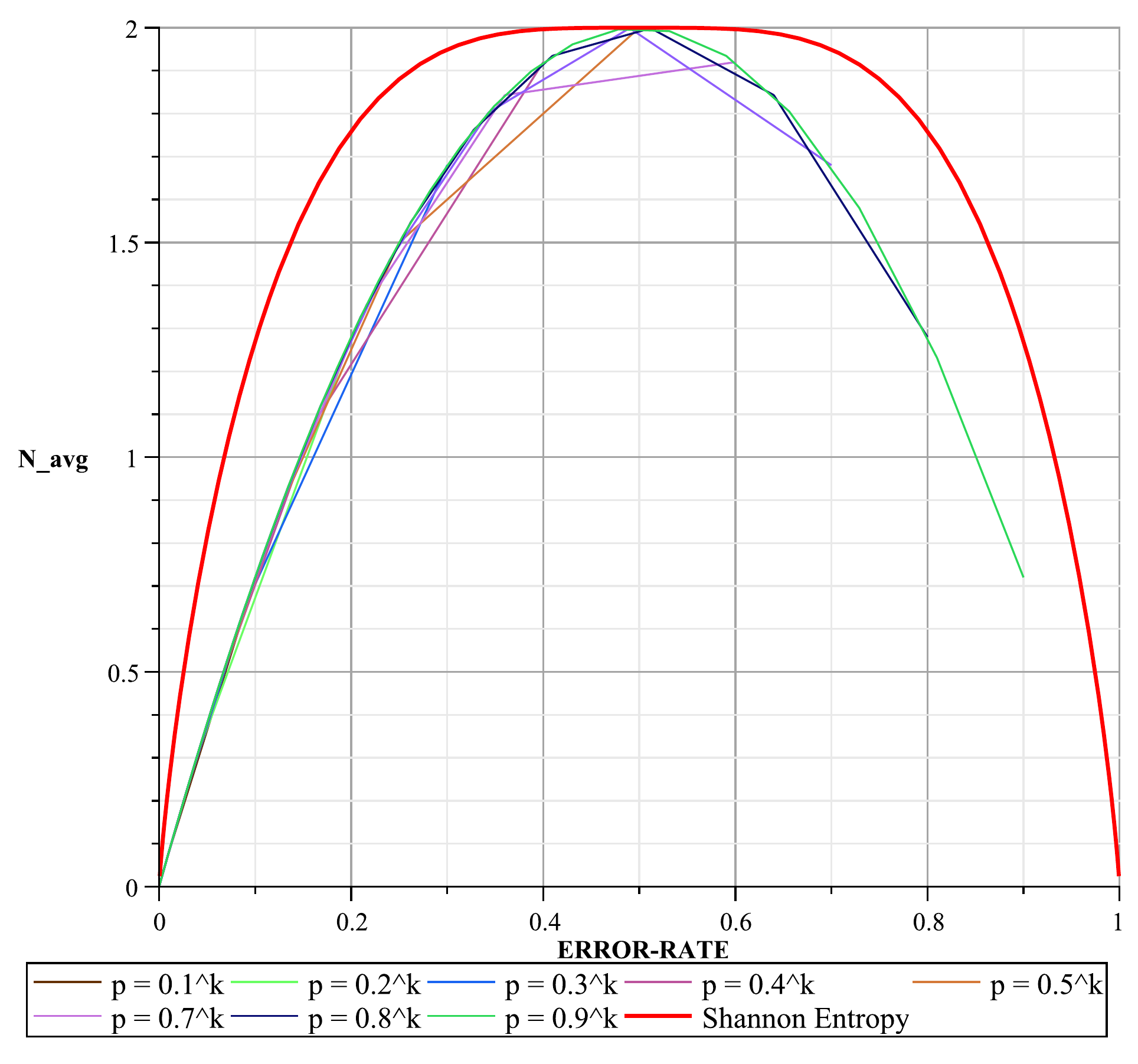}
  \end{center}
  \caption{Noiseless Three-Bit Code. We plot the error-rate on the X-axis and the average number of bits, $N_{avg}$, on the Y-axis for various error sequences.  Here $k$ varies from 1 to 100.  As the error-rate increases Alice can hide up to two bits.  The red curve plots the Shannon entropy of the probability distribution of the syndromes.}
  \label{fig:NoiselessThreeBits}
\end{figure}
So we end up with five linear equality constraints and sixteen linear inequality constraints.  We can already see that with the three-bit code we are optimizing over sixteen different variables, with several linear constraints.  When we try to extend this naive approach to the five-bit problem, we end up optimizing over several thousand variables as we mentioned earlier.

In the previous approach Alice and Bob use the sixteen variables $q_{jk}$ to keep track of the various ways in which they can encode their steganographic bits.  We also know that Alice must apply errors in such a way that she matches the probability distribution $p$ of the bit-flip channel.  In the second approach Alice groups her syndromes by probabilities.  From Table~\ref{table1} it is clear that syndromes $s_1, s_2$, and $s_3$ have the same probability $p^2(1-p)+p(1-p)^2$, while syndrome $s_0$ has the probability $p^3 + (1-p)^3$.  We reduce the number of variables over which we must optimize by introducing the notion of a syndrome class.  Instead of the variables $q_{jk}$, we now use $\CQ_{j_{0},...,j_{M-1}}$, where $j_{0}$ is the number of syndromes Alice uses from the first group, and the number of syndromes she uses from the second group and so on, and where $M = 2$, tells us the total number of syndrome classes.  So the $\CQ$ variables give us the total probability of using an encoding scheme to hide a bit.  Moreover, we use the variables $n_{0}, n_{1}, ..., n_{M-1}$ to denote the various syndrome classes, and to keep track of the total number of syndromes in each group.  So for the three-bit case we have $n_{0} = 1$ for we only have the syndrome $s_{0}$, and $n_{1} = 3$ for we have three different syndromes $s_{1},s_{2}$, and $s_{3}$.  With this new definition the linear equality constraints that Alice must now satisfy are as follows:
\begin{equation}
\label{eqn:numericsimI-3}
p_{k} = \sum_{j_{0},j_{1},\ldots,j_{M-1}}\frac{j_{k}/n_{k}}{j_{0}+j_{1}+\ldots+j_{M-1}}\CQ_{j_{0}j_{1} \ldots j_{M-1}}~,
\end{equation}
and
\begin{equation}
\sum_{j_{0},j_{1},\ldots,j_{M-1}}\CQ_{j_{0},j_{1},\ldots,j_{M-1}} = 1~.
\end{equation}
While as the linear inequality constraints that she must satisfy are:
\begin{equation}
\label{eqn:numericsimI-4}
\CQ_{j_{0},j_{1},\ldots,j_{M-1}}  \geq  0~. \\
\end{equation}
The average number of bits that she can hide without apprising Eve of steganographic activity are:
\begin{equation}
\label{eqn:numericsimI-5}
N_{avg} = \sum_{j_{0},j_{1},\ldots,j_{M-1}}\left\lfloor \log_{2}(j_{0}+j_{1}+\ldots+j_{M-1})\right\rfloor \CQ_{j_{0}j_{1} \ldots j_{M-1}}~.
\end{equation}
With the above definition we can frame our noiseless three-bit linear optimization problem as follows:
\begin{eqnarray}
\label{eqn:numericsimI-6}
&& p_{0}  =  \CQ_{10} + \frac{1}{2}\CQ_{11} + \frac{1}{4}\CQ_{13}~, \\
&& p_{1}  =  \frac{1}{3}\CQ_{01}+\frac{1}{6}\CQ_{11}+\frac{1}{3}\CQ_{02}+\frac{1}{4}\CQ_{13}~, \\
&& \sum_{j_{0},j_{1}}\CQ_{j_{0}j_{1}}  =  1~,\\
&& \CQ_{j_{0}j_{1}}  \geq  0~,~\forall j_{0}, j_{1}~,\\
&& N  =  \CQ_{11} + \CQ_{02} + 2\CQ_{13}~.
\end{eqnarray}
We have reduced our problem from optimizing over sixteen variables to optimizing over five variables.  We summarize our result from this optimization in Figure~\ref{fig:NoiselessThreeBits}.  In this example Alice can hide a maximum of two bits at an error-rate of 0.5.  We plot the Shannon entropy of the various syndrome probabilities in red.  In the limit of large block codes, we can essentially close the gap between the Shannon curve and the curves for the optimal solution.

We now detail the five-bit code, [5,1,5] example.  This is a distance-five code and can correct up to two bit-flip errors.  The [5,1,5] code has sixteen distinct syndromes, $s_0, s_1,\ldots,s_{15}$.  As in the previous example, $s_0$ corresponds to no error.  Syndromes $s_1, s_2,\ldots,s_5$ correspond to single bit-flip errors on each of the five physical bits of the corrupted codeword.  Syndromes $s_6,s_7,\ldots,s_{15}$ correspond to double bit-flip errors. We summarize the error model for the bit-flip channel for the [5,1,5] code in Table~\ref{tbl:five-bit}.  Notice that as in Table~\ref{table1} we have included the probabilities for both correctable and uncorrectable errors in each of the rows of Table~\ref{tbl:five-bit}.
\begin{table}
\begin{center}
\begin{tabular}
[c]{|c|c|c|c|c|}
\hline
Error & Codeword & Syndrome & Prob of Error & Error\\ \hline \hline
No Error & 00000 & $s_0$ & $(1-p)^5 + p^5$ & $p_0$\\ \hline
Error on bit 1 & 00001 & $s_1$ & $(1-p)^{4}p + p^4(1-p)$ & $p_1$\\\hline
Error on bit 2 & 00010 & $s_2$ & $(1-p)^{4}p + p^4(1-p)$ & $p_2$\\\hline
Error on bit 3 & 00100 & $s_3$ & $(1-p)^{4}p + p^4(1-p)$ & $p_3$\\ \hline
Error on bit 4 & 01000 & $s_4$ & $(1-p)^{4}p + p^4(1-p)$ & $p_4$\\ \hline
Error on bit 5 & 10000 & $s_5$ & $(1-p)^{4}p + p^4(1-p)$ & $p_5$\\ \hline
Error on bits 1 and 2 & 00011 & $s_6$ & $(1-p)^{3}p^{2} + p^3(1-p)^{2}$ & $p_6$\\ \hline
Error on bits 1 and 3 & 00101 & $s_7$ & $(1-p)^{3}p^{2} + p^3(1-p)^{2}$ & $p_7$\\ \hline
Error on bits 1 and 4 & 01001 & $s_8$ & $(1-p)^{3}p^{2} + p^3(1-p)^{2}$ & $p_8$\\ \hline
Error on bits 1 and 5 & 10001 & $s_9$ & $(1-p)^{3}p^{2} + p^3(1-p)^{2}$ & $p_9$\\ \hline
Error on bits 2 and 3 & 00110 & $s_{10}$ & $(1-p)^{3}p^{2} + p^3(1-p)^{2}$ & $p_{10}$\\ \hline
Error on bits 2 and 4 & 01010 & $s_{11}$ & $(1-p)^{3}p^{2} + p^3(1-p)^{2}$ & $p_{11}$\\ \hline
Error on bits 2 and 5 & 10010 & $s_{12}$ & $(1-p)^{3}p^{2} + p^3(1-p)^{2}$ & $p_{12}$\\ \hline
Error on bits 3 and 4 & 01100 & $s_{13}$ & $(1-p)^{3}p^{2} + p^3(1-p)^{2}$ & $p_{13}$\\ \hline
Error on bits 3 and 5 & 10100 & $s_{14}$ & $(1-p)^{3}p^{2} + p^3(1-p)^{2}$ & $p_{14}$\\ \hline
Error on bits 4 and 5 & 11000 & $s_{15}$ & $(1-p)^{3}p^{2} + p^3(1-p)^{2}$ & $p_{15}$\\
\hline
\end{tabular}
\end{center}
\caption{Channel model for the [5,1,5] code}
\label{tbl:five-bit}
\end{table}
In the five-bit example we have three distinct syndrome classes, no error, a single bit-flip error, and two bit-flip errors.  We label these classes by $n_{0},n_{1}$, and $n_{2}$ respectively.  If we observe Table~\ref{tbl:five-bit}, we see that $n_{0} = 1, n_{1} = 5$, and $n_{2} = 10$.  The probability of using a particular encoding is represented by the variable $\CQ_{j_{0}j_{1}j_{2}}$.  With the five-bit code, Alice can hide up to four steganographic bits.  We summarize the various encoding schemes that Alice can employ to send zero, one, two, three, and four bits to Bob in Table~\ref{tbl:fivebitencscheme}.
The first linear equality that Alice must satisfy is that the total probability of using all the various encoding schemes must sum up to 1:
\begin{equation}
\label{eqn:fivebiteqlconstraint1}
\sum_{j_{0}j_{1}j_{2}}\CQ_{j_{0}j_{1}j_{2}} = 1~.
\end{equation}
The second set of linear equality constraints deals with satisfying the probability distribution of the channel, as follows:
\begin{eqnarray}
\label{eqn:fivebiteqlconstraint2}
%%%%%%%%%%%%%%%%%%%%%%%%%
&& p_{0} = \CQ_{100} + \frac{1}{2}\bigg(\CQ_{110}+\CQ_{101}\bigg) + \frac{1}{4}\bigg(\CQ_{130}+\CQ_{121}+\CQ_{103}+\CQ_{112}\bigg)\nonumber \\
&& + \frac{1}{8}\bigg(\CQ_{152} + \CQ_{143} + \CQ_{134} + \CQ_{125} + \CQ_{116} + \CQ_{107}\bigg) + \frac{1}{16}\CQ_{1,5,10}~,  \\
%%%%%%%%%%%%%%%%%%%%%%%%
\nonumber \\
&& p_{1} = \frac{1}{5}\bigg(\CQ_{010}+\CQ_{020}+\CQ_{040}\bigg) + \frac{1}{10}\bigg(\CQ_{110}+\CQ_{011}+\CQ_{121}+\CQ_{022}+\CQ_{044}+\CQ_{143}\bigg) \nonumber \\
&& + \frac{3}{20}\bigg(\CQ_{130} + \CQ_{031}\bigg) + \frac{1}{8}\bigg(\CQ_{152}+\CQ_{053}\bigg) +\frac{3}{40}\bigg(\CQ_{035}+\CQ_{134}\bigg)\nonumber \\
&& +\frac{1}{20}\bigg(\CQ_{013}+\CQ_{026}+\CQ_{125}+\CQ_{112}\bigg) + \frac{1}{40}\bigg(\CQ_{017}+\CQ_{116}\bigg) + \frac{1}{16}\CQ_{1,5,10}~,\\
%%%%%%%%%%%%%%%%%%%%%%%%%%
\nonumber \\
&& p_{2} = \frac{1}{10}\bigg(\CQ_{001}+\CQ_{004}+\CQ_{008}+\CQ_{002}\bigg) + \frac{3}{80}\bigg(\CQ_{053} + \CQ_{143}\bigg) \nonumber \\
&& + \frac{1}{20}\bigg(\CQ_{011}+\CQ_{101}+\CQ_{022}+\CQ_{044}+\CQ_{112}\bigg) + \frac{7}{80}\bigg(\CQ_{017}+\CQ_{107}\bigg) \nonumber \\
&& + \frac{1}{40}\bigg(\CQ_{121}+\CQ_{031}+\CQ_{152}\bigg) + \frac{3}{40}\bigg(\CQ_{013}+\CQ_{026}+\CQ_{116}+\CQ_{103}\bigg) \nonumber \\
&& + \frac{1}{16}\bigg(\CQ_{035}+\CQ_{125}+\CQ_{1,5,10}\bigg)~.
\end{eqnarray}
The linear inequality constraints that Alice must satisfy are:
\begin{table}
\begin{center}
\begin{tabular}
[c]{|c|c|c|c|c|c|c|c|c|c|}
\hline
$N_{s}$ & Encoding & $N_{s}$ & Encoding &$N_{s}$ & Encoding & $N_{s}$ & Encoding &$N_{s}$ & Encoding\\ \hline \hline
0 & $\CQ_{010}$ & 1 & $\CQ_{011}$ & 2 & $\CQ_{031}$ & 3 & $\CQ_{053}$ & 4 & $\CQ_{1,5,10}$ \\ \hline
0 & $\CQ_{001}$ & 1 & $\CQ_{020}$ & 2 & $\CQ_{022}$ & 3 & $\CQ_{044}$ && \\ \hline
0 & $\CQ_{100}$ & 1 & $\CQ_{002}$ & 2 & $\CQ_{013}$ & 3 & $\CQ_{035}$ && \\ \hline
&&                1 & $\CQ_{101}$ & 2 & $\CQ_{004}$ & 3 & $\CQ_{026}$ && \\ \hline
&&                1 & $\CQ_{110}$ & 2 & $\CQ_{040}$ & 3 & $\CQ_{017}$ && \\ \hline
&& &&                               2 & $\CQ_{121}$ & 3 & $\CQ_{008}$ && \\ \hline
&& &&                               2 & $\CQ_{130}$ & 3 & $\CQ_{143}$ && \\ \hline
&& &&                               2 & $\CQ_{103}$ & 3 & $\CQ_{143}$ && \\ \hline
&& &&                               2 & $\CQ_{112}$ & 3 & $\CQ_{125}$ && \\ \hline
&& && &&                                             3 & $\CQ_{116}$ && \\ \hline
&& && &&                                             3 & $\CQ_{107}$ && \\ \hline
&& && &&                                             3 & $\CQ_{152}$ && \\ \hline
\hline
\end{tabular}
\end{center}
\caption{$N_{s}$ is the number of steganographic bits that Alice can hide and subsequently transmit to Bob. Column 2 shows the
various encoding schemes that Alice can employ to successfully transmit steganographic bits to Bob.}
\label{tbl:fivebitencscheme}
\end{table}
\begin{equation}
\label{eqn:fivebitineqconstraint}
\CQ_{j_{0}j_{1}j_{2}} \geq 0, \forall j_{0},j_{1},j_{2}~.
\end{equation}
The objective function that Alice must maximize is the average number of steganographic bits that she can transmit to Bob:
\begin{eqnarray}
\label{eqn:fivebitavgbits}
&& N_{avg} = \sum_{j{0}+j_{1}+j_{2} = 2}\CQ_{j_{0}j_{1}j_{2}} + \sum_{j{0}+j_{1}+j_{2} = 4}2\CQ_{j_{0}j_{1}j_{2}} \nonumber \\
&& + \sum_{j{0}+j_{1}+j_{2} = 8}3\CQ_{j_{0}j_{1}j_{2}} + \sum_{j{0}+j_{1}+j_{2} = 16}4\CQ_{j_{0}j_{1}j_{2}}~.
\end{eqnarray}

%\begin{figure}
%[ptb]
%\begin{center}
%\includegraphics[
%natheight=8.386600in,
%natwidth=8.973300in,
%height=3.5in,
%width=4.0in
%]
%{NoiselessFiveBits.png}%
%\caption{We plot the error-rate on the X-axis and the average number of bits, $N_{avg}$, on the Y-axis for various error sequences.  Here $k$ varies from 1 to 100.  As the error-rate increases Alice can hide up to six bits.  The red curve plots the Shannon entropy of the probability distribution of the syndromes.}
%\label{fig:NoiselessFiveBits}
%\end{center}
%\end{figure}
\begin{figure}[htp]
  \begin{center}
    	 \includegraphics[height = 4.0in,width = 4.0in]{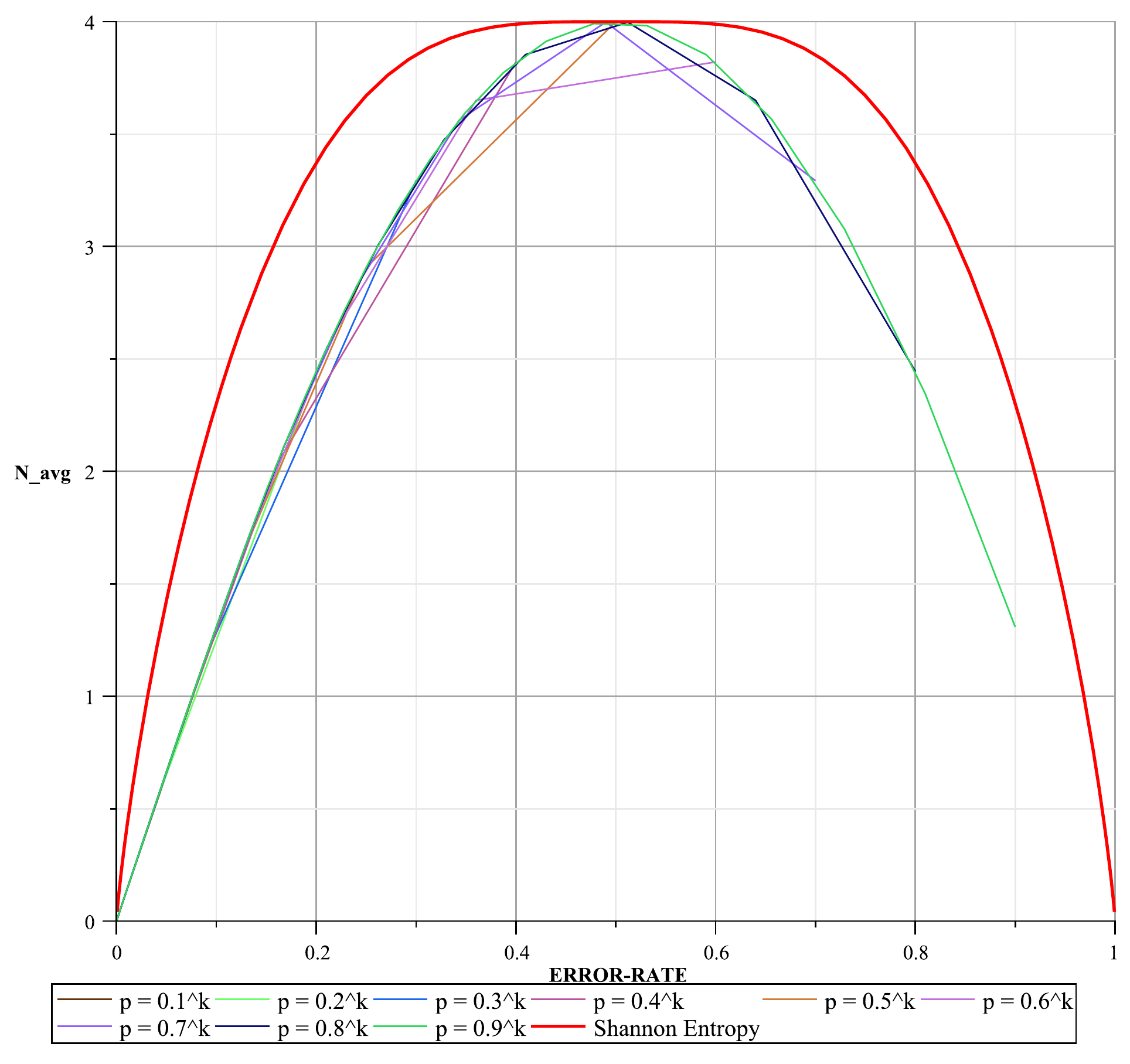}
  \end{center}
  \caption{Noiseless Five-Bit Code. We plot the error-rate on the X-axis and the average number of bits, $N_{avg}$, on the Y-axis for various error sequences.  Here $k$ varies from 1 to 100.  As the error-rate increases Alice can hide up to six bits.  The red curve plots the Shannon entropy of the probability distribution of the syndromes.}
  \label{fig:NoiselessFiveBits}
\end{figure}
The result of optimizing the five-bit code is shown in Figure~\ref{fig:NoiselessFiveBits}.  The figure is qualitatively similar to the noiseless three-bit example.  Once again the maximum number of steganographic bits that Alice can hide is at an error-rate of 0.5.  This happens because at that rate the Shannon entropy of the channel is maximized which means that almost all of the codewords are experiencing bit-flips, and so Alice can on the average can transmit more steganographic bits.  In the limit of large code-blocks, the number of steganographic bits that Alice can transmit to Bob reaches the Shannon capacity.  Even though we don't prove this statement, one can infer this by eye-balling the gap between the steganographic curves and the Shannon curve in Figures~\ref{fig:NoiselessThreeBits}, and~\ref{fig:NoiselessFiveBits}.
%\begin{figure}
%[ptb]
%\begin{center}
%\includegraphics[
%natheight=8.386600in,
%natwidth=8.973300in,
%height=3.5in,
%width=4.0in
%]
%{NoisyThreeBits.png}%
%\caption{We plot the channel noise $\delta{p}$ on the X-axis and the average number of bits correctly received by Bob, $N_{avg}$, on the Y-axis for various error-rates $p$ which decrease by an order of magnitude.}
%\label{fig:NoisyThreeBits}
%\end{center}
%\end{figure}
\begin{figure}[htp]
  \begin{center}
    	 \includegraphics[height = 4.0in,width = 4.0in]{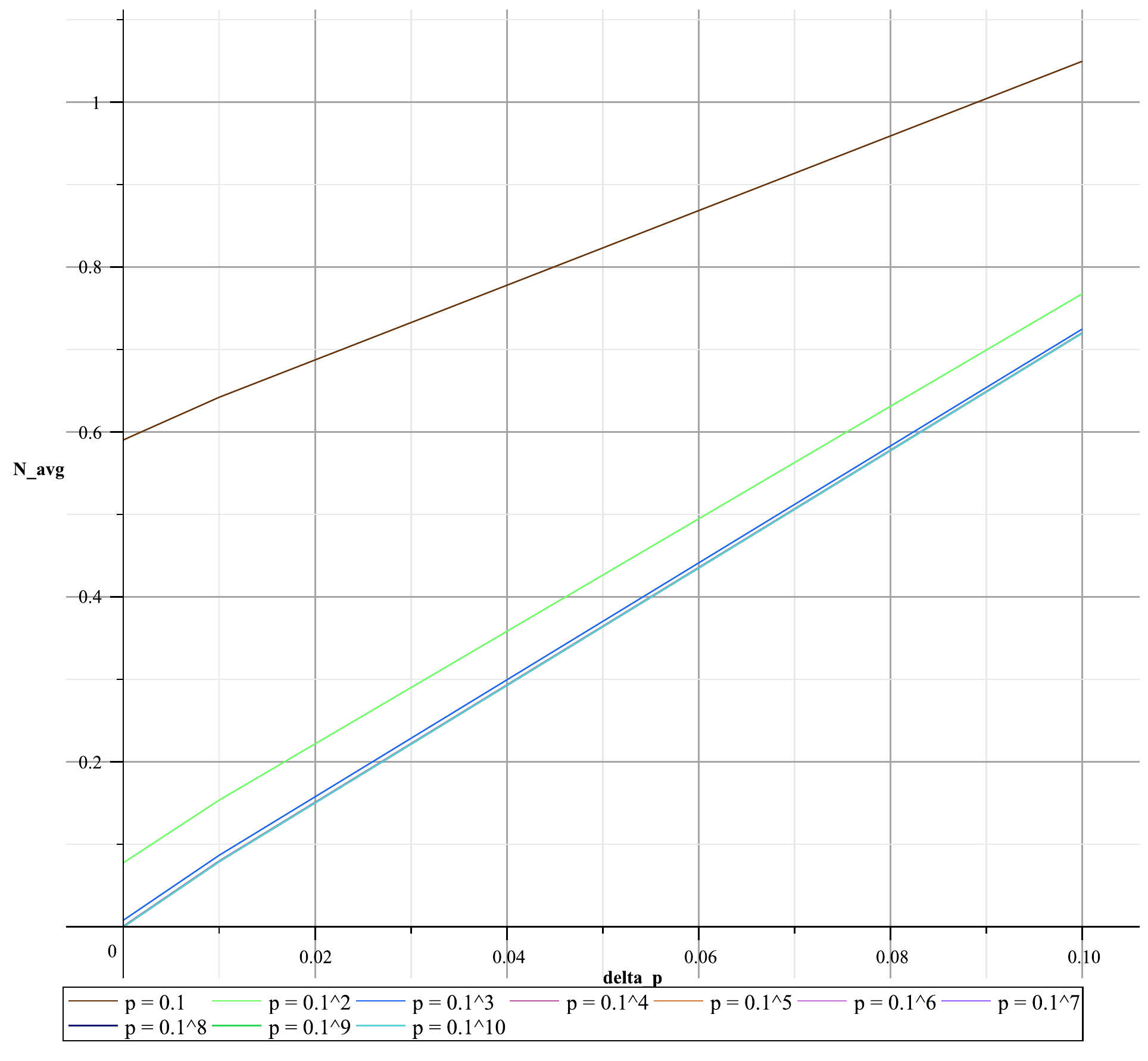}
  \end{center}
  \caption{Noisy Three-Bit Code. We plot the channel noise $\delta{p}$ on the X-axis and the average number of bits correctly received by Bob, $N_{avg}$, on the Y-axis for various error-rates $p$ which decrease by an order of magnitude.}
  \label{fig:NoisyThreeBits}
\end{figure}
We now move on to the case where Alice hides up to two steganographic bits in a three-bit code and transmits it through a noisy bit-flip channel with error-rate $p' = p + \delta p$, where $\delta p \geq 0$ is the noise that the channel experiences from the environment, and where $p$ is the error-rate that Eve expects of the channel.  The idea here is that Alice ought to be able to utilize the extra noise in the channel to send a steganographic bit to Bob more frequently.  We implement this linear optimization problem in terms of encoding classes following along the lines of noiseless three-bit example stated previously.  In Table~\ref{tbl:noisytable}, we show the number of correct steganographic bits that Bob receives from Alice, when she encodes two steganographic bits in the three-bit code.
\begin{table}
\begin{center}
\begin{tabular}
[c]{|c|c|c|}
\hline
Transmitted Bits & Received Bits &  $\#$ of Correct Bits \\ \hline \hline
&    00 & 2 \\
00 & 01 & 1 \\
&    10 & 1 \\
&    11 & 0 \\ \hline
&    00 & 1 \\
01 & 01 & 2 \\
&    10 & 0 \\
&    11 & 1 \\ \hline
&    00 & 1 \\
10 & 01 & 0 \\
&    10 & 2 \\
&    11 & 1 \\ \hline
&    00 & 0 \\
11 & 01 & 1 \\
&    10 & 1 \\
&    11 & 2 \\ \hline
\hline
\end{tabular}
\end{center}
\caption{The first column shows the two bits that Alice transmits to Bob, whereas the second column shows the received bits by Bob, some of which have experienced an error due to the noisy bit-flip channel.}
\label{tbl:noisytable}
\end{table}
Suppose Alice and Bob agree on a particular encoding, say, $(j,k)$ with probability $q_{jk}$, the encoding that we state in Equation~(\ref{eqn2}).  One can divide this scenario into two cases: the first case where Alice transmits the codeword corresponding to syndrome $j$, and Bob receives the codeword corresponding to syndrome $l$, and the second case where Alice sends the codeword corresponding to syndrome $k$, but Bob receives the codeword corresponding to syndrome $l$.  Based on these two notions we define the probability of receiving a correct bit by Bob as follows:
\begin{equation}
\label{eqn:numericsimI-8}
p_{corr}(j,k) = \frac{1}{2}\bigg(\sum_{p\left(l|j\right)>p\left(l|k\right)} p(l|j) + \sum_{p\left(l|j\right)<p\left(l|k\right)}p(l|k)\bigg)~.
\end{equation}
The linear equality and inequality constraints for the noisy example are exactly the same as those for the noiseless case in Equation~(\ref{eqn:numericsimI-6}).  The objective function $N_{avg}$ that Alice must optimize is:
\begin{equation}
\label{eqn:numericsimI-9}
N_{avg} = \sum_{j \neq k}p_{corr}(j,k)q_{jk}+\sum_{j \neq k \neq l\neq m}2\left(p_{0}+p_{1}\right)q_{jklm}~.
\end{equation}
The above objective function, in the encoding classes setting, gets transformed to:
\begin{equation}
\label{eqn:numerisimI-10}
N_{avg} = p_{0}\left(\CQ_{11}+\CQ_{02}\right) + 2(p_{0}+p_{1})\CQ_{13}~.
\end{equation}
We expect that with the extra noise Alice ought to be able to sneak in steganographic bits more frequently, and that is what we our results show in Figure~\ref{fig:NoisyThreeBits}

%%%%%%%%%%%%%%%%%%%%%% NEW SECTION %%%%%%%%%%%%%%%%%%%%%%%%%%%%%%%%%%%%%%%%%%%
\section{Inner-Outer Codes}
\label{sec:inner-outercode}
\begin{saying}
Information is the resolution of uncertainty.
---\textit{Claude Shannon}
\end{saying}
In this section we generalize the notion of transmitting steganographic bits over bit-flip channels by utilizing coding theory to our advantage.  The strategy of straightforwardly encoding steganographic bits into the syndromes of an error-correcting code can only go so far.  Transmitting steganographic bits over a five-bit code by utilizing syndromes becomes infeasible because the noisy environment can transform Alice's intended syndrome to any of the other fifteen syndromes.  The main reason that the strategy worked for the three-bit case is because it is a very simple example, the conditional probabilities are easy to calculate, and there are not too many ways by which the noise can transform the syndromes of the three-bit code.  In this chapter we present a different protocol for transmitting steganographic bits over a bit-flip channel.  The protocol still uses error-correcting codes, but Alice and Bob no longer encode steganographic bits into syndromes.  One of the main advantages of developing this protocol is that it generalizes in a straightforward fashion to the quantum setting.

\subsection{Classical Steganographic Protocol}
\label{sec:classicalstegprotocol}
In any steganographic protocol the sender, Alice, embeds her steganographic information into an innocent-looking cover-text message, and transmits it to the receiver Bob over a binary symmetric channel with bit-flip rate $p$.  She could for example use a passage from Hamlet to encode steganographic information into Shakespeare's poetry via a secret key shared between Bob and her.  When this message is intercepted by Eve she does not suspect that there is any covert activity between them because, after all, the message is just a few passages from Hamlet.  Our biggest challenge was to extend this latter notion of an innoncent-looking message to the information-theoretic setting.  As we demonstrated in the previous chapter, we achieve this via a set linear equality constraints that Alice must satisfy so that she can match the probability distribution $p$ of the bit-flip channel.  The protocol begins as follows:
\begin{enumerate}
	\item Alice chooses an $[N,k_{c}]$ error-correcting code, where $N \gg 1$, as her cover-code.  We also refer to this cover-code as           an ``outer'' code.
	\item Alice chooses a random $M$-bit subset from among the $N$ bits of the outer code, where $M < N$, and $M \gg 1$.  The locations of these $M$ bits is specified by the secret key that Alice and Bob share with each other.
	\item Alice applies errors to the subset of $M$ bits using an $[M,k_{s}]$ code, which we refer to as the ``inner'' code, thereby encoding $k_{s}$ steganographic bits.  An average codeword of an $[M,k_{s}]$ code has weight approximately $\frac{M}{2}$.  By applying these errors on the $N$-bit code-block, the effective binary symmetric channel has a bit-flip rate of $q \approx \frac{M}{2N}$.
	\item For an extra layer of protection Alice applies a one-time pad to the $M$-bit subset.  In addition to the original key, Alice and Bob also share these extra $M$ bits of their one-time pad.
	\item Alice transmits her $N$-bit code-block through the channel.
	\item Bob measures the syndromes of the cover-code and extracts the pattern of bit-flip errors.
	\item Bob reapplies the one-time pad to the $M$-bit subset and recovers the errors on the $M$ bits.
	\item Bob decodes the $M$-bit codeword and correctly recovers $k_{s}$ steganographic bits.
\end{enumerate}

Let us clarify the protocol above with a simple example which is shown in Figure~\ref{fig:classicalstegoprotocol}.  Suppose Alice chooses the ten-bit repetition code, $[10,1,10]$, as her cover-code.  She encodes 1 into 1111111111 which is one of the codewords of the $[10,1,10]$ code.  So the $[10,1,10]$ is her ``outer'' code.  She then chooses three bits randomly from this ten-bit code-block.  In this example she chooses the third, sixth, and eighth bits to represent the codeword for her $[3,1,3]$ inner-code.  She then flips the rest of the bits to 0.  She ends up with 0010010100.  Suppose the one-time pad is 0010000100. She then applies this one-time pad to 0010010100, and gets 0000010000.  She finally encodes this into the cover-code by XORing the latter with 1111111111.  She ends up with the bit-string 1111101111 which she then transmits through a binary-symmetric channel to Bob.  Let us say that this bit-string experiences bit-flips on fourth and eighth bits.  Bob receives the bit-string 1101100111.  Since he knows that Alice used the $[10,1,10]$ code as her cover-code, he knows that to unmask the cover he needs to XOR with 1111111111, which he does and obtains the string 0010011000 to which he now applies the one-time pad to recover 0000011100.  He now uses the shared secret key in order to determine the bit locations of the inner $[3,1,3]$ code.  He knows that the three-bit code comprises of the third, sixth, and eighth bits.  So he recovers the bit-string 011, to which he applies error-correction and obtains 111, which he correctly decodes as 1.  If the channel had flipped two bits of the inner code, then Bob would have incorrectly decoded the steganographic bit.  Alice can avoid this problem by using a bigger code for her inner-code block.  In this example had she used the $[5,1,5]$ code as her inner-code, then Bob would have been able to correct for up to two errors.  Moving to larger block codes unfortunately leads to a decrease in the transmission rate.
\begin{figure}
[ptb]
\begin{center}
\includegraphics[
natheight=11.0in,
natwidth=8.5in,
height=5.0in,
width=3.5in
]%
{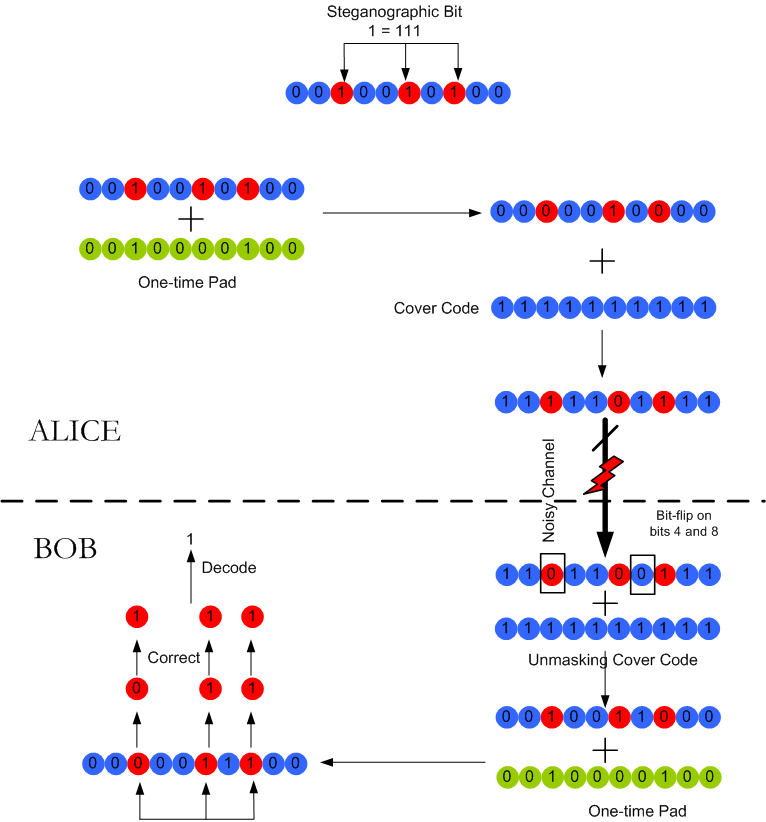}
\caption{Classical Steganographic Protocol.  The protocol begins in the top half.  Alice's steganographic bits are shown in red.  The one-time pad is shown in as green discs.  The bold dashed line shows that Alice and Bob are separated over space and Alice uses a BSC to transmit the codeword to Bob.  The bits colored red comprise the inner [3,1,3] block code, and the blue discs comprise the [10,1,10] outer block code.  The noisy channel flips the fourth and eighth bits.  The shared key is used by Alice and Bob to specify the location of the steganographic bits and to generate the one-time pad.}
\label{fig:classicalstegoprotocol}
\end{center}
\end{figure}

\begin{table}
\begin{center}
\begin{tabular}
[c]{|c|c|c|}
\hline
Steganographic Bit & Encoding Bit &  Bit String \\ \hline \hline
0 & 0 & 000 \\ \hline
1 & 0 & 111 \\ \hline
&&      001 \\
0 & 1 & 010 \\
&&      001 \\ \hline
&&      110 \\
1 & 1 & 101 \\
&&      011 \\ \hline
\hline
\end{tabular}
\end{center}
\caption{Encoding for transmitting one steganographic bit over a bit-flip channel with $[3,1,3]$ as an inner-code and $[N,1,N]$ as the outer code.}
\label{tbl:threeinN}
\end{table}

\subsection{Three Bits In $N$ Bits Encoding}
\label{sec:ThreeInN}
We now give a specific example where Alice uses the $[3,1,3]$ repetition code as her inner code, while using an $N$-bit repetition code as her outer code.  As in the previous chapter, we frame the problem as a linear-optimization one.  We present the encoding in Table~\ref{tbl:threeinN}.  We introduce the variable $\CY_{ijk}$, which is the probability of choosing to hide steganographic bit $i$ using the encoding bit $j$, and where we allow $k$ bit-flips on the outer-code.  So $i$ refers to the first column of Table~\ref{tbl:threeinN}, while $j$, and $k$ refer to the second and third columns, respectively.  With probability $\CQ$ Alice uses the encoding $\CY_{ijk}$ to transmit a steganographic bit, and
%\begin{figure}
%[htb]
%\begin{center}
%\includegraphics[
%natheight=8.386600in,
%natwidth=8.973300in,
%height=3.5in,
%width=4.0in]%
%{ThreeInN.png}%
%\caption{Three Bits in N Bits Encoding.  We plot the channel error-rate $p$ on the X-axis and the average number of transmitted steganographic bits , $N_{avg}$, on the Y-axis.  Here $M$ refers to the length of the inner code while $N$ refers to the length of the entire block code.  The curves in black show the Shannon entropy of the channel with the corresponding block code.}%
%\label{fig:ThreeInN}%
%\end{center}
%\end{figure}
\begin{figure}[htp]
  \begin{center}
    	 \includegraphics[height = 4.0in,width = 4.0in]{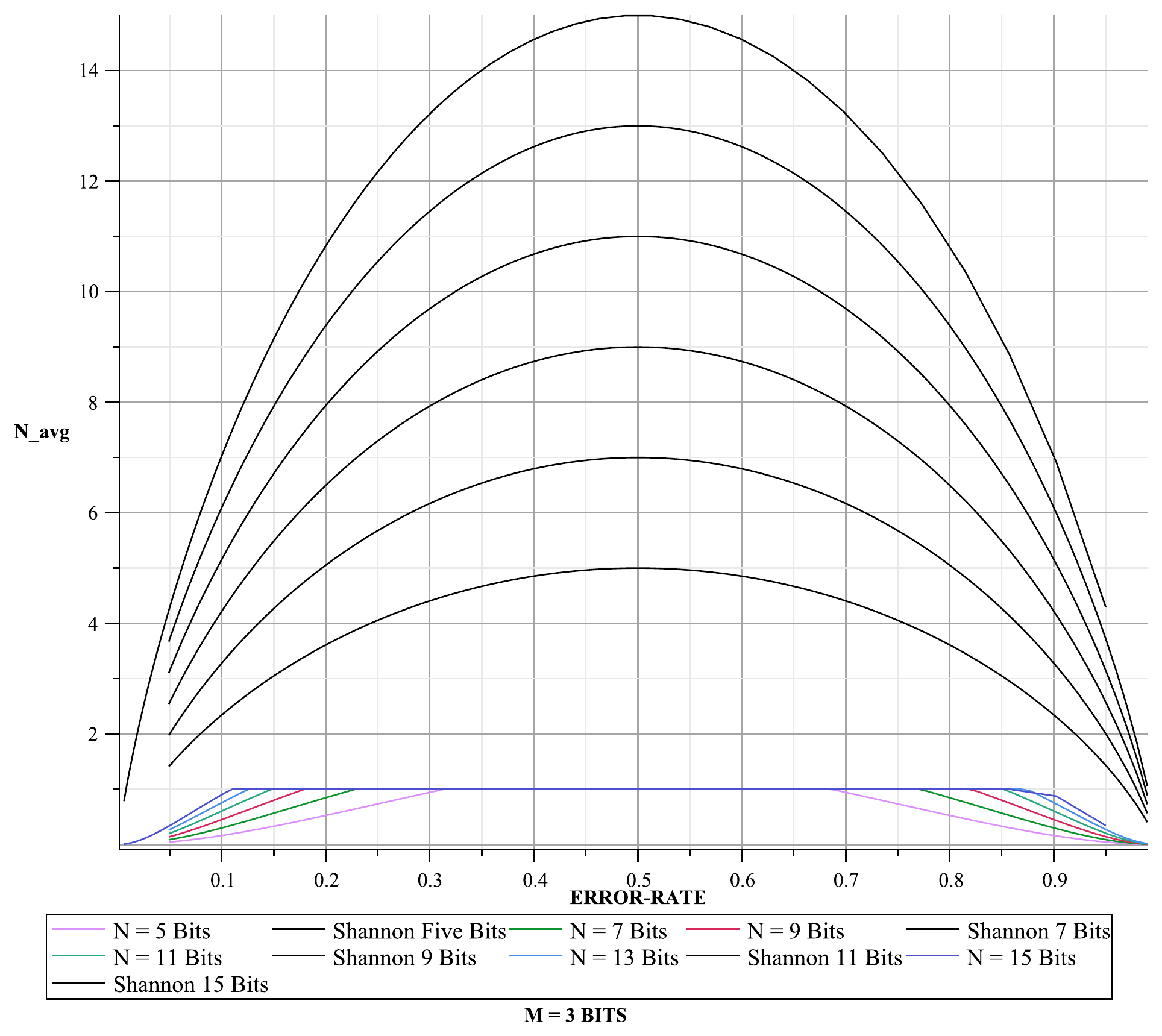}
  \end{center}
  \caption{Three Bits in N Bits Encoding. We plot the channel error-rate $p$ on the X-axis and the average number of transmitted steganographic bits , $N_{avg}$, on the Y-axis.  Here $M$ refers to the length of the inner code while $N$ refers to the length of the entire block code.  The curves in black show the Shannon entropy of the channel with the corresponding block code.}
  \label{fig:ThreeInN}
\end{figure}
with probability $1-\CQ$ she sends no steganographic bit to Bob.  However, when she chooses to send no steganographic bit to Bob, she applies $l$ errors randomly to the $N$-bit code block with probability $q_{l}$.  Alice does this because in case the error-rate of the channel is small and she cannot hide no steganographic bit, she may as well fool Eve by applying random errors which would still match the probability distribution $p$ of the channel.  The set of linear equality constraints that she must satisfy are:
\begin{eqnarray}
\label{eqn:threeinN-1}
p_{0} & = & {N\choose 0}(1-p)^{N} = \CY_{000} + q_{0}~,\\
p_{1} & = & {N\choose 1}p(1-p)^{N-1} = \CY_{010}+\CY_{001} + q_{1}~, \\
p_{2} & = & {N\choose 2}p^{2}(1-p)^{N-2} = \CY_{010} + \CY_{001} + \CY_{002} + q_{2}~,\\
&& \vdots \nonumber \\
p_{N} & = & {N \choose N}p^{N} = \CY_{10(N-3)} + q_{N}~.
\end{eqnarray}
We also insist that the steganographic bit that Alice chooses to transmit to Bob should be independent of her choice of encoding.  To match the latter constraints Alice must satisfy:
\begin{eqnarray}
\label{eqn:threeinN-2}
\tilde{\CY}_{00} = \tilde{\CY}_{10}~, \\
\tilde{\CY}_{01} = \tilde{\CY}_{11}~,
\end{eqnarray}
where
\begin{equation}
\label{eqn:threeinN-3}
\tilde{\CY}_{ij} = \sum_{k = 0}^{N-3}\CY_{ijk}~.
\end{equation}
In addition to the above equality constraints, we also want conservation of probabilities:
\begin{equation}
\label{eqn:threeinN-4}
\sum_{ijk} \CY_{ijk} + \sum_{l} q_{l} = 1~.
\end{equation}
Alice must satisfy the following linear inequality constraints:
\begin{eqnarray}
\label{eqn:threeinN-5}
\CY_{ijk} \geq 0, \forall i,j,k~, \\
q_{l} \geq 0, \forall l~.
\end{eqnarray}
Alice wants to maximize the probability of transmitting a stegaongraphic bit to Bob with each use of the channel.  So she must maximize the following objective function:
\begin{equation}
\label{eqn:threeinN-6}
N_{avg} = \sum_{ijk}\CY_{ijk}~.
\end{equation}
We simulated the above protocol keeping the inner [3,1,3] code fixed, while using the outer code $[N,1,N]$ for $N = 5,7,9,11,13$, and $15$.  From Figure~\ref{fig:ThreeInN} it is clear that as we increase the length of the outer code block from $N = 5$ to $N = 15$, she can utilize more of the error-rate region in order to transmit a steganographic bit to Bob.   In the next section we generalize this by allowing an arbitrary inner-code.

\subsection{$M$ Bits In $N$ Bits Encoding}
\label{sec:MinN}
In this section we detail how Alice can transmit $M$ steganographic bits to Bob over a noisless binary symmetric channel with a bit-flip rate of $p$.  Once again we frame this problem as a linear optimization one, where we are optimizing over the average number of steganographic bits that Alice can transmit to Bob without apprising Eve of any suspicious activity---making the whole protocol look innoncent to Eve.  Let $S_{M}$ be the subset of the bits that comprise the inner-code.  Let $\bar{S} = S_{N}$ be the subset of the bits that comprise the outer $[N,k_{c}]$ code.  Let $\CY_{ij}$ be the probability of flipping $i$ bits of $S_{M}$, and
%\begin{figure}
%[ptb]
%\begin{center}
%\includegraphics[
%natheight=8.386600in,
%natwidth=8.973300in,
%height=3.5in,
%width=4.0in]%
%{MInN.png}%
%\caption{$M$ Bits in $N$ Bits Encoding  We plot the channel error-rate $p$ on the X-axis and the average number of transmitted steganographic bits , $N_{avg}$, on the Y-axis.  Here $M$ refers to the length of the inner code while $N$ refers to the length of the entire block code.  The black curve shows the Shannon entropy of the channel when $N = 17$ bits.}%
%\label{fig:MInN}%
%\end{center}
%\end{figure}
\begin{figure}[htp]
  \begin{center}
    	 \includegraphics[height = 4.0in,width = 4.0in]{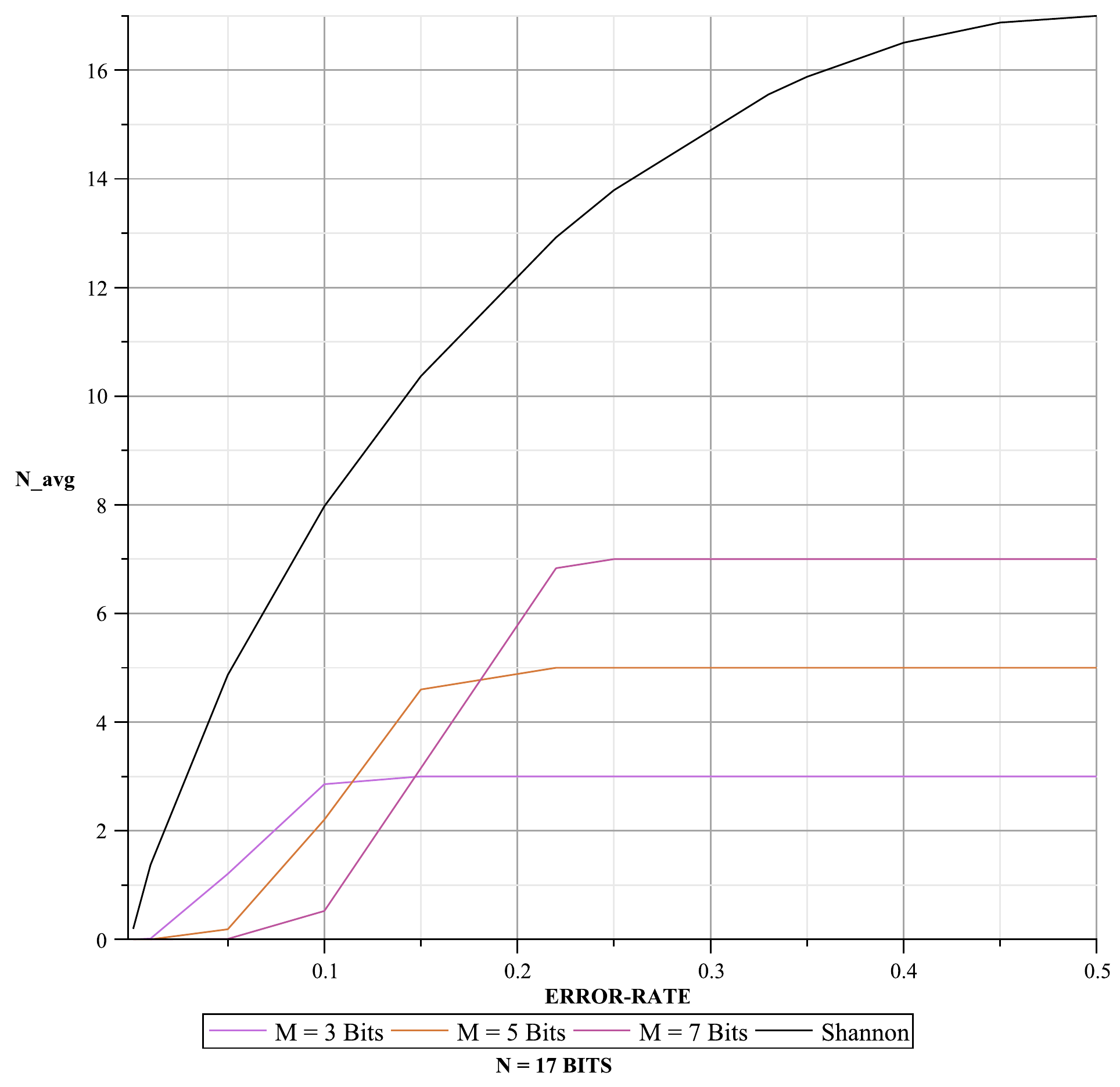}
  \end{center}
  \caption{$M$ Bits in $N$ Bits Encoding. We plot the channel error-rate $p$ on the X-axis and the average number of transmitted steganographic bits , $N_{avg}$, on the Y-axis.  Here $M$ refers to the length of the inner code while $N$ refers to the length of the entire block code.  The black curve shows the Shannon entropy of the channel when $N = 17$ bits.}
  \label{fig:MInN}
\end{figure}
$j$ bits of $S_{N}$ where $0 \leq j \leq N-M$.  When Alice decides to send no steganographic bits to Bob she applies random bit-flips to her $N$-block code with probability $q_{l}$, where $0 \leq l \leq N$.  Let $\CQ = \sum_{ij}\CY_{ij}$, be the total probability of sending a $M$ steganographic bits to Bob, and let $1-\CQ = \sum_{l}q_{r}$ be the total probability of transmitting no steganographic bits to Bob.  We would like to minimize the latter quantity or equivalently maximize $\CQ$.  For this scheme to be useful Bob must be able to decode Alice's messages, which means that the outer code should be able to correct up to $t_{s} = \left\lfloor \frac{d_{s}-1}{2}\right\rfloor$ errors.

The first set of linear equality constraints that Alice must satisfy is the probability distribution that must match the error-rate of the bit-flip channel.  These are:
\begin{eqnarray}
\label{eqn:MinN-1}
p_{0} & = & {N\choose 0}(1-p)^{N} = \CY_{00} + q_{0}~,\\
p_{1} & = & {N\choose 1}p(1-p)^{N-1} = \CY_{01}+\CY_{10} + q_{1}~, \\
p_{2} & = & {N\choose 2}p^{2}(1-p)^{N-2} = \CY_{20} + \CY_{11} + \CY_{02} + q_{2}~,\\
&& \vdots \nonumber \\
p_{N} & = & {N \choose N}p^{N} = \CY_{M(N-M)} + q_{N}~.
\end{eqnarray}
The second set of linear equality constraints is on the bits that comprise the inner-code:
\begin{equation}
\label{eqn:MinN-2}
\tilde{\CY}_{k} = \frac{1}{2^{M}} {M \choose k} \sum_{i = 0}^{M}\sum_{j = 0}^{N-M}\CY_{ij}~,
\end{equation}
where
\begin{equation}
\label{eqn:MinN-3}
\tilde{\CY}_{k} = \sum_{j = 0}^{N-M} \CY_{kj}~.
\end{equation}
Alice must also satisfy the following:
\begin{equation}
\label{eqn:MinN-4}
\sum_{ij}\CY_{ij} + \sum_{l}q_{l} = 1~.
\end{equation}
As in the previous optimization problems, Alice must satisfy the following linear inequality constraints:
\begin{eqnarray}
\label{eqn:MinN-5}
\CY_{ij} \geq 0, \forall i,j~, \\
q_{l} \geq 0, \forall l~.
\end{eqnarray}
Alice wishes to maximize the average number of steganographic bits that she can transmit to Bob.  So the objective function is:
\begin{equation}
\label{eqn:MinN-6}
N_{avg} = \sum_{ij}\CY_{ij}~.
\end{equation}
From Figure~\ref{fig:MInN} it is clear that the three-bit code outperforms both the five and seven bit codes initially, but is then outperformed by the higher bit codes when the channel bit-flip rate increases.  One can also infer from the plot (though we have not proved this) that with increasing block-size the average number of steganographic bits will reach the Shannon capacity of the channel.

The protocol above is for a bit-flip channel which has no intrinsic noise.  The flips that Eve observes are artificially induced by Alice and are incorporated in such a way as to match with Eve's belief of the error-rate of the channel.  For the inner-code above, Alice and Bob do not use an actual error-correcting code.  However, if there is intrinsic noise in the channel due to the environment, then they must use an error-correcting code for their inner-bits.  This could be an $[M, k_{s}]$ code, where $M \ll N$, and where now Alice instead of transmitting $M$ bits to Bob, can only transmit $k_{s}$ bits.  Clearly the transmission rate goes down, but that is the price Alice must pay for Bob to have any chance of decoding the message properly, and correctly acquiring the right steganographic bits.  We now proceed to give an exact formula for the amount of secret key consumed in each round of the protocol as a function of $p$ and the intrinsic noise $\delta{p}$.

\subsection{Key Consumption Rate}
\label{sec:KCR}
In this section we detail how many bits of the shared secret key Alice and Bob consume per block of code that Alice transmits to Bob.  We define this quantity as the key-consumption rate $\CK_{p}^{\delta{p}}$.  There are ${N\choose{M}}$ ways of choosing $M$ locations out of $N$ in an $N$ bit code-block.  So the total number of bits that Alice and Bob need to specify the subset $S_{M}$ which comprises the bits for
%\begin{figure}
%[ptb]
%\begin{center}
%\includegraphics[
%natheight=8.386600in,
%natwidth=8.973300in,
%height=3.5in,
%width=4.0in]%
%{KCR.png}%
%\caption{Key Consumption Rate.  We plot the channel error-rate $p$ on the X-axis and the key cosumption rate \textbf{KCR} on the Y-axis.  Here $\delta{p} = 0.001,0.005,0.01,0.05$, and $0.1$.}%
%\label{fig:KCR}%
%\end{center}
%\end{figure}
\begin{figure}[htp]
  \begin{center}
    	 \includegraphics[height = 4.0in,width = 4.0in]{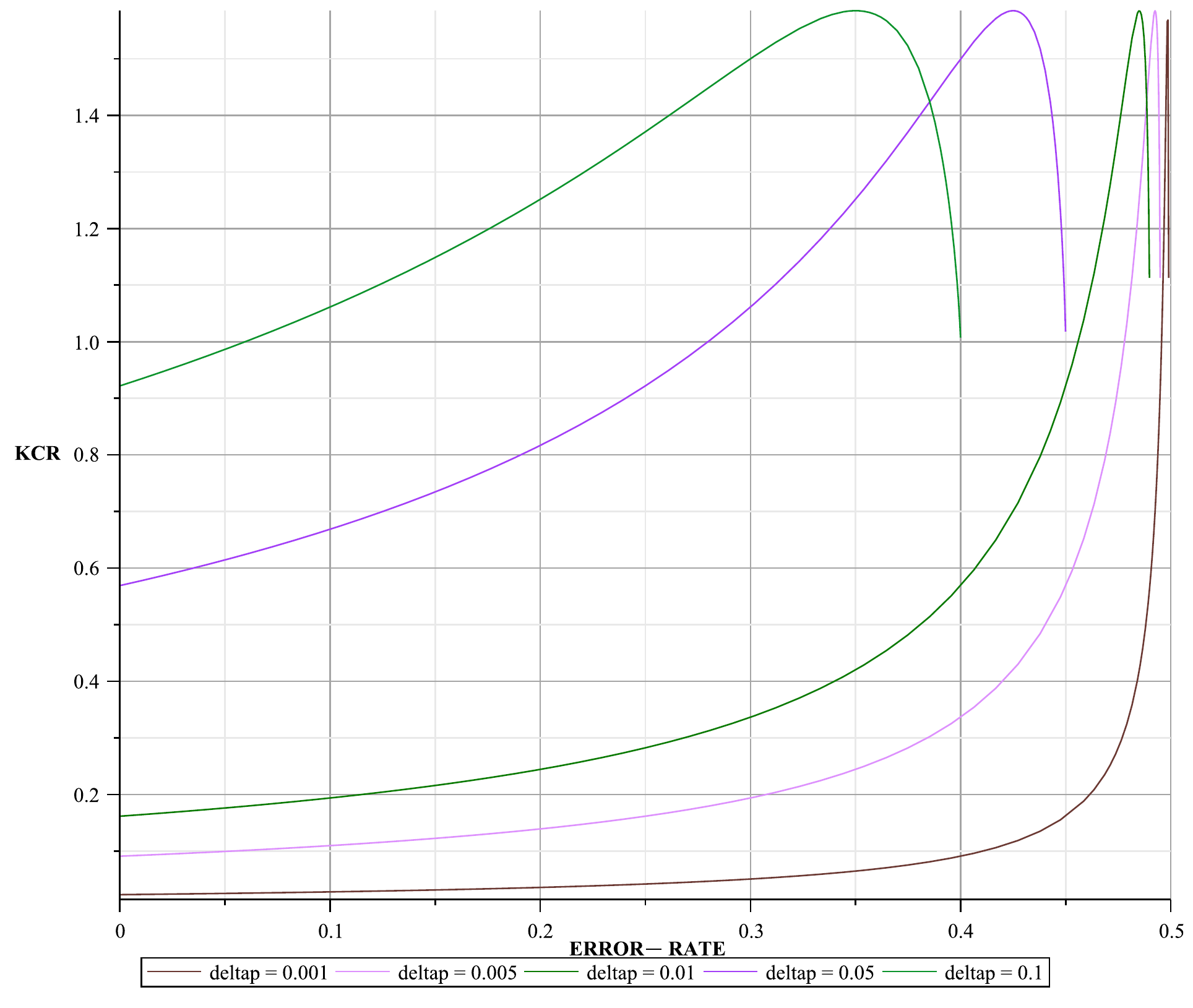}
  \end{center}
  \caption{Key Consumption Rate. We plot the channel error-rate $p$ on the X-axis and the key consumption rate \textbf{KCR} on the Y-axis.  Here $\delta{p} = 0.001,0.005,0.01,0.05$, and $0.1$.}
  \label{fig:KCR}
\end{figure}
the inner code is $\log{N \choose M}$.  They also need $M$ bits of key for the one-time pad to encrypt their steganographic bits.  So the total bit-length $\left| \CK \right|$ of the shared secret key is:
\begin{equation}
\label{eqn:kcr1a}
\left| \CK \right| = \log{N \choose M} + M~.
\end{equation}
After applying Stirling's approximation to the right hand side of Equation~(\ref{eqn:kcr1a}), we get the following:
\begin{equation}
\label{eqn:kcr-1}
\left| \CK \right| \approx N\log N - M \log M - (N-M)\log(N-M) + M~.
\end{equation}
Replace $M$ in Equation~(\ref{eqn:kcr-1}) by $M \approx 2Nq$, where $q$ is the rate at which Alice applies errors to the $N$-bit code-block in order to emulate the actual probability distribution of the binary symmetric channel with bit-flip rate $p$.  We get the following:
\begin{eqnarray}
\label{eqn:kcr-2}
\left| \CK \right| & \approx & N\log N -2Nq\log(2Nq) - (N-2Nq)\log(N-2Nq) + 2Nq \nonumber \\
& \approx & N \log N - 2Nq\left(\log 2 + \log N + \log q\right) - N(1-2q)\log N(1-2q) + 2Nq \nonumber \\
& \approx & N \log N - 2Nq - 2Nq\log N -2Nq \log q \nonumber \\
& - & N(1-2q)\log N - N(1-2q)\log(1-2q) + 2Nq~.
\end{eqnarray}
Now collect all the $\log N$ terms in Equation~(\ref{eqn:kcr-2}).
\begin{eqnarray}
\label{eqn:kcr-3}
&& \approx \left(N - 2Nq - N + 2Nq\right)\log N \nonumber \\
&& - 2Nq - 2Nq\log q - N\log(1-2q) + 2Nq\log(1-2q) + 2Nq~.
\end{eqnarray}
The first term in Equation~(\ref{eqn:kcr-3}) is zero. Moreover the two $2Nq$ terms cancel each other.  We proceed further with the simplification below:
\begin{eqnarray}
\label{eqn:kcr-4}
\left| \CK \right| & \approx & 2Nq \bigg(\log(1-2q) - \log q - \frac{1}{2q}\log(1-2q)\bigg) \nonumber \\
& \approx & 2Nq \bigg(\log(1-2q) - \log q - \log(1-2q)^{\frac{1}{2q}}\bigg) \nonumber \\
& \approx & 2Nq \left(\log(1-2q) - \left(\log q + \log(1-2q)^{\frac{1}{2q}}\right)\right) \nonumber \\
& \approx & 2Nq \left(\log(1-2q) - \left(\log q(1-2q)^{\frac{1}{2q}}\right)\right) \nonumber \\
& \approx & 2Nq\bigg(\log \frac{(1-2q)}{q(1-2q)^{\frac{1}{2q}}}\bigg) \nonumber \\
& \approx & N\bigg(2q\log \frac{(1-2q)}{q(1-2q)^{\frac{1}{2q}}}\bigg)~.
\end{eqnarray}
We want to express the key-consumption rate in terms of the error-rate $p$ of the channel, and the extra noise $\delta p$.  We can substitute $q = \frac{\delta{p}}{1-2p}$ in Equation~(\ref{eqn:kcr-4}):
\begin{equation}
\label{eqn:kcr-5}
\left| \CK \right| \approx N\Bigg(\frac{2\delta{p}}{1-2p}\log{\frac{\beta}{\big(\frac{\delta{p}}{1-2p}\big)\beta^{\big(\frac{1-2p}{2\delta{p}}\big)}}}\Bigg)~.
\end{equation}
where $\beta = \frac{1-2(p+\delta{p})}{1-2p}$.  So the key-consumption rate $\CK_{p}^{\delta{p}} = \frac{\left| \CK \right|}{N}$ is:
\begin{equation}
\label{eqn:kcr-6}
\fbox{$\CK_{p}^{\delta{p}} = \Bigg(\frac{2\delta{p}}{1-2p}\log{\frac{\beta}{\big(\frac{\delta{p}}{1-2p}\big)\beta^{\big(\frac{1-2p}{2\delta{p}}\big)}}}\Bigg)$}
\end{equation}
From Figure~(\ref{fig:KCR}) one can see that as the error-rate increases Alice and Bob consume key at a faster rate.  They need more bits to specify the subset $S_{M}$ as Alice needs to flip more bits in order to emulate the additional noise.  For a higher $\delta{p}$ our intuition tells us that Alice and Bob ought to use a longer key-length, as Alice would be choosing more bits to flip (on the average).

\section{Concluding Remarks}
We presented two different protocols on how to hide classical bits by first encoding Alice's steganographic information into the syndromes of an error-correcting code and then later using the codewords themselves to encode the hidden information.  Both techniques have their pros and cons.  The syndrome encoding quickly fails if we try to determine the optimal number of stego bits that Alice would like to send to Bob.  The second technique of encoding stego bits as codewords of an error-correcting code gives gives us a better handle on the steganographic rate and the rate of key consumption.  While we have presented these two encodings for a finite sized code block, we would like to extend these ideas to the case where $N$ is large.  In Chapter~\ref{chap:quantumsteg} we present results on the steganographic rate using syndrome encoding in the asymptotic sense.  We have pushed the asymptotic results to a later chapter because we need to introduce the notion of the diamond norm.  We use this norm to define the security of our quantum steganographic protocols.  Our main aim in constructing a model of classical steganography was to gain intuition on how we can extend our classical model to a quantum one.  In the next chapter we present our formalism for quantum steganography.

%%%%%%%%%%%%%%%%%%%%%%%%%%% NEW CHAPTER %%%%%%%%%%%%%%%%%%%%%%%%%%%%%%%%%%%%%%%%%%%%%%%%%%%%%%%%%%%%%%%%%%%%%%%%
\chapter{Quantum Steganography}
\label{chap:quantumsteg}
\begin{saying}
Marvelous, what ideas the young people have these days. \\
But I don't believe a word of it.\\
--- \textit{Albert Einstein}
\end{saying}
\lettrine{I}n this chapter we introduce our formalism for quantum steganography.  It was important to establish an intuition, a deep understanding of a classical model of steganography before we could proceed to cast the problem in its quantum form.  Considering the number of papers being published in quantum information science each day, it came as a surprise to us that only four papers had been published in this subject area!  Before we present our protocols for quantum steganography, we provide a brief history of the work that has been done in quantum steganography by other researchers.

Julio Gea-Banacloche introduced the idea of hiding secret messages in the form of error syndromes by deliberately applying correctable errors in a quantum state encoded using the three-bit repetition code, $[[3,1,3]]$~\cite{Banacloche}.  He also discussed the idea of ``reverse encoding'' to transmit a quantum steganographic message which is obtained by applying a superposition of bit-flip errors on the codeword of the three-bit code.  If the receiver Bob knows exactly which error-correcting code was used to encode the quantum state he can read the hidden message in the error syndromes of the code.  Eve can monitor the channel and can perform measurements to diagnose errors.  The downside to this protocol is that if Eve knows the exact code that Alice and Bob are using then she can measure the codewords and undo the errors introduced by Alice and thereby transmit the wrong steganographic message to Bob.  Gea-Banacloche suggested a way around this problem by insisting that Alice and Bob share a secret bit.  Alice uses this secret bit to encode her codewords in either of two mutually non-orthogonal bases.  If Eve now proceeds to measure the codewords, she will be able to diagnose the correct errors only half of the time, and half of the time she will disturb the state of the codewords which Bob will be able to detect and thereby abort the protocol.  One can strengthen this protocol by insisting Alice choose from a larger set of mutually non-orthogonal states rather than just two.  In such a case Alice and Bob will have to share a longer secret key, instead of a single bit.  One of the central notions in steganography is that a message should look innocent to Eve.  We tackle this issue extensively in this thesis, but which Gea-Banacloche does not address in his paper.

Curty~et.~al. propose three different quantum steganographic protocols~\cite{curty-santos2004}.  In the first protocol they hide bits 0 or 1, by replacing a noisy qubit with $\ket{+}$ or $\ket{-}$ respectively.  In a modified protocol they hide two classical bits into a noisy qubit using super-dense coding.  In a third protocol they teleport a steganographic qubit by communicating the classical bits that Bob needs over a classical steganographic channel in order to perform the correct measurement so that he can retrieve the qubit.  They do not address the issue of an innocent message, or give key-consumption rates.  Natori provides a very rudimentary treatment of quantum steganography which is once again just a slight modification of super-dense coding~\cite{natori2006}.  Martin introduced the notion of steganographic communication of quantum information in~\cite{martin2008}.  His protocol is a modification of Bennett and Brassard's quantum-key distribution protocol (QKD).  He hides a steganographic channel in the QKD protocol.

We can imagine a day when quantum networks become ubiquitous.  Quantum steganography could provide a way to embed a stego message into some quantum data of interest which Alice could then send to Bob.  Alice could use a few steganographic qubits located at random locations as a ``watermark.''  Both of them could exploit quantum correlations such as entanglement between various steganographic qubits to determine whether the quantum information was tampered by Eve as it passed through a quantum channel.  An advantage that steganography has over standard encryption schemes is that while users or systems could be communicating over a network through steganography, this communication could go on for a very long time, completely undetected.  In an encryption protocol this is not necessarily the case.  Quantum steganography has far-reaching consequences, and it may give us a measure of security which classical steganography may not be able to parallel.

We begin by showing how one can hide quantum information in the noise of a depolarizing channel, using a shared classical secret key between Alice and Bob.  In our first quantum steganographic protocol the channel is intrinsically noiseless (i.e., all noise is controlled by Alice), and in the second case the channel has its own intrinsic noise (not controlled by Alice and Bob).  We calculate the amount of secret key consumed.  We will later present a quantum steganographic protocol for general quantum channels.  We will also look at the question of whether Alice and Bob can send a finite amount of hidden information, or can actually communicate at a nonzero asymptotic rate (given an arbitrarily large secret key).  This depends on Eve's knowledge of the physical channel, and Alice and Bob's knowledge of Eve's expectations.  Finally we will address the question of security.  This is two-fold: first, can Eve detect that a secret message has been sent, and second, can she read the message?

%%%%%%%%%%%%%%%%%%%%%%%% THE QUANTUM DEPOLARIZING CHANNEL %%%%%%%%%%%%%%%%%%%%%%%%%%%%%%%%%%%%%%%%%%%%%%%%%%%%%
\section{The Depolarizing Channel}
The quantum analog of the classical \textit{binary symmetric channel} (BSC) is the {\it depolarizing channel} (DC) which is one of the most widely used quantum channel models:
\begin{equation}
\rho \rightarrow \CN\rho = (1-p) \rho + \frac{p}{3} X\rho X + \frac{p}{3} Y\rho Y + \frac{p}{3} Z\rho Z~.
\label{DC}
\end{equation}
That is, each qubit has an equal probability of undergoing an $X$, $Y$, or $Z$ error.  Applying this channel repeatedly to a qubit will map it eventually to the maximally mixed state $I/2$.  We can rewrite this channel in a different but equivalent form:
\begin{equation}
\CN = (1-4p/3) \CI + (4p/3) \CT~.
\label{twirl}
\end{equation}
where  $\CI\rho = \rho$ and $\CT\rho = (1/4)(\rho + X\rho X + Y\rho Y + Z\rho Z)$.  The operation $\CT$ is {\it twirling}:  it takes a qubit in any state $\rho$ to the maximally mixed state $I/2$.  If we rewrite the channel in this way, instead of applying $X$, $Y$, or $Z$ errors with probability $p/3$, we can think of removing the qubit with probability $4p/3$, and replacing it with a maximally mixed state.  This picture makes the steganographic protocol more transparent.  We will first assume that the actual physical channel between Alice and Bob is noiseless.  All the noise that Eve sees is due to deliberate errors that Alice applies to her codewords.
\begin{enumerate}
\item Alice encodes a covertext of $k_c$ qubits into $N$ qubits with an $[[N,k_c]]$ quantum error-correcting code (QECC).
\item From (\ref{twirl}), the DC would maximally mix $Q$ qubits with probability $p_Q$ where
\begin{equation}
p_Q = {N \choose Q} (4p/3)^Q (1-4p/3)^{N-Q}~.
\label{binomial2}
\end{equation}
For large $N$, Alice can send $M = (4/3)pN(1-\delta)$ stego qubits, where $1 \gg \delta\gg$ \\
$\sqrt{(1-4p/3)/(4p/3)N}$.  (The chance of fewer than $M$ errors is negligibly small.)
\item Using the shared random key (or shared ebits), Alice chooses a random subset of $M$ qubits out of the $N$, and swaps her $M$ stego qubits for those qubits of the codeword.  She also replaces a random number $m$ of qubits outside this subset with maximally mixed qubits, so that the total $Q=M+m$ matches the binomial distribution (\ref{binomial2}) to high accuracy.
\item Alice ``twirls'' her $M$ stego qubits using $2M$ bits of secret key or $2M$ shared ebits.  To each qubit she applies one of $I$, $X$, $Y$, or $Z$ chosen at random, so $\rho \rightarrow \CT\rho$.  To Eve, who does not have the key, these qubits appear maximally mixed.  (Twirling can be thought of as the quantum equivalent of a one-time pad.)
\item Alice transmits the codeword to Bob.  From the secret key, he knows the correct subset of $M$ qubits, and the one-time pad to decode them.
\end{enumerate}
This protocol transmits $(4/3)pN(1-\delta)$ secret qubits from Alice to Bob~\ref{fig:depolprotocol}.
\begin{figure}
[ptb]
\begin{center}
\includegraphics[
natheight=8.386600in,
natwidth=8.973300in,
height=3.5in,
width=4.0in]%
{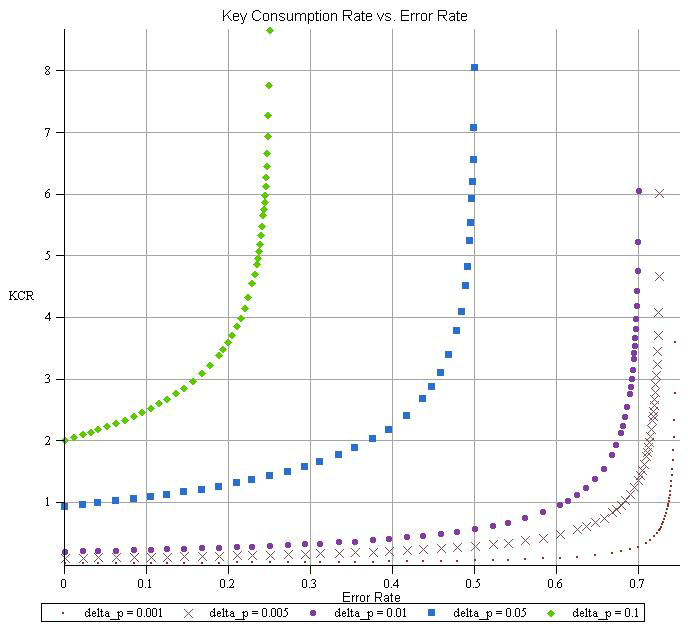}%
\caption{We plot the key consumption rate (KCR) as a function of the error-rate p of the channel.}%
\label{fig:quantumKCR}%
\end{center}
\end{figure}
If the channel contains intrinsic noise, Alice will first have to encode her $k_s$ stego qubits in an $[[M,k_s]]$ QECC, swap those $M$ qubits for a random subset of $M$ qubits in the codeword, and apply the twirling procedure.  This twirling does not interfere with the error-correcting power of the QECC if Bob knows the key.  Assuming the physical channel is also a DC with error rate $p$, and that Alice emulates a DC with error rate $q$, the effective channel will appear to Eve like a DC with error rate $p + q(1-4p/3) \equiv p + \delta p$.  The rate of transmission $k_s/N$ will depend on the rate of the QECC used to protect the stego qubits.  For a BSC this would be $(1-\delta)(1-h(p))\delta p/(1-2p)$.  However, for most quantum channels (including the DC) the achievable rate is not known.

%\subsection*{Key usage}
The secret key is used at two points in these protocols.  First, in step 3 Alice chooses a random subset of $M$ qubits out of the $N$-qubit codeword.  There are $C(N,M)$ subsets, so roughly $\log_2 C(N,M)$ bits are needed to choose one.  Next, in step 4, $2M$ bits of key are used for twirling.  This gives us
\begin{equation}
n_k \approx \log_2 \left(\begin{array}{c} N \\ M \end{array}\right) + 2M
\label{qkeyconsume}
\end{equation}
bits of secret key used.  Define the key consumption rate $\CK=n_k/N$ to be the number of bits of key consumed per qubit that Alice sends through the channel.
We use $M \approx 4qN/3$ and $q \approx \delta p/(1-4p/3)$ to express $\CK$ in terms of $p$, $\delta p$, and $N$~\ref{fig:KCR}:
\begin{equation}
\label{KCR}
\CK \approx \log_{2}\left[(4/\beta)^{\beta}(1-\beta N)^{\beta-1}\right]~,\ \ \ \beta \equiv 4\delta{p}/(3-4p)~.
\end{equation}
Alice can consume fewer bits of key if Bob and she have access to a source that produces maximally mixed states.  This would allow them to bypass the twirling procedure.
%\section*{More general channels}
The protocols given above perform well in emulating a depolarizing channel.  However, there are far more general channels than these, and the protocols may not work well, or at all, in these cases.  If one has a channel that can be written
\begin{equation}
\rho \rightarrow \CN\rho = (1-p_T + p_E) \CI \rho + p_T \CT\rho + p_E \CE\rho
\label{general_channel}
\end{equation}
where $\CE$ is an arbitrary error operation, one can still use the above protocols to hide approximately $p_T N$ stego bits or qubits, while generating $p_E N$ random errors of type $\CE$.  But for some channels, $p_T$ may be very small or zero.  How should we proceed?  Moreover, hiding stego qubits locally as apparently maximally-mixed qubits sacrifices some potential information.  The {\it location} of the error---that is, the choice of the subset holding the errors---could also be used to convey information, potentially increasing the rate and reducing the amount of secret key or shared entanglement required.

A different approach is instead to encode information in the {\it error syndromes}.  For simplicity, we consider the case when $N$ is large.  In this case, it suffices to consider only {\it typical errors}.  We begin with the case where the physical channel is noise-free.

For large $N$, almost all (in probability, $1-\epsilon$) combinations of errors on the individual qubits will correspond to one of the set of \textit{typical errors}.  There are roughly $2^{sN}$ of these, and their probabilities $p_e$ are all bounded within a range $2^{-N(s+\delta)} \le p_e \le 2^{-N(s-\delta)}$.  The number $s$ is the entropy of the channel on one qubit; for the BSC $s=h(p)=-p\log_{2}p-(1-p)\log_{2}(1-p)$, and for the DC $s = -(1-p)\log_{2}(1-p)-p\log_{2}p/3$.  We label the typical error operators $E_0, E_1, \ldots, E_{2^{sN}-1}$, and their corresponding probabilities are $p_j$.  A good choice of QECC for the cover text will be able to correct all these errors.  We make the simplifying assumption that the QECC is {\it nondegenerate}, so each typical error $E_j$ has a distinct error syndrome  $s_j$.

Ahead of time, Alice and Bob partition the typical errors into $C$ roughly equiprobable sets $S_k$, so that
\begin{equation}
\sum_{E_j\in S_k} p_j \approx \frac{1}{C},~\forall k~.
\end{equation}
As far as possible, the errors in a given set should be chosen to have roughly equal probabilities.  The maximum of $C$ is roughly $C\approx 2^{N(s-\delta)}$, and $k=0,\ldots,C-1$.  We can now present a new quantum steganographic protocol, using error syndromes to store information.
\begin{enumerate}
\item Alice prepares $k_c$ qubits of cover text in a state $\ket{\psi_c}$.
\item Alice's secret message is a string of $\log_{2} C \approx N(s-\delta)$ qubits, in a state
\begin{equation}
\ket{\psi_s} = \sum_{k=0}^{C-1} \alpha_k \ket{k}~.
\label{stego_state}
\end{equation}
She ``twirls'' each qubit of this string, using $2N(s-\delta)$ bits of the secret key or shared ebits, to get a maximally mixed state.  To this, she appends $N-k_c-(s-\delta)N$ extra ancilla qubits in the state $\ket0$ to make up a total register of $N-k_c$ qubits.
\item Using the shared secret key, Alice chooses from each set $S_k$ a typical error $E_{j_k}$ with syndrome $s_{j_k}$.  She applies a unitary $U_S$ to the register of $N-k_c$ qubits, that maps \\
$U_S\left(\ket{k}\otimes\ket0^{\otimes N-k_c-(s-\delta)N}\right) = \ket{s_{j_k}}$.  She appends this register to the cover qubits in state $\ket{\psi_c}$, then applies the encoding unitary $U_E$.  Averaging over the secret key, the resulting state will appear to Eve like $\rho \approx \sum_{j=0}^{2^{nS}-1} p_j E_j \ket{\Psi_c}\bra{\Psi_c} E_j^\dagger$, which is effectively indistinguishable from the channel being emulated acting on the encoded cover text.
\item Alice sends this codeword to Bob.  If Eve examines its syndrome, she will find a typical error for the channel being emulated.
\item  Bob applies the decoding unitary $U_D = U_E^\dagger$, and then applies $U_S^\dagger$ (which he knows using the shared secret key).  He discards the cover text and the last $N-k_c-(s-\delta)N$ ancilla qubits, and undoes the twirling operation on the remaining qubits, again using the secret key.  If Eve has not measured the qubits, he will have recovered the state encoded by Alice.
	\end{enumerate}
This protocol may easily be used to send classical information by using a single basis state rather than a superposition like (\ref{stego_state}).  The steganographic transmission rate $\CR$ is roughly $\CR\approx s-\delta\rightarrow s$.  The rate of transmission $s$ is higher than the rate $4p/3$ of our first protocol.  This protocol used $2N(s-\delta)$ bits of secret key (or ebits) for twirling in step 2, and roughly $N\delta$ bits of secret key in choosing representative errors $E_{j_k}$ from each set $S_k$ in step 3.  So the key rate is roughly $\CK\approx 2s-\delta\rightarrow 2s$, better than the first protocol in key usage per stego qubit transmitted.  However, this encoding is much trickier in the case where the channel contains intrinsic noise.

\begin{figure}
[ptb]
\begin{center}
\includegraphics[
natheight=11.0in,
natwidth=12.5in,
height=5.0in,
width=5.0in
]%
{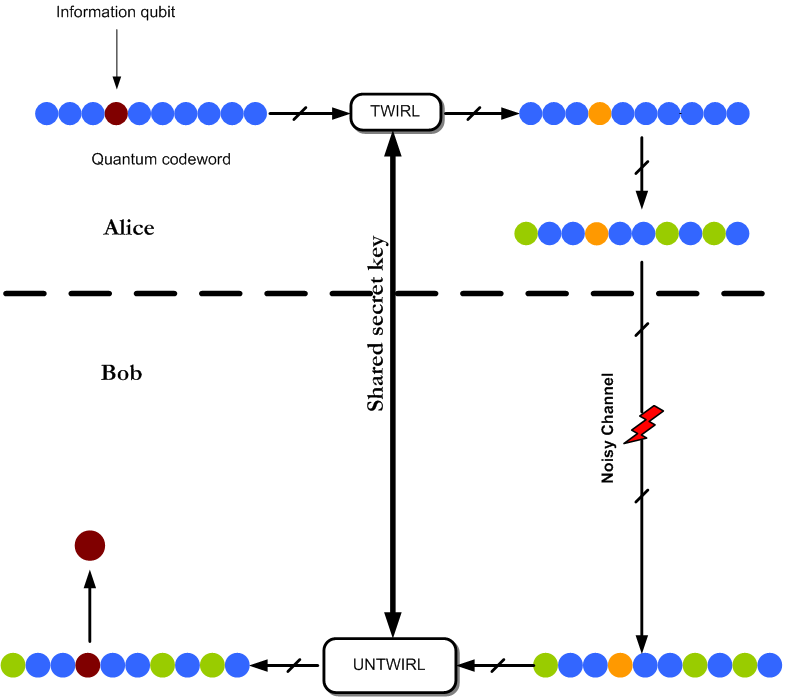}
\caption{Quantum Steganographic Protocol.  Alice hides her information qubit (solid brown circle) by swapping it in with a qubit of her quantum codeword.  She uses her shared secret key with Bob to determine which qubit to swap.  She uses the shared key again to twirl the information qubit.  She further applies random depolarizing errors to the rest of the qubits of the codeword (shown in green).  She sends the codeword through a depolarizing channel to Bob who uses the shared secret to correctly apply the untwirling operation, followed by locating and swapping out Alice's original information qubit.}
\label{fig:depolprotocol}
\end{center}
\end{figure}

%\subsection*{Channels with intrinsic noise}
In principle this quantum steganographic protocol can be used when the channel contains noise.  The steganographic qubits are first encoded in a QECC to protect them against the noise in the channel.  In practice, for many channels this can be difficult:  the effects of errors on the space of syndromes look quite different from a usual additive error channel.  Also, unlike the depolarizing channel, general channels when composed together may change their type.  However, by drawing on codes with suitable properties, the problem of designing steganographic protocols for general channels may be simplified.  We discuss a simple example in the supporting online material (SOM), but the solution for a general channel is a problem for future work.

%\section*{Security}
What is the standard of security for a stego protocol?  There are two obvious considerations.  First, if Eve becomes suspicious, can she read the message?  By the use of one-time pads or twirling, Alice and Bob can prevent this from happening.

The more important question is, can Alice and Bob avoid arousing Eve's suspicions in the first place?  In order to do this, the messages that Alice sends must emulate as closely as possible the channel that Eve expects.  We can make this condition quantitative.  Let $\CE_C$ be the channel on $N$ qubits that Eve expects, and let $\CE_S$ be the effective channel that Alice and Bob produce with their steganographic protocol.  Then the protocol is secure if $\CE_S$ is $\epsilon$-close to $\CE_C$ in the diamond norm $\left\|\CE_S - \CE_C\right\|_\diamond \le \epsilon$ for some small $\epsilon>0$.  The diamond norm is directly related to the probability for Eve to distinguish $\CE_C$ from $\CE_S$ under ideal circumstances (i.e., when she can control both the inputs and the outputs), and therefore puts an upper bound on her ability to distinguish them in practice.

For a simple example, the difference between two DCs applied to $N$ qubits has norm
\begin{equation}
\label{dc_diamond}
\left\|\CN_r^{\otimes N} - \CN_{p}^{\otimes N}\right\|_{\diamond} = \sum_{j=0}^N \left(\begin{array}{c} N \\ j \end{array}\right) \left| r^j(1-r)^{N-j} - p^j(1-p)^{N-j}\right| ,
\end{equation}
where $p$ is the error-rate of the channel Eve expects and $r=p+\delta p$ is the error-rate of the steganographic channel that emulates Eve's expected channel.  If we make $\delta p < \epsilon\sqrt{p(1-p)/N}$ then we can make this norm as small as we like, while communicating $O(\delta pN) = O(\epsilon\sqrt{N})$ secret qubits.  This indicates that even if Eve has exact knowledge of the channel, Alice and Bob can in principle send an arbitrarily large (but finite) amount of information without arousing Eve's suspicion, by choosing a sufficiently small $\delta p$ and large $N$.  If Eve's knowledge of the channel is imperfect, Alice and Bob can do even better, communicating steganographic information at a nonzero rate.  If Eve is constantly monitoring the channel over a long period of time, and if she has exact knowledge of the channel then she will eventually learn that Alice and Bob are communicating with each other steganographically.  Moreover, with constant measurement Eve can disrupt the superpositions of the steganographic qubits and prevent any information from ever reaching Bob, effectively flooding the quantum channel with noise.

If Alice and Bob have shared ebits, they can perform measurements on each of their halves and distill correlated random bits.  Moreover, with shared ebits Alice can send her quantum information to Bob via quantum teleportation by sending only classical bits through the channel.  These classical bits are the result of her measurement on her half of the ebits and her stego qubits.  To Eve who may be monitoring the channel, these bits will look maximally mixed (random).  For her to change the outcome of what Bob receives on his end, Eve would have to disrupt the bits.  So if Eve is measuring the channel continuously, Alice and Bob can still send quantum information to each other using their shared ebits.

\section{The Diamond Norm}
\label{sec:diamondnorm}
In this section we gather the definition of the diamond norm and some of its relevant properties to derive the norm of the difference between $N$ uses of two binary-symmetric channels (BSC) and two depolarizing channels (DC).  We refer the reader to John Watrous's lecture notes for the definition and properties of the diamond norm~\cite{Watrous}.  As mentioned in the main text the diamond norm give us a measure of how ``close'' or similar two channels can be when they transform an arbitrary density matrix from one Hilbert space to another.  More formally let $\CN$ be some arbitrary super-operator, and let $\CN:L\left(\CV\right)\rightarrow L\left(\CW \right)$, where $L\left(.\right)$ is a space of linear operators on the Hilbert spaces $\CV$ and $\CW$.  Then one can define the diamond-norm of $\CN$ as:
\begin{equation}
\label{eqn:qsecurity-2}
\left\|\CN\right\|_{\diamond} \equiv \left\|I_{L\left(\CV\right)} \otimes \CN\right\|_{tr}~,
\end{equation}
where $\left\|\CN\right\|_{tr}$ is defined as:
\begin{equation}
\label{eqn:qsecurity-3}
\left\|\CN\right\|_{tr} \equiv \max\left\{\left\|\CN\left(O\right)\right\|_{tr} : O \in L\left(\CV \right),\left\|O\right\|_{tr}=1 \right\}~.
\end{equation}
The maximization in (\ref{eqn:qsecurity-3} is over all density matrices.  When the Hilbert space is infinite dimensional we take the supremum of the set defined in Definition(\ref{eqn:qsecurity-3}).
\subsection*{Binary Symmetric Channel}
Let $0 < p < 1/2$ be the rate at which Alice flips the qubits of her codeword.  Let $r \equiv p + \delta{p}$ be the rate at which the BSC flips qubits, where $\delta{p}$ is some additional noise which is not under the control of either Alice or Bob.  We assume that $0 < p < r < 1/2$ because at $p = 1/2$ the channel has zero capacity to send information and $p > 1/2$ means that more qubits are being flipped which is unnatural for this channel.  For a single qubit $(N = 1)$ let $\CN_{p}$ be the BSC that Alice applies to an arbitrary single-qubit density operator $\rho$:
\begin{equation}
\label{eqn:alicechannel}
\CN_{p}\rho = (1-p)\rho + pX \rho X~,
\end{equation}
and let $\CN_{r}$ be the actual BSC
\begin{equation}
\label{eqn:BSC}
\CN_{r}\rho = (1-r)\rho + rX \rho X~.
\end{equation}
We can now express the difference of the two channels as:
\begin{equation}
\label{eqn:channeldiff}
(\CN_{r} - \CN_{p})\rho = (p-r)\rho + (r-p)X \rho X
\end{equation}
We can express the diamond norm of the difference of the channels $\CN_{p}$ and $\CN_{r}$ as:
\begin{eqnarray}
\label{eqn:diamondnormdiff}
 \left\|\CN_{r} - \CN_{p}\right\|_{\diamond} & = & \underset{\rho}{\operatorname{\max}}\left\|(I \otimes (\CN_{r} - \CN_{p}))\rho\right\|_{tr} \\
& = & (r - p) \underset{\rho}{\operatorname{\max}}\left\|(I \otimes I)\rho (I \otimes I) - (I \otimes X)\rho(I \otimes X)\right\|_{tr}~.
\end{eqnarray}
When we substitute $\rho = \psi \otimes \ket{0}\bra{0}$ ($\psi$ is some arbitrary density operator) in the above equation we achieve the maximum.
\begin{eqnarray}
\label{eqn:singlequbitdiff}
\left\|\CN_{r} - \CN_{p}\right\|_{\diamond} & = & (r-p)\left\|\psi \otimes \ket{0}\bra{0} - \psi \otimes \ket{1}\bra{1}\right\|_{tr}  \\
\label{eqn:singqubitdiff1}
 & \leq & \left\|\psi \otimes \ket{0}\bra{0}\right\|_{tr} + \left|-1\left\|\right|\psi \otimes \ket{1}\bra{1}\right\|_{tr} \\
\label{eqn:singqubitdiff2}
 & = & (r-p)\left\|\psi\right\|_{tr}\left\|\ket{0}\bra{0}\right\|_{tr} + \left\|\psi\right\|_{tr}\left\|\ket{1}\bra{1}\right\|_{tr} \\
\label{eqn:singqubitdiff3}
 & = & (r-p)(1 + 1) \\
\label{eqn:singqubitdiff4}
 & = & 2(r-p) \\
\label{eqn:singqubitdiff5}
 & = & 2(p + \delta{p} - p) \\
\label{eqn:singqubitdiff6}
 & = & 2\delta{p}.
\end{eqnarray}
In Equation~(\ref{eqn:singqubitdiff1}) we use the triangle inequality and in Equation~(\ref{eqn:singqubitdiff2}) we use the fact that for any two linear operators $A$ and $B$, the trace norm of their tensor product is equal to the product of their trace norms, i.e., $\left\|A \otimes B\right\|_{tr} = \left\|A\right\|_{tr}\left\|B\right\|_{tr}$.  We would like an expression for the optimal probability to correctly distinguish two channels.
\begin{equation}
\label{eqn:p_opt}
P_{opt} = \frac{1}{2} + \frac{1}{4}\left\|\CN_{r} - \CN_{p}\right\|_{\diamond}~.
\end{equation}
So for a single-qubit use
\begin{equation}
\label{eqn:p_opt1}
P_{opt} = \frac{1}{2}(1+\delta{p})~.
\end{equation}

For the case where we have two qubits, we can write Alice's BSC as:
\begin{equation}
\label{eqn:doubleBSC}
(\CN_{p} \otimes \CN_{p})\rho = (1-p)^2\rho + p(1-p)X_{1}\rho X_{1} + p(1-p)X_{2}\rho X_{2} + p^{2}X_{1}X_{2}\rho X_{1}X_{2}~,
\end{equation}
where $X_{1} \equiv X \otimes I$ and $X_{2} \equiv I \otimes X$, and $X_{1}X_{2} \equiv X \otimes X$.  We can similarly calculate $\CN_{1} \otimes \CN_{1}$.  We can now write the difference between the two channels as:
\begin{align*}
\label{eqn:doublediff}
(\CN_{r} \otimes \CN_{r}-\CN_{p} \otimes \CN_{p})\rho & = (r^2-2r+2p-p^2) \\
& + (r-r^2-p+p^2)(X_{1}\rho X_{1} + X_{2}\rho X_{2}) \\
& + (r^2 - p^2)X_{1}X_{2}\rho X_{1}X_{2}~.
\end{align*}
The diamond norm of the difference between two BSC on two qubits can be expressed as:
\begin{equation}
\label{eqn:twoqubitmax}
\left\|\CN_{r} \otimes \CN_{r} - \CN_{p}\otimes \CN_{p}\right\|_{\diamond}  =  \underset{\rho}{\operatorname{\max}}\left\|(I \otimes (\CN_{r} \otimes \CN_{r} - \CN_{p}\otimes \CN_{p}))\rho\right\|_{tr}~.
\end{equation}
We use a similar construction from the single-qubit case to maximize the right side of Equation~(\ref{eqn:twoqubitmax}).  Letting $\rho = \psi \otimes \ket{00}\bra{00}$ we get:
\begin{equation}
\label{eqn:doublediff1}
\left\|\CN_{r} \otimes \CN_{r} - \CN_{p}\otimes \CN_{p}\right\|_{\diamond}  = \left|(1-r)^2-(1-p)^2\right| + 2\left|r(1-r)-p(1-p)\right| + \left|r^2-p^2\right|~.
\end{equation}
Given our constraints that $0 < p < r < 1/2$, the first term on the right side of Equation~(\ref{eqn:doublediff1}) is negative while the second and third terms are positive.  This give us:
\begin{eqnarray}
\label{eqn:doublediff2}
\left\|\CN_{r} \otimes \CN_{r} - \CN_{p}\otimes \CN_{p}\right\|_{\diamond} & = & 2(r-p)(2-r-p) \\
& = & 2\delta{p}(2-2p-2\delta{p})~.
\end{eqnarray}
So in the double-qubit case $P_{opt}$ is:
\begin{equation}
\label{eqn:p_opt3}
P_{opt} = \frac{1}{2}(1 + \delta{p}(2-2p-2r))~.
\end{equation}
If we observe Equation(~\ref{eqn:doublediff1}) carefully we find that the terms are distributed binomially.  For the case where we have $N$ qubits, we can use $\rho = \psi \otimes \ket{00\cdots 0}\bra{00\cdots 0}$ to maximize the diamond norm for $N$ uses of BSC to get:
\begin{equation}
\label{eqn:diamond_N}
\left\|\CN_r^{\otimes N} - \CN_{p}^{\otimes N}\right\|_{\diamond} = \sum_{j=0}^N \left(\begin{array}{c} N \\ j \end{array}\right) \left| r^j(1-r)^{N-j} - p^j(1-p)^{N-j}\right|~.
\end{equation}

\subsection*{Depolarizing Channel}
The calculation of the diamond norm of the difference between $N$ uses of two depolarizing channels (DC) is similar to the calculation of BSC that we performed in the previous section.  The expression for the channel is
\begin{equation}
\label{eqn:alicedepol}
\CN_{p}\rho = (1-p)\rho + (p/3_(X\rho X + Y\rho Y + Z\rho Z)~.
\end{equation}
Eve sees a channel with a somewhat higher rate $r=p+\delta p$.  As in the BSC case we assume that $0 < p < r < 1/2$.  For $N = 2$ case the difference between the two depolarizing channels is:
\begin{align*}
\label{eqn:depoldiff1}
(\CN_{r}\otimes\CN_{r} - \CN_{p}\otimes\CN_{p})\rho & = ((1-r)^2-(1-p)^2)\rho \\
& + ((1-r)(r/3) - (1-p)(p/3))(X_{1}\rho X_{1} + \cdots + Z_{2}\rho Z_{2}) \\
& + ((r/3)^2 - (p/3)^2)(X_{1}X_{2}\rho X_{1}X_{2} + \cdots + Z_{1}Z_{2}\rho Z_{1}Z_{2})~.
\end{align*}
The density matrix that maximizes the trace norm is $\rho = \psi \otimes \ket{\Phi^{+}}\bra{\Phi^{+}}$, where $\ket{\Phi^{+}} = 1/\sqrt{2}(\ket{00} + \ket{11})$, and $\psi$ is some arbitrary single-qubit density operator.
\begin{align*}
\left\|\CN_{r} \otimes \CN_{r} - \CN_{p}\otimes \CN_{p}\right\|_{\diamond} & = \left|(1-r)^2 - (1-p)^2\right| \\
& + 6\left|(1-r)(r/3) - (1-p)(p/3)\right| \\
& + 9\left|(r/3)^2 - (p/3)^2\right| \\
& = \left|(1-r)^2-(1-p)^2\right| + 2\left|(1-r)r - (1-p)p\right| + \left|r^2 - p^2\right|~.
\end{align*}
After evaluating the absolute value terms, we get:
\begin{eqnarray}
\label{eqn:depol2}
\left\|\CN_{r} \otimes \CN_{r} - \CN_{p}\otimes \CN_{p}\right\|_{\diamond} & = 2(r-p)(2-r-p) \\
& = 2\delta{p}\bigg(2 - 2p - \delta{p}\bigg)~.
\end{eqnarray}
So,
\begin{equation}
\label{eqn:depol3}
P_{opt} = \frac{1}{2} + \frac{1}{2}\delta{p}\bigg(2 - 2p - \delta{p}\bigg)~.
\end{equation}
For the general case for $N$ uses of the depolarizing channel we may write the diamond norm as:
\begin{equation}
\label{eqn:depoldiamond_N}
\left\|\CN_r^{\otimes N} - \CN_{p}^{\otimes N}\right\|_{\diamond} = \sum_{j=0}^N \left(\begin{array}{c} N \\ j \end{array}\right) \left| r^j(1-r)^{N-j} - p^j(1-p)^{N-j}\right|~,
\end{equation}
which is exactly the same expression as for the BSC.

\subsection*{Achievable Rate for Protocol 2}
We will work out the simplest example---the BSC in the case where the physical channel is noise-free.  The errors in the codewords that Alice sends to Bob are binomially distributed.  Let $pN$ be the mean of this distribution and let the variance be $pN\delta$, where $0 < \delta \ll 1$.  Here $N$ is the length of each of codeword.  Let
\begin{equation}
\label{eqn:rate1}
p_{k} = {N \choose{k}}p^{k}(1-p)^{N-k}
\end{equation}
be the errors that Alice applies to her codewords.  For each $k$ from $Np(1-\delta)$ to $Np(1+\delta)$ choose $C_{k}$ strings of weight $k$.  Let
\begin{equation}
\label{eqn:rate2}
C = \sum_{k = Np(1-\delta)}^{Np(1+\delta)} C_{k}~.
\end{equation}
Let these sets of strings be called $S_{k}$, and
\begin{equation}
\label{eqn:rate3}
S = \cup_{k} S_{k}
\end{equation}
So the total number of strings in the set $S$ is $C$.  Define the probability $q \equiv 1/C$.  Then we want to satisfy $qC_{k} = C_{k}/C = p_{k}$.  Clearly we must have $C_{k} \leq {N \choose{k}}$, for all $k$.  This implies that:
\begin{align*}
\label{eqn:rate4}
C_{k}p^{k}(1-p)^{N-k} & \leq {N \choose{k}}p^{k}(1-p)^{N-k}  \\
\Rightarrow C_{k}p^{k}(1-p)^{N-k} & \leq C_{k}q \\
\Rightarrow p^{k}(1-p)^{N-k} & \leq q
\end{align*}
We want $C$ to be as large as possible, which means we want $q$ to be as small as possible.  This constraint then gives us
\begin{align*}
q & = p^{Np(1-\delta)}(1-p)^{N(1-p+p\delta)} \\
\Rightarrow C & = 1/q \\
\Rightarrow C & = p^{-Np(1-\delta)}(1-p)^{-N\left(1-p+p\delta\right)}
\end{align*}
The number of bits that Alice can send is, therefore
\begin{align*}
M & = \log_{2}C \\
& = N(-p\log_{2}p-(1-p)\log_{2}(1-p)+\delta(p\log_{2}p-p\log_{2}(1-p))) \\
& = N(h(p)-p\delta\log_{2}((1-p)/p))
\end{align*}
So with this encoding Alice can send almost $Nh(p)$ bits.

\subsection*{Diamond norm for protocol 2}
Again we consider the simplest case of the BSC.  Let $N$ be sufficiently large so that the total probability of the typical errors is $> 1-\epsilon$, and these typical errors have weight $k$ in the range $Np(1-\delta) \le k \le Np(1+\delta)$.  We divide up all errors of weight $k$ into $C_k$ partitions containing
\[
n_k\approx \frac{{N\choose{k}}}{C_k}\approx \left(\frac{1-p}{p}\right)^{k-Np(1-\delta)}
\]
errors each.  Within each set the errors are all equally likely to be chosen.  However, because the number of errors is unlikely to divide exactly evenly into $C_k$ sets, the probabilities $q_k$ of an error of weight $k$ will be slightly different from the probability $p_k = p^k(1-p)^{N-k}$ of the binomial distribution.  We can put a (not-very-tight) bound on this difference:
\begin{equation}
|q_k - p_k| < \frac{p_k}{\left(\frac{1-p}{p}\right)^{k-Np(1-\delta)}-1}
< \frac{1-p}{1-2p} p^{2k}(1-p)^{N-2k} .
\end{equation}
Plugging this into the expression for the diamond norm, we get
\begin{align*}
||\CN_p^{\otimes N} - \CN_{\rm enc}||_\diamond < & \epsilon + \sum_{k=Np(1-\delta)+1}^{Np(1+\delta)} {N\choose{k}} |p_k-q_k| \\ \tag{S37}
< & \epsilon + \left(\frac{1-p}{1-2p}\right) \left(\frac{p}{1-p}\right)^{Np(1-\delta)}
\left(\frac{1-2p+2p^2}{1-p}\right)^N ~,
\end{align*}
which is exponentially small in $N$.

\subsection*{Error-correction for protocol 2 with a noisy channel}
Since errors can act in a complicated manner on the space of syndromes, it is not entirely clear what the optimal encoding is even for a simple channel.  Here we present one encoding for the BSC that gives an achievable rate in the limit of large $N$, but it is quite likely that higher rates are possible.

In the noiseless case, it is possible to use the $C(N,M)$ strings of weight $M$ as a code---each string represents one possible weight-$M$ error.  If we then apply a BSC with probability $p$, on average $Np$ bits would be flipped.  If $Np \ll M$ then one can keep only a subset of the weight-$M$ strings, separated by a distance $>2Np$.

This encoding quickly becomes inefficient as $p$ gets larger.  Using the shared secret key, Alice can instead chose only a subset of the $N$ bits  to hold the codewords.  If this subset includes $N'$ bits, then the errors on the remaining $N-N'$ bits are irrelevant and do not need to be corrected.  The limit of this would be similar to encoding 1 in the paper, where $N' \approx 2M$.

Let $N'=qN$ for some $0<q\le1$.  The number of strings of weight $M$ is $C(qN,M)$, and there will be an average number of bit flips $pqN$ on the relevant portion of the codeword.  Keep a subset of these codewords separated by distance $2pqN$.  Decoding is done by finding the closest codeword to the output string.

As $N,M\rightarrow\infty$ then the number of codewords will go like
\[
C(N,M,p,q) \sim \frac{{qN\choose{M}}}{{qN\choose{pqN}}}~.
\]
The number of bits will be $\log_2 C(N,p,q)$.

Since $q$ is a parameter we can choose freely, we choose it to maximize the rate $\CR(N,M,p,q) \equiv (1/N) \log_2 C(N,M,p,q)$.  Using the Stirling approximation, differentiating with respect to $q$, and setting the result equal to 0, we can solve for $q$:
\[
q = \frac{M}{N}\left(\frac{2^{h(p)}}{2^{h(p)}-1}\right) ~.
\]
We can then plug this back into the formula for $\CR$.  If the physical channel has error rate $p$ and Alice is attempting to emulate a channel with error rate $p+\delta p$, then $M=N\delta p/(1-2p)$.  This gives us the following expression for the rate:
\[
\CR(p,\delta p) = - \frac{\delta p}{1-2p} \log_2\left(2^{h(p)}-1\right) .
\]
We can compare this to the rate from encoding 1, which for the BSC is $2\delta p(1-h(p))/(1-2p)$.  It is not hard to see that $\CR(p,\delta p)$ above approaches this rate as $p\rightarrow1/2$ (and both rates go to zero), but as $p\rightarrow0$ this encoding does considerably better than encoding 1.  It is quite likely, however, that there may be even more efficient encodings.

\section{Concluding Remarks}
We presented two different protocols for hiding quantum information in the codewords of a quantum error-correcting code.  Using the first protocol, Alice can hide quantum information in the actual qubits of the QEC, by sharing secret information about the location of the chosen qubits with Bob.  In the second protocol she hides quantum information by encoding it into the syndromes of a QEC.  In both protocols, she encodes the information in such a way as to match the probability distribution of the depolarizing channel.  We generalized how Alice and hide quantum information by using general quantum channels and a general QEC by utilizing typical sequences, it is not immediately clear what sort of encoding will yield an optimal rate for the transmission of quantum information between Alice and Bob. We also presented a rigorous security criterion and showed that it is always possible for Alice to send a finite amount of quantum information to Bob even when the noise is not completely under their control.  In the next chapter we present an example of how Alice hides four stego-qubits in to the perfect code and how using a particular encoding she can transmit an optimal number of stego-qubits to Bob without apprising Eve.  We essentially cast the problem of encoding as a linear optimization one.

\chapter{Hiding Quantum Information in the Perfect Code}
\label{chap:hidingqinfo}
\begin{saying}
By this it appears how necessary it is for any man that aspires to true knowledge to examine the definitions of former authors; and either to correct them, where they are negligently set down, or to make them himself. For the errors of definitions multiply themselves, according as the reckoning proceeds, and lead men into absurdities, which at last they see, but cannot avoid, without reckoning anew from the beginning; in which lies the foundation of their errors.  For between true science and erroneous doctrines, ignorance is in the middle. Natural sense and imagination are not subject to absurdity. Nature itself cannot err: and as men abound in copiousness of language; so they become more wise, or more mad, than ordinary. Nor is it possible without letters for any man to become either excellently wise or (unless his memory be hurt by disease, or ill constitution of organs) excellently foolish. For words are wise men's counters; they do but reckon by them: but they are the money of fools, that value them by the authority of an Aristotle, a Cicero, or a Thomas, or any other doctor whatsoever, if but a man. \\
---\textit{Thomas Hobbes}
\end{saying}

\lettrine{I}n the previous Chapter~\ref{chap:quantumsteg} we detailed protocols to send quantum steganographic information from Alice to Bob.  In this chapter we give specific examples of the second protocol that we outlined in the previous chapter.  In our first example Alice and Bob use the $[[5,1,3]]$ code, which is also called the ``perfect'' code.  It is called perfect because it was the first example of the smallest quantum error-correcting code that could correct an arbitrary single-qubit error.  It is a nondegenerate code because each error gets mapped to a unique syndrome.  Alice and Bob use this code to hide up to four qubits of information which they send over a channel that to Eve looks like a depolarizing channel.  Our second and third examples are based on the six-qubit code and the Steane code, respectively.  We end this chapter by showing how we can go to bigger block sizes by combining several five-qubit blocks.  By hiding a qubit of information in each block, we hope that the combined errors from all the smaller block on larger combined block is a typical error.

\section{Protocol}
The perfect code is a distance three code and so it can correct up to an arbitrary single-qubit error.  It encodes one logical qubit into five physical qubits.  As we mentioned before it is a nondegenerate code and so each single-qubit error is mapped to a unique syndrome.  The perfect code has $n - k = 5 - 1 = 4$ stabilizer generators which we list in Table~\ref{tbl:perfect_code}, along with the logical operators.  Figure~\ref{fig:five_qubit_code} shows the encoding unitary circuit for the perfect code.  We obtained this circuit using the algorithm described in Section~\ref{sec:eaqeccircuit}.  For the sake of convenience and completeness we reproduce the algorithm below to generate the encoding unitary circuit for the perfect code.
%ENCODING UNITARY CIRCUIT FOR THE PERFECT CODE
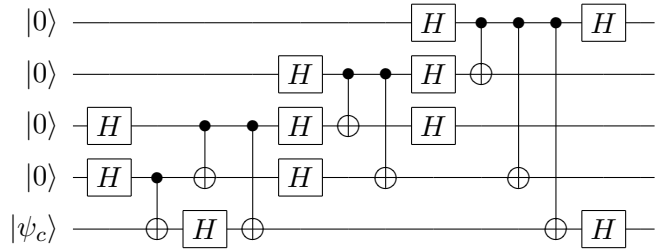
\begin{figure}
[ptb]
\begin{center}
\[
\Qcircuit@C= 0.5em @R = 0.5em @!R{
\lstick{\ket{0}} & \qw & \qw & \qw & \qw & \qw & \qw & \qw & \gate{H} & \ctrl{1} & \ctrl{3} & \ctrl{4} & \gate{H} & \qw & \qw\\
\lstick{\ket{0}} & \qw & \qw & \qw & \qw & \gate{H} & \ctrl{1} & \ctrl{2} & \gate{H} & \targ & \qw & \qw & \qw & \qw & \qw\\
\lstick{\ket{0}} & \gate{H} & \qw & \ctrl{1} & \ctrl{2} & \gate{H} & \targ & \qw & \gate{H} & \qw & \qw & \qw & \qw & \qw & \qw\\
\lstick{\ket{0}} & \gate{H} & \ctrl{1} & \targ & \qw & \gate{H} & \qw & \targ & \qw & \qw & \targ & \qw & \qw & \qw & \qw\\
\lstick{\ket{\psi_{c}}} & \qw & \targ & \gate{H} & \targ & \qw & \qw & \qw & \qw & \qw & \qw & \targ & \gate{H} & \qw & \qw\\
}
\]
\end{center}
\caption
{Encoding circuit for the perfect code.  The \textit{H} gate is a Hadamard gate, and the two-qubit gates are CNOT gates.  The first four qubits are ancilla qubits and the fifth qubit $\ket{\psi_{c}}$ is the cover-qubit.}
\label{fig:five_qubit_code}
\end{figure}
%\begin{figure}
%[ptb]
%\begin{center}
%\includegraphics[
%natheight=3.386600in,
%natwidth=7.973300in,
%height=1.9614in,
%width=5.1742in
%]%
%{five_qubit_code.png}%
%\caption{Encoding unitary circuit for the perfect code.  $H$ is the Hadamard gate and the two-qubit gates are CNOT gates.  The first four qubits are ancilla qubits and the fifth qubit $\ket{\psi_{C}}$ is the cover-qubit.}%
%\label{fig:five_qubit_code}%
%\end{center}
%\end{figure}
We begin by first converting the stabilizer generators in Table~\ref{tbl:perfect_code} into a
binary form which we refer to as a $Z|X$ matrix.  We obtain the the left $Z$ submatrix
by inserting a ``1'' wherever we see a $Z$ in the stabilizer generators.  We obtain the
$X$ submatrix by inserting a ``1'' wherever we see a corresponding $X$
in the stabilizer generator.  If there is a $Y$ in the generator, we insert a ``1''
in the corresponding row and column of both the $Z$ and $X$ submatrices.
The idea is to convert matrix~\ref{perfectcodemat1} to matrix~\ref{perfectcodemat8} through a series of row and column operations.
The binary form of the matrix in~\ref{perfectcodemat1} corresponds to the stabilizer generators in Table~\ref{tbl:perfect_code}.
We can use CNOT, Hadamard, Phase, and SWAP gates.
\begin{enumerate}
\item When we apply a CNOT gate from qubit $i$ to qubit $j$, it adds column $i$ to
column $j$ in the $X$ submatrix, and in the $Z$ submatrix it adds column $j$ to column $i$.
\item A Hadamard on qubit $i$ swaps
column $i$ in the $Z$ submatrix with column $i$ in the $X$ submatrix.
\item A Phase gate on qubit
$i$ adds column $i$ in the $X$ submatrix to column $i$ in the $Z$ submatrix.
\item When we apply a
SWAP gate from qubit $i$ to qubit $j$, we exchange column $i$ with column $j$ in $Z$ submatrix
and column $i$ and column $j$ in the $X$ submatrix.
\end{enumerate}
Row operations do not change the error-correcting properties of the code.  They do not
cost us in terms of gates.

%TABLE FOR THE STABILIZER GENERATORS OF THE PERFECT CODE
\begin{table}[tbp] \centering
%EndExpansion%
\begin{tabular}
[c]{c|ccccc}\hline\hline
$g_{1}$ & $X$ & $Z$ & $Z$ & $X$ & $I$\\
$g_{2}$ & $I$ & $X$ & $Z$ & $Z$ & $X$\\
$g_{3}$ & $X$ & $I$ & $X$ & $Z$ & $Z$\\
$g_{4}$ & $Z$ & $X$ & $I$ & $X$ & $Z$\\ \hline
$\overline{X}$ & $X$ & $X$ & $X$ & $X$ & $X$\\
$\overline{Z}$ & $Z$ & $Z$ & $Z$ & $Z$ & $Z$
\\\hline\hline
\end{tabular}
\caption{Stabilizer generators $g_1$, \ldots, $g_4$, and logical operators $\bar{X}$ and $\bar{Z}$ for the perfect code.  The convention in the above generators is that $Y=ZX$.}
\label{tbl:perfect_code}
\end{table}

\begin{equation}
\left[  \left.
\begin{array}
[c]{ccccc}
0 & 1 & 1 & 0 & 0 \\
0 & 0 & 1 & 1 & 0 \\
0 & 0 & 0 & 1 & 1 \\
1 & 0 & 0 & 0 & 1
\end{array}
\right\vert
\begin{array}
[c]{ccccc}
1 & 0 & 0 & 1 & 0 \\
0 & 1 & 0 & 0 & 1 \\
1 & 0 & 1 & 0 & 0 \\
0 & 1 & 0 & 1 & 0
\end{array}
\right]
\label{perfectcodemat1}
\end{equation}
Swap rows one and four.
\begin{equation}
\left[  \left.
\begin{array}
[c]{ccccc}
1 & 0 & 0 & 0 & 1 \\
0 & 0 & 1 & 1 & 0 \\
0 & 0 & 0 & 1 & 1 \\
0 & 1 & 1 & 0 & 0
\end{array}
\right\vert
\begin{array}
[c]{ccccc}
0 & 1 & 0 & 1 & 0 \\
0 & 1 & 0 & 0 & 1 \\
1 & 0 & 1 & 0 & 0 \\
1 & 0 & 0 & 1 & 0
\end{array}
\right]
\label{perfectcodemat2}
\end{equation}
Apply Hadamard to qubit one and five.
\begin{equation}
\left[  \left.
\begin{array}
[c]{ccccc}
0 & 0 & 0 & 0 & 0 \\
0 & 0 & 1 & 1 & 1 \\
1 & 0 & 0 & 1 & 0 \\
1 & 1 & 1 & 0 & 0
\end{array}
\right\vert
\begin{array}
[c]{ccccc}
1 & 1 & 0 & 1 & 1 \\
0 & 1 & 0 & 0 & 0 \\
0 & 0 & 1 & 0 & 1 \\
0 & 0 & 0 & 1 & 0
\end{array}
\right]
\label{perfectcodemat3}
\end{equation}
Apply CNOT from qubit one to qubit two, four, and five.
\begin{equation}
\left[  \left.
\begin{array}
[c]{ccccc}
0 & 0 & 0 & 0 & 0 \\
0 & 0 & 1 & 1 & 1 \\
0 & 0 & 0 & 1 & 0 \\
0 & 1 & 1 & 0 & 0
\end{array}
\right\vert
\begin{array}
[c]{ccccc}
1 & 0 & 0 & 0 & 0 \\
0 & 1 & 0 & 0 & 0 \\
0 & 0 & 1 & 0 & 1 \\
0 & 0 & 0 & 1 & 0
\end{array}
\right]
\label{perfectcodemat4}
\end{equation}
Apply Hadamard on qubit one and then swap row two with row four.
\begin{equation}
\left[  \left.
\begin{array}
[c]{ccccc}
1 & 0 & 0 & 0 & 0 \\
0 & 1 & 1 & 0 & 0 \\
0 & 0 & 0 & 1 & 0 \\
0 & 0 & 1 & 1 & 1
\end{array}
\right\vert
\begin{array}
[c]{ccccc}
0 & 0 & 0 & 0 & 0 \\
0 & 0 & 0 & 1 & 0 \\
0 & 0 & 1 & 0 & 1 \\
0 & 1 & 0 & 0 & 0
\end{array}
\right]
\label{perfectcodemat5}
\end{equation}
Apply Hadamard on qubit two and three, followed by CNOT from qubit two to qubit three and four.
\begin{equation}
\left[  \left.
\begin{array}
[c]{ccccc}
1 & 0 & 0 & 0 & 0 \\
0 & 0 & 0 & 0 & 0 \\
0 & 0 & 1 & 1 & 0 \\
0 & 0 & 0 & 1 & 1
\end{array}
\right\vert
\begin{array}
[c]{ccccc}
0 & 0 & 0 & 0 & 0 \\
0 & 1 & 0 & 0 & 0 \\
0 & 0 & 0 & 0 & 1 \\
0 & 0 & 1 & 0 & 0
\end{array}
\right]
\label{perfectcodemat6}
\end{equation}
Now apply Hadamard gate on qubit two, three, and four, followed by CNOT from qubit three to qubit four and five.
\begin{equation}
\left[  \left.
\begin{array}
[c]{ccccc}
1 & 0 & 0 & 0 & 0 \\
0 & 1 & 0 & 0 & 0 \\
0 & 0 & 0 & 0 & 0 \\
0 & 0 & 0 & 0 & 1
\end{array}
\right\vert
\begin{array}
[c]{ccccc}
0 & 0 & 0 & 0 & 0 \\
0 & 0 & 0 & 0 & 0 \\
0 & 0 & 1 & 0 & 0 \\
0 & 0 & 0 & 1 & 0
\end{array}
\right]
\label{perfectcodemat7}
\end{equation}
Apply a Hadamard on qubit five, followed by a CNOT from qubit four and five, and then ending the operations with a Hadamard on qubit three and four.
\begin{equation}
\left[  \left.
\begin{array}
[c]{ccccc}
1 & 0 & 0 & 0 & 0 \\
0 & 1 & 0 & 0 & 0 \\
0 & 0 & 1 & 0 & 0 \\
0 & 0 & 0 & 1 & 0
\end{array}
\right\vert
\begin{array}
[c]{ccccc}
0 & 0 & 0 & 0 & 0 \\
0 & 0 & 0 & 0 & 0 \\
0 & 0 & 0 & 0 & 0 \\
0 & 0 & 0 & 0 & 0
\end{array}
\right]
\label{perfectcodemat8}
\end{equation}
We have finally obtained a binary matrix that corresponds to the canonical stabilizer generators in Table~\ref{tbl:perfect_code}.  Multiplying the above operations in reverse takes us from the unencoded canonical stabilizers to the encoded ones.
\begin{figure}
[ptb]
\begin{center}
\includegraphics[
natheight=10.4in,
natwidth=16.0in,
height=3.0in,
width=5.0in
]
{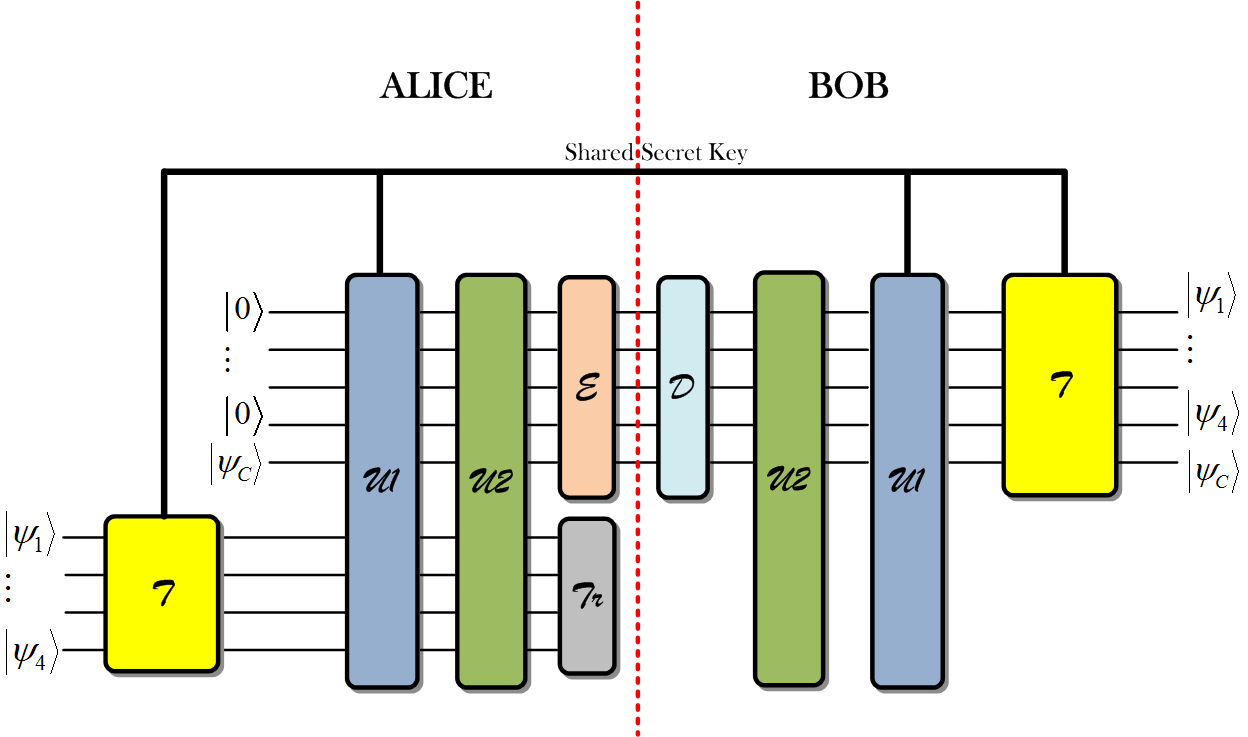}%
\caption{Steganographic protocol to hide four qubits of information using the perfect code.  Based on a classical shared secret key with Bob, Alice hides four qubits, $\ket{\psi_{1}},\ldots, \ket{\psi_{4}}$ by applying the twirling procedure to each of her four qubits and transforms each qubit into a maximally mixed state.  We call this system the stego subsystem $S$.  The twirling procedure is applied by the yellow box.  We label the first four ancilla qubits $\ket{0}$ along with the cover-qubit $\ket{\psi_{C}}$ as Alice's subsystem $A$.  The first unitary $\CU_{1}$ depicted by the blue box is applied to the subsystems $\CA$ and $\CS$, followed by unitary $\CU_{2}$.  Finally Alice applies the encoding unitary $\CE$ to subsystem $\CA$ and sends it over a channel to Bob.  She traces out subsystem $\CS$.  Bob upon receiving the codeword decodes it by applying the decoding circuit $\CD$.  He uses the shared secret key to undo the unitary transformations $\CU_{1}$ and $\CU_{2}$ that were applied by Alice.  He finally uses the shared secret key to untwirl the qubits to obtain the stego qubits.}
\label{fig:stegoprotocol_perfect_code}%
\end{center}
\end{figure}

Next we need the syndromes corresponding to single-qubit errors.  The perfect code has a total of fifteen single-qubit errors.  Each of the five qubits can undergo a single bit-flip error $X_{1}, X_{2}, \ldots,$ or $X_{5}$, a single phase-flip error $Z_{1}, Z_{2}, \ldots,$ or $Z_{5}$ or a combination of both characterized by $Y_{1}, Y_{2}, \ldots,$ or $Y_{5}$.  We also add the no error operator $IIIII$ to this list.  We express $X \otimes I \otimes I \otimes I \otimes I$ succinctly as either $XIIII$ or as $X_{1}$, and similarly with the other error operators.  We generate the syndrome of each single-qubit error operator by taking the symplectic product of the parity check matrix~\ref{perfectcodemat1} with the binary vector of the error operator corresponding to a single-qubit error.  We write the binary vector of an error operator in the same way that we constructed the rows of the parity check matrix~\ref{perfectcodemat1}.  As an example the binary vector corresponding to $XIIII$ is $\left[00000|10000\right]$.  Before we show how to take the symplectic product between a binary matrix and a vector, we must clarify the procedure between two binary vector of length two.

Suppose $u,v\in\left(  \mathbb{Z}_{2}\right)  ^{2}$. Let us employ the shorthand $u=\left[  z|x\right]  $ and $v=\left[  z^{\prime}|x^{\prime}\right]  $\ where $z$, $x$, $z^{\prime}$, $x^{\prime}\in\mathbb{Z}_{2}$. For example, suppose $u=\left[  0|1\right]  $, and suppose $v = \left[1|0\right]$.  Let $\odot$\ denote the \textit{symplectic product} between two elements $u,v\in\left(\mathbb{Z}_{2}\right)  ^{2}$:%
\begin{equation}
\label{eqn:symproddef}
u\odot v\equiv zx^{\prime}-xz^{\prime}.
\end{equation}
Then based on our example above, the symplectic product between $u$ and $v$ is:
\begin{eqnarray}
\left[0|1\right] \odot \left[1|0\right] & = &0.0 - 1.1 \nonumber \\
& = & 0 - 1 \nonumber \\
& = & 1
\end{eqnarray}
Let the binary matrix $A$ and the binary vector $v$ be of dimensions $M \times 2N$ and $2N \times 1$, respectively.  So $v$ is a column vector.  Furthermore, let $A_{ij}$ be an element of $A$ indexed by row $i$ and column $j$.  We divide the matrix $A$ into the $Z$ part and the $X$ part which we label as $A^{Z}$ and $A^{X}$ respectively.  The $Z$ part of $A$ consists of columns from $1$ to $N$, and its $X$ part consists of columns from $N+1$ to $2N$.  So $A_{ij}^{Z}$ is an element of $A$ belonging to the $Z$ part at row $i$ and column $j$.  Similarly let $v^{Z}$ and $v^{X}$ be the $Z$ and $X$ part of the binary vector $v$.  Rows of $v$ labeled from 1 to $N$ belong to the $Z$ part, whereas rows labeled from $N+1$ to $2N$ belong to the $X$ part.  $v_{i}^{Z}$ refers an element of $v$ indexed by row $i$ belonging to the $Z$ part.  Let $w$ be the symplectic product of $A$ and $v$.  We calculate each element $w_{i}$ of $w$ as:
\begin{equation}
\label{eqn:symprodmatvec}
w_{i} = \sum_{i = 1}^{M}\sum_{j = 1}^{2N} A_{ij}^{Z}v_{j}^{X}+A_{ij}^{X}v_{j}^{Z}
\end{equation}
In the above equation it should be clear that in the first part of the sum the elements of $A^{Z}$ are indexed by $j$ between 1 and $N$, whereas the elements of $v^{X}$ indexed by $i$ go from $N+1$ to $2N$.  Based on the above equation we can calculate the syndromes of single-qubit errors for the perfect code which we list in Table~\ref{tbl:syndrome_perfect_code}.
%TABLE OF SYNDROMES FOR THE PERFECT CODE
\begin{table}[tbp] \centering
%EndExpansion%
\begin{tabular}
[c]{|c|c||c|c|}\hline\hline
Error   & Syndrome & Error & Syndrome\\ \hline \hline
$IIIII$ & $0000$ & $IIIIZ$ & $0001$ \\ \hline
$IIZII$ & $0010$ & $IIIXI$ & $0011$ \\ \hline
$IXIII$ & $0100$ & $IIXII$ & $0101$\\ \hline
$ZIIII$ & $0110$ & $IIYII$ & $0111$ \\ \hline
$XIIII$ & $1000$ & $IZIII$ & $1001$ \\ \hline
$IIIIX$ & $1010$ & $IIIIY$ & $1011$ \\ \hline
$IIIZI$ & $1100$ & $IYIII$ & $1101$ \\ \hline
$YIIII$ & $1110$ & $IIIYI$ & $1111$
\\\hline\hline
\end{tabular}
\caption{Syndrome table of the perfect code.  The error operators are ordered according to syndrome values.}
\label{tbl:syndrome_perfect_code}
\end{table}
We now detail the steps Alice takes to hide four stego qubits.  We label Alice's subsystem as $\CA$, and the stego subsystem as $\CS$.  The initial combined $\CA$ and $\CS$ subsystems after the $\CS$ subsystem has passed through the twirling procedure is:
\begin{equation}
\label{eqn:initialstate}
\rho^{\CS \CA} = \bigg(\frac{I}{2}\bigg)^{\CS} \otimes \bigg(\frac{I}{2}\bigg)^{\CS} \otimes \bigg(\frac{I}{2}\bigg)^{\CS} \otimes \bigg(\frac{I}{2}\bigg)^{\CS} \otimes \ket{0000}\ket{\psi_{C}}^{\CA}\bra{0000}\bra{\psi_{C}}^{\CA}.
\end{equation}
The maximally mixed state is $I/2$.  If we measure this state, we will get the result '0' with probability $1/2$ or we will get the result '1' with probability $1/2$.  So the results of the measurement are completely random.  We can rewrite the $\CS$ subsystem of the state in Equation(\ref{eqn:initialstate}) in the computational basis as:
\begin{equation}
\label{eqn:Icomp}
I^{\otimes 4} = \sum_{i = 0}^{15} \ket{k}\bra{k}.
\end{equation}
In Equation(\ref{eqn:Icomp}), we index the ket states from 0 to 15, but we must think of their binary representation.  So for example $\ket{0}$ is $\ket{0000}$.  When we combine Equation(\ref{eqn:Icomp}) with the initial state described by Equation(\ref{eqn:initialstate}), we get:
\begin{equation}
\label{eqn:initialstate1}
\rho^{\CS \CA} = \frac{1}{16}\sum_{k = 0}^{15}\ket{k}\bra{k}^{\CS} \otimes \ket{0000}\ket{\psi_{C}}^{\CA}\bra{0000}\bra{\psi_{C}}^{\CA}.
\end{equation}
We further simplify the state $\rho^{\CS \CA}$ by writing it as:
\begin{equation}
\label{eqn:initialstate2}
\rho^{\CS \CA} = \frac{1}{16}\sum_{k = 0}^{15}\ket{k}\bra{k}^{\CS} \otimes \ket{\bar{0}}\bra{\bar{0}}^{\CA}\otimes\ket{\psi_{C}}\bra{\psi_{C}}^{\CA},
\end{equation}
where $\ket{\bar{0}} \equiv \ket{0000}$.  We define the first unitary $U_{1}$ shown in the blue box in Figure~\ref{fig:stegoprotocol_perfect_code} as:
\begin{equation}
\label{eqn:unitary1}
U_{1} \equiv \sum_{i = 0}^{15} \ket{i}\bra{i}^{\CS} \otimes \CE_{i}^{\CA} \otimes \CO_{i}^{\CA}.
\end{equation}
The $E_{i}$ are the error operators listed in Table~\ref{tbl:syndrome_perfect_code}, ordered from top to bottom.  So $E_{1}$ corresponds to the error operator $IIIII$.  We define the second unitary $U_{2}$ depicted by the green box in Figure~\ref{fig:stegoprotocol_perfect_code} as:
\begin{equation}
\label{eqn:unitary2}
U_{2} \equiv \sum_{j = 0}^{15} (X^{j})^{\CS} \otimes \ket{j}\bra{j}^{\CA} \otimes I_{C}^{\CA}.
\end{equation}
In Equation(\ref{eqn:unitary2}) $X^{j}$ is a shorthand notation for $X^{j_{1}} \otimes X^{j_{2}} \otimes X^{j_{3}} \otimes X^{j_{4}}$, where each $j_{1},\ldots, j_{4}$ is either 0 or 1.  For example, $X^{0} = X^{0} \otimes X^{0} \otimes X^{0} \otimes X^{0} = I \otimes I \otimes I \otimes I$.  The operator $I_{C}$ acts on the cover-qubit (information qubit) $\ket{\psi_{C}}$.  After Alice applies the unitary $U_{1}$ and $U_{2}$, followed by the encoding unitary $\CE$, and the tracing out procedure, she must end up with a channel model that looks plausible to Eve.  In this case she is trying to match the depolarizing channel:
\begin{equation}
\label{eqn:depol_prefect_code}
\rho \rightarrow \CN(\rho) \equiv (1-p)\rho + (1-p)^{4}\frac{p}{3}\sum_{i = 0}^{4} X_{i}\rho X_{i} + Y_{i}\rho Y_{i} + Z_{i}\rho Z_{i}.
\end{equation}
We have taken each error operator $E_{i}$ from Table~\ref{tbl:syndrome_perfect_code} and listed it as a tensor product $\CE_{i}\otimes\CO_{i}$ in the first two columns of Table~\ref{tbl:encoded_errors_perfect_code}.  We will use this table as we derive the channel model.  We write the state that emerges after Alice applies $U_{1}$ to $\rho^{\CS \CA}$ as:
\begin{equation}
\label{eqn:U1}
\rho_{1}^{\CS \CA} \equiv U_{1} \rho^{\CS \CA} U_{1}^{\dagger}.
\end{equation}
Alice then applies $U_{2}$, and the subsequent state is:
\begin{equation}
\label{eqn:U2}
\rho_{2}^{\CS \CA} \equiv U_{2} \rho_{1}^{\CS \CA} U_{2}^{\dagger}.
\end{equation}
The encoding unitary that Alice applies is $U_{e} = I^{\CS} \otimes U_{e}^{\CA}$:
\begin{equation}
\label{eqn:Ue}
\rho_{e}^{\CS \CA} \equiv U_{e} \rho_{2}^{\CS \CA} U_{e}^{\dagger},
\end{equation}
followed by tracing out the $\CS$ subsystem:
\begin{equation}
\label{eqn:traceout}
Tr_{S}\left[\rho_{e}^{\CS \CA}\right] = \rho_{e}^{\CA} \rightarrow \CN(\rho_{e}) = (1-p)\rho_{e} + (1-p)^{4}\frac{p}{3}\sum_{i = 0}^{4} X_{i}\rho_{e} X_{i} + Y_{i}\rho_{e} Y_{i} + Z_{i}\rho_{e} Z_{i}.
\end{equation}
We claim that the above three unitary transformations and the partial-trace of the $\CS$ subsystem produces the correct depolarizing channel.  We now detail the protocol.  Alice and Bob a-priori know which quantum error-correcting code they will be using to send steganographic messages to each other.  Alice and Bob also share a classical secret key which is indicated by the bold black lines in Figure~\ref{fig:stegoprotocol_perfect_code}.  They require this key for two purposes.  First, it allows Alice to twirl an arbitrary stego-qubit into a maximally mixed state.  Given an arbitrary density matrix $\rho$, the twirling operator $\CT$ takes $\rho \rightarrow 1/4(\rho + X \rho X + Y \rho Y Z \rho Z) \equiv I/2$.  Alice and Bob use two bits to determine which of the four operators $I, X, Y$, or $Z$ to apply.  It is important to realize that the twirled state looks maximally mixed to Eve who does not have access to the secret key.
\begin{figure}
[ptb]
\begin{center}
\[
\Qcircuit@C= 1.5em @R = 4.0em @!R{
\lstick{\ket{0}} & \targ & \ctrl{1}  & \qw && \lstick{\ket{\psi}}\\
\lstick{\ket{\psi}} & \ctrl{-1} & \targ & \qw && \lstick{\ket{0}}\\
}
\]
\end{center}
\caption
{Swap circuit}
\label{fig:swap}
\end{figure}
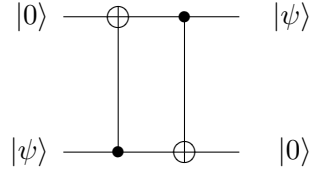
%\begin{figure}
%[htb]
%\begin{center}
%\includegraphics[
%natheight = 5.0in,
%natwidth = 5.0in,
%height = 2.0in,
%width = 2.0in
%]
%{swap.png}%
%\caption{Swap circuit}
%\label{fig:swap}
%\end{center}
%\end{figure}
Also note that the state $\rho$ that Alice produces is an average state, the result of applying the operators $I, X, Y$, or $Z$ repeatedly.   In this protocol since Alice is hiding four stego-qubits, she and Bob consume eight bits in the twirling procedure.  Second, Alice and Bob use the key to select those error operators that have distinct syndromes.  These are the operators that go into the construction of the the unitary $U_{1}$ in Equation~(\ref{eqn:unitary1}).  The error operators $\CE_{i}^{\CA} \otimes \CO_{i}^{\CA}$ are listed in Table~\ref{tbl:encoded_errors_perfect_code}.  The intuition behind using the unitaries $U_{1}$ and $U_{2}$ is that by using them Alice can swap in her stego-qubits with the ancillas of the codeword.  They act as generalized swap operators.  We depict the usual swap circuit in Figure~\ref{fig:swap}.  When we swap in an arbitrary qubit $\ket{\psi}$ with an ancilla state $\ket{0}$ we only need two CNOT gates, instead of three.
%TABLE OF ENCODED ERROR OPERATORS AND THE CORRESPONDING SYNDROMES
\begin{table}[tbp] \centering
%EndExpansion%
\begin{tabular}
[c]{|c|c||c||c|c|}\hline\hline
$\CE_i$   & $\CO_i$ & Encoded Error & Syndrome $i$ & Syndrome Label\\ \hline \hline
$IIII$ & $I$ & $IIIII$ & 0000 & $s_0$ \\ \hline
$IIIX$ & $Z$ & $IIIIZ$ & 0001 & $s_1$\\ \hline
$IIXI$ & $I$ & $IIZII$ & 0010 & $s_2$\\ \hline
$IIXX$ & $I$ & $IIIXI$ & 0011 & $s_3$\\ \hline
$IXII$ & $I$ & $IXIII$ & 0100 & $s_4$\\ \hline
$IXZY$ & $Y$ & $IIXII$ & 0101 & $s_5$\\ \hline
$ZXXI$ & $Z$ & $ZIIII$ & 0110 & $s_6$\\ \hline
$-IXYY$ & $Y$ & $IIYII$ & 0111 & $s_7$\\ \hline
$XIII$ & $I$ & $XIIII$ & 1000 & $s_8$\\ \hline
$XZIX$ & $I$ & $IZIII$ & 1001 & $s_9$\\ \hline
$XIXI$ & $X$ & $IIIIX$ & 1010 & $s_{10}$\\ \hline
$XIXX$ & $Y$ & $IIIIY$ & 1011 & $s_{11}$\\ \hline
$XXIZ$ & $X$ & $IIIZI$ & 1100 & $s_{12}$\\ \hline
$XYIX$ & $I$ & $IYIII$ & 1101 & $s_{13}$\\ \hline
$YXXI$ & $Z$ & $YIIII$ & 1110 & $s_{14}$\\ \hline
$XXXY$ & $X$ & $IIIYI$ & 1111 & $s_{15}$
\\\hline\hline
\end{tabular}
\caption{Encoded error operators of the perfect code.}
\label{tbl:encoded_errors_perfect_code}
\end{table}
Alice applies $U_{1}$ to the initial state $\rho^{\CS \CA}$ to get the state $\rho_{1}^{\CS \CA}$:
\begin{eqnarray}
\label{eqn:applyU1simplify}
\rho_{1}^{\CS \CA} & = & \frac{1}{16}\sum_{i,j,k = 0}^{15} \ket{i}\bracket{i}{k}\bracket{k}{j}\bra{j} \otimes \CE_{i}\ket{\bar{0}}\bra{\bar{0}}\CE_{j}^{\dagger} \otimes \CO_{i}\ket{\psi_{C}}\bra{\psi_{C}}\CO_{j}^{\dagger}, \nonumber \\
& = & \frac{1}{16}\sum_{i,j,k = 0}^{15}\delta_{ik}\delta_{kj}\ket{i}\bra{j} \otimes \CE_{i}\ket{\bar{0}}\bra{\bar{0}}\CE_{j}^{\dagger} \otimes \CO_{i}\ket{\psi_{C}}\bra{\psi_{C}}\CO_{j}^{\dagger}, \nonumber \\
& = & \frac{1}{16}\sum_{i = 0}^{15} \ket{i}\bra{i} \otimes \CE_{i}\ket{\bar{0}}\bra{\bar{0}}\CE_{i}^{\dagger} \otimes \CO_{i}\ket{\psi_{C}}\bra{\psi_{C}}\CO_{i}^{\dagger}.
\end{eqnarray}
Next Alice applies the unitary $U_{2}$ to the state $\rho_{1}^{\CS \CA}$:
\begin{eqnarray}
\label{eqn:U2apply1}
\rho_{2}^{\CS \CA} & = & \frac{1}{16}\sum_{i,k,l = 0}^{15} X^{k}\ket{i}\bra{i}X^{l} \otimes \bra{k}\CE_{i}\ket{\bar{0}}\bra{\bar{0}}\CE_{i}^{\dagger}\ket{l}\ket{k}\bra{l}\otimes\CO_{i}\ket{\psi_{C}}\bra{\psi_{C}}\CO_{i}^{\dagger}, \\
\label{eqn:U2apply3}
& = & \frac{1}{16}\sum_{i,k,l = 0}^{15} X^{k} \ket{i}\bra{i}X^{l} \otimes \bracket{k}{i}\bracket{i}{l}\ket{k}\bra{l} \otimes \CO_{i}\ket{\psi_{C}}\bra{\psi_{C}}\CO_{i}^{\dagger}, \\
\label{eqn:U2apply4}
& = & \frac{1}{16}\sum_{i,k,l = 0}^{15}X^{k}\ket{i}\bra{i}X^{l}\otimes \delta_{ki}\delta_{il}\ket{k}\bra{l}\otimes \CO_{i}\ket{\psi_{C}}\bra{\psi_{C}}\CO_{i}^{\dagger}, \\
\label{eqn:U2apply5}
& = & \frac{1}{16} \sum_{i} X^{i} \ket{i}\bra{i}X^{i} \otimes \ket{i}\bra{i}\otimes \CO_{i}\ket{\psi_{C}}\bra{\psi_{C}}\CO_{i}^{\dagger},  \\
\label{eqn:U2apply6}
& = & \frac{1}{16}\sum_{i = 0}^{15} \ket{\bar{0}}\bra{\bar{0}} \otimes \ket{i}\bra{i} \otimes \CO_{i}\ket{\psi_{C}}\bra{\psi_{C}}\CO_{i}^{\dagger},\\
& = & \frac{1}{16}\sum_{i = 0}^{15} \ket{\bar{0}}\bra{\bar{0}} \otimes \left(\CE_{i}\otimes\CO_{i}\right)\ket{\bar{0}}\bra{\bar{0}} \otimes \ket{\psi_{C}}\bra{\psi_{C}}\left(\CE_{i}\otimes\CO_{i}\right)^{\dagger}.
\end{eqnarray}
As mentioned before these error operators appear in Table~\ref{tbl:encoded_errors_perfect_code}.  In Equation~(\ref{eqn:U2apply3}) we see that each $\CE_{i}$ transforms $\ket{0}$ to the corresponding syndrome $\ket{i}$.  Also recall that $\ket{\bar{0}} \equiv \ket{0000}$.  Alice finally applies the encoding unitary $U_{e}$ to $\rho_{2}^{\CS \CA}$:
\begin{eqnarray}
\rho_{e}^{\CS \CA} & = & \frac{1}{16}\sum_{i = 0}^{15} \ket{\bar{0}}\bra{\bar{0}}^{\CS}\otimes U_{e}^{\CA}\left(\CE_{i} \otimes \CO_{i}\right)\ket{\bar{0}}\bra{\bar{0}}\otimes \ket{\psi_{C}}\bra{\psi_{C}}\left(\CE_{i}\otimes\CO_{i}\right)^{\dagger}(U_{e}^{\CA})^{\dagger}, \nonumber \\
& = & \frac{1}{16}\sum_{i = 0}^{15} \ket{\bar{0}}\bra{\bar{0}}^{\CS}\otimes U_{e}^{\CA}\left(\CE_{i} \otimes \CO_{i}\right)(U_{e}^{\CA})^{\dagger}U_{e}^{\CA}\ket{\bar{0}}\bra{\bar{0}}\otimes \ket{\psi_{C}}\bra{\psi_{C}}(U_{e}^{\CA})^{\dagger}U_{e}^{\CA}\left(\CE_{i}\otimes\CO_{i}\right)^{\dagger}(U_{e}^{\CA})^{\dagger}, \nonumber \\
& = & \frac{1}{16}\sum_{i = 0}^{15} \ket{\bar{0}}\bra{\bar{0}}^{\CS}\otimes U_{e}^{\CA}\left(\CE_{i} \otimes \CO_{i}\right)(U_{e}^{\CA})^{\dagger}\rho_{e}^{\CA}U_{e}\left(\CE_{i}\otimes\CO_{i}\right)^{\dagger}(U_{e}^{\CA})^{\dagger}, \nonumber \\
& = & \frac{1}{16}\sum_{i = 0}^{15} \ket{\bar{0}}\bra{\bar{0}}^{\CS}\otimes E_{i}^{e}\rho_{e}(E_{i}^{e})^{\dagger}.
\end{eqnarray}
Alice finally traces out the stego subsystem:
\begin{eqnarray}
Tr_{S}\left[\rho_{e}^{\CS \CA}\right] & = & \frac{1}{16}\sum_{i = 0}^{15} \bracket{\bar{0}}{\bar{0}}E_{i}^{e}\rho_{e}(E_{i}^{e})^{\dagger}, \nonumber \\
& = & \frac{1}{16}\sum_{i = 0}^{15} E_{i}^{e}\rho_{e}(E_{i}^{e})^{\dagger}.
\end{eqnarray}
So finally:
\begin{equation}
\label{eqn:encodedrho}
\rho_{e}^{\CA} = \frac{1}{16}\sum_{i = 0}^{15} E_{i}^{e}\rho_{e}(E_{i}^{e})^{\dagger},
\end{equation}
where $\rho_{e}$ is a codeword of the perfect code.  The first encoded error operator (column three) in Table~\ref{tbl:encoded_errors_perfect_code} is $IIIII$.   To plausibly use the five-qubit code, the error rate $p$ must be sufficiently low that it is rare for more than one single-qubit error to occur.  So if Eve intercepts this state and checks the error syndromes, the result should look reasonable for a codeword that has passed through the depolarizing channel.
%%%%%%%%%%%%%%%%%%%%%%%%%%%%%% NEW SECTION %%%%%%%%%%%%%%%%%%%%%%%%%%%%%%%%%%%%%%%%%%%%%%%%%%%
\section{Average rate of steganographic transmission}
\label{sec:optimize}
While the encoding given in the previous section is reasonable for encoding stego-qubits in a single codeword, it would not do if Alice wants to send many messages to Bob.  It would quickly look suspicious if almost every codeword had exactly one single-qubit error.  For longer messages, Alice must most often apply no errors, from time to time apply a single error, and perhaps occasionally apply two errors (an uncorrectable error), in such a way as to match the statistics of a depolarizing channel with error probability $p$.

In this section we present details of how Alice optimizes the number of stego-qubits that she sends to Bob by encoding her stego-qubits in the syndromes of the perfect code.  She either sends no stego-qubits to Bob (and applies no errors to the codeword) or she sends four stego-qubits to him (applying either one or two errors).  We would like to maximize the number of stego-qubits that Alice can send to Bob under the constraint that Alice's encoding scheme should match the probability distribution of the depolarizing channel.  The channel model that we are interested in is:
\begin{eqnarray}
\label{eqn:depolchannel}
\CN(\rho) & \equiv & (1-p)^{5}\rho + \frac{p(1-p)^4}{3}\sum_{i = 0}^{4}\bigg(X_{i}\rho X_{i} + Y_{i}\rho Y_{i} + Z_{i}\rho Z_{i}\bigg)
+ \frac{p^2(1-p)^3}{9} \sum_{i<j} \bigg(X_{i}X_{j}\rho X_{i}X_{j} \nonumber \\
& + & X_{i}Y_{j}\rho X_{i}Y_{j} +  X_{i}Z_{j}\rho X_{i}Z_{j} + Y_{i}X_{j}\rho Y_{i}X_{j} + Y_{i}Y_{j}\rho Y_{i}Y_{j} + Y_{i}Z_{j}\rho Y_{i}Z_{j} + Z_{i}X_{j}\rho \nonumber \\
& + & Z_{i}X_{j} + Z_{i}Y_{j}\rho Z_{i}Y_{j} + Z_{i}Z_{j}\rho Z_{i}Z_{j}\bigg).
\end{eqnarray}
There are of course three, four, and five qubit errors, but it is very unlikely that such errors will occur, and so we neglect them for this paper.  The techniques we give here could easily be used to find stego-encodings for those cases as well; but it might appear suspicious if Alice and Bob used an encoding that frequently suffered uncorrectable errors.

We use three distinct classes of encodings.  The first class corresponds to no errors, and transmits no hidden information.  This always has the syndrome $s_0 \equiv 0000$ in Table~\ref{tbl:encoded_errors_perfect_code}.

Alice's second encoding class corresponds to all single-qubit errors, plus no error, with equal probability.  These are the syndromes $s_{0} \equiv 0000$ to $s_{15} \equiv 1111$ in the same table.  This is the encoding described in the previous section.  So this encoding class includes a single encoding, and transmits four stego-qubits.

The third encoding class corresponds to the set of all two-qubit errors.  There are ninety such errors.  These ninety two-qubit errors naturally divide into six sets of fifteen errors each, where each set has one error corresponding to each of the fifteen nonzero syndromes.  By also including the ``no error'' operator, each set corresponds to an encoding that can transmit four stego-qubits.  We list these six sets in Table~\ref{tbl:two-qubit_errors_perfectcode}.  When Alice sends four stego-qubits to Bob, she must use all sixteen distinct syndromes.  A single row of this table, spanning all sixteen syndromes, corresponds to a single encoding.  Each encoding proceeds exactly as described above for the single-error case, except that the operators $\CE_i$ and $\CO_i$ from Table~\ref{tbl:two-qubit_errors_perfectcode} are used in Eq.~(\ref{eqn:unitary1}) instead of those from Table~\ref{tbl:encoded_errors_perfect_code}.

We now need to solve for how often the three classes of encodings should be used to match the channel statistics.  Let $Q_0$ be the fraction of times Alice uses the (trivial) first encoding class; $Q_1$ the fraction of times that Alice uses the second (single-error) encoding class; and $Q_2$ the fraction of times that Alice uses one of the six two-error encodings (which should be used equally often).  The channel distribution constraints are as follows:
\begin{eqnarray}
\label{eqn:channelconstraint1}
p_{0} & = & (1-p)^5 = Q_0 + \frac{1}{16}(Q_1 + Q_2) , \\
p_{1} & = & 5p(1-p)^4 = \frac{15}{16}Q_1, \\
p_{2} & = & 10p^2(1-p)^3 = \frac{15}{16}Q_2 .
\end{eqnarray}
In the channel constraint equations above, $p_{0}, p_{1}$, and $p_{2}$ represent the total probability of the channel applying no errors, one error, or two errors on the codewords.  The right-hand-side of the equations represent how Alice matches the channel's probability distribution.  For example, she always applies no errors if she uses the first (trivial) encoding, and with probability $1/16$ if she uses one of the other encodings.  We solve for $Q_{0,1,2}$ to get:
\begin{eqnarray}
Q_0 &=& p_0 - (p_1+p_2)/15 , \nonumber\\
Q_1 &=& (16/15)p_1 , \\
Q_2 &=& (16/15)p_2 . \nonumber
\end{eqnarray}
Note that these numbers do not add up to 1, because we have neglected errors of weight three or more.  For small $p$ they will come close.

The average number of stego-qubits that Alice can send to Bob under the above constraints is $N_{avg} = 4(Q_1+Q_2) = (64/15)(p_1+p_2)$.  We plot this function in Figure~\ref{fig:optimize}, along with the Shannon entropy of the channel (which is the maximum possible rate at which stego information could be sent).  Note that this curve only makes sense for fairly small values of $p$; for higher values, it no longer makes sense to neglect higher-weight errors.

\begin{figure}[htp]
  \begin{center}
    	 \includegraphics[width = 3.5in]{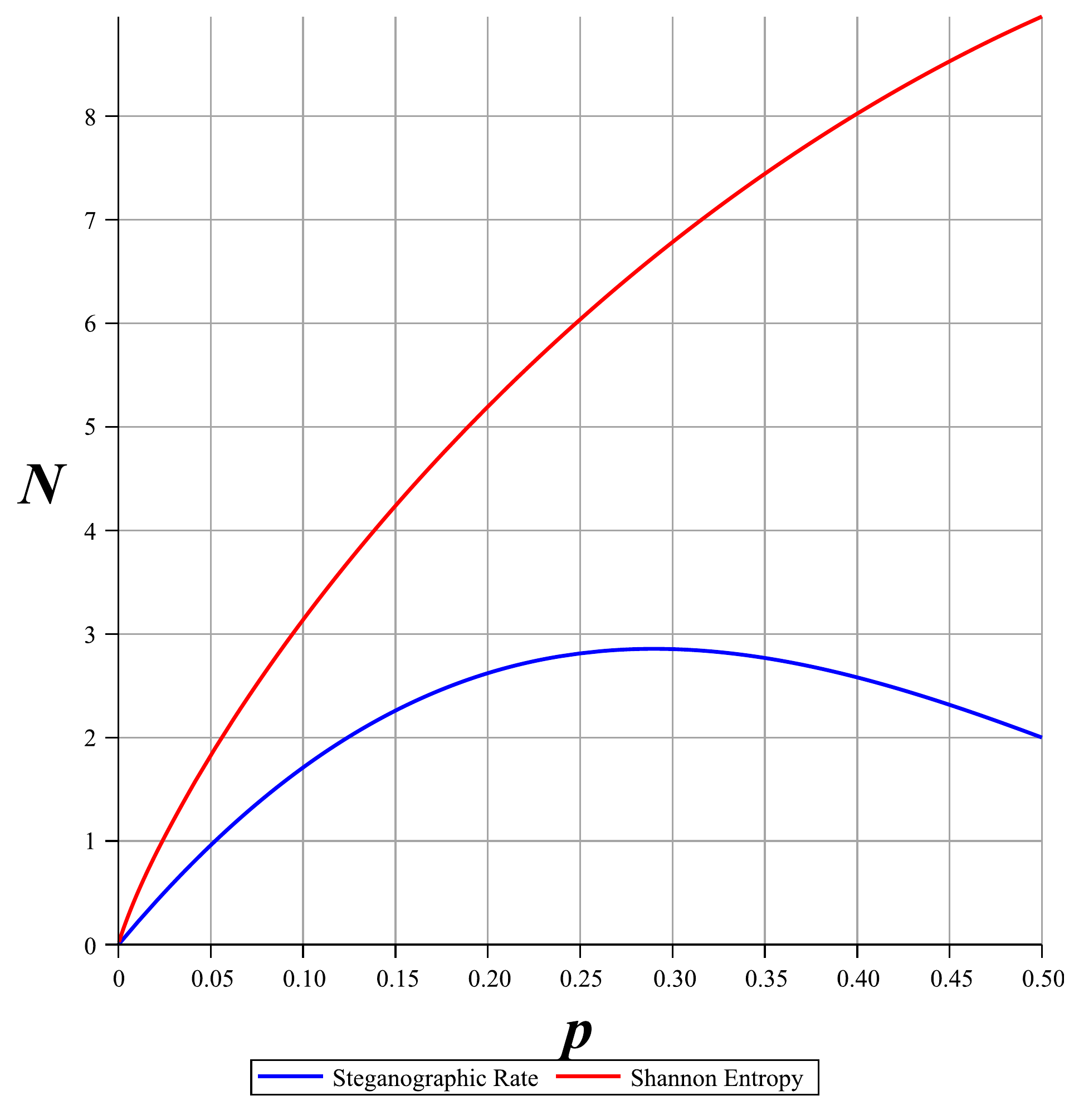}
  \end{center}
  \caption{The red curve shows the Shannon entropy of the binary-symmetric channel.  The blue curve underneath the entropy curve is the rate of steganographic information maximized for various values of the error-rate $p$ of the channel.  In the limit of large $N$, encoding in a single large block (rather than multiple sub-blocks) can achieve a rate of steganographic transmission approaching the Shannon entropy \protect{\cite{shawandbrun}}.  For  small $p$ one doesn't do too much worse using multiple finite-sized code blocks.}
  \label{fig:optimize}
\end{figure}

%%%%%%%%%%%%%%%%%% NEW SECTION %%%%%%%%%%%%%%%%%%%%%%%%%%%%%%%%%%%%%%%%%
\section{Key Consumption}
\label{sec:keyconsume}
We define the key consumption rate as the number of classical bits of key consumed by Alice and Bob per qubit block.  Alice can send an $N$-qubit block to Bob by combining $N/5$ five-qubit blocks together.  How much shared secret key do Alice and Bob require to send steganographic information to each other?  In Section~\ref{sec:optimize} we defined $Q_{0}$ as the probability of Alice using the first encoding to send no stego-qubits to Bob.  The probability of using the second encoding is $Q_{1}$, when Alice applies single errors to her codewords.  Alice and Bob can also use two-error encodings to send four qubits to each other, of which there are a total of six as shown in Table~\ref{tbl:two-qubit_errors_perfectcode}.  We assume that each of these six encodings are equiprobable, each with probability $Q_2/6$.  Therefore, there are a total of eight different encodings that Alice and Bob must choose from in order to send steganographic information to each other.

For a single five-qubit block, Alice and Bob must share three classical bits to specify which of the eight encodings has been used.  They also require an extra eight bits for Alice to construct her four twirled qubits and for Bob to untwirl these qubits once he receives the steganographic information.  So Alice and Bob would consume twelve classical secret key bits for each five-qubit block, if they encode one block at a time, and so for an $N$-qubit block they would consume at most $12N$ classical bits of key---not a very economical protocol. In fact, it would be less than that, since Alice needs no key bits to do twirling for the first, trivial, encoding.  Taking this into account, the total key consumption rate would be $(3+8(Q_1+Q_2))/5$ secret key bits per qubit.

However, Alice and Bob can do much better than this by encoding into multiple block at once.  In the limit of large $N$, the number of bits that they require to specify the key is given by the Shannon entropy of the probability distribution of the three different encodings.  This gives us a key usage rate of
\begin{equation}
\label{eqn:keyconsume}
K =  \frac{1}{5} \left(- Q_{0}\log_{2}{Q_{0}} - Q_{1}\log_{2}{Q_{1}} - Q_2\log_{2}{Q_2/6}\right)
\end{equation}
key bits per qubit.  For small $p$ this dramatically outperforms the rate for encoding one block at a time.

%%%%%%%%%%% TABLE%%%%%%%%%%%%%%%%%%%%%%%%%%%%%%%%%%%%%%%%%%%%
\begin{table}\centering
\begin{tabular}[c]{|c|c|c||c|c|c|}
\hline \hline
Syndrome $i$ & $\CE_{i} \otimes \CO_{i}$ & Encoded Error & Syndrome $i$ & $\CE_{i} \otimes \CO_{i}$ & Encoded Error \\ \hline \hline
\multirow{6}{*}{0001} & $IZIXI$ & $XZIII$ & \multirow{6}{*}{0010} & $ZIXIZ$ & $XZIII$ \\
& $\textbf{IIZYY}$ & $\textbf{IXXII}$ && $\textbf{IZXZX}$ & $\textbf{IYIYI}$ \\
& $-ZIZYX$ & $ZIYII$ && $ZIXZY$ & $YIIZI$ \\
& $IIIXI$ & $IIZXI$ && $IIXIX$ & $XIIIX$ \\
& $ZIIYY$ & $YIIYI$ && $IZXIY$ & $IZIIY$	\\
& $-IZIYX$ & $IYIZI$ && $IIXIZ$ & $IIIXZ$ \\ \hline \hline
\multirow{6}{*}{0011} & $ZZXXZ$ & $YYIII$ & \multirow{6}{*}{0100} & $ZXIIZ$ & $ZIZII$ \\
& $\textbf{-ZIYYX}$ & $\textbf{ZIXII}$ && $\textbf{IXZZY}$ & $\textbf{IIYXI}$ \\
& $-IIYYY$ & $IXYII$ && $IXIZX$ & $XIIZI$ \\
& $IZXXX$ & $IZIIX$ && $ZXIIY$ & $YIIIX$ \\
& $IIXXY$ & $XIIIY$ && $IXIZZ$ & $IIIYY$	\\
& $IIXXZ$ & $IIZIZ$ && $IXZZX$ & $IIXIZ$ \\ \hline \hline
\multirow{6}{*}{0101} & $IYIXI$ & $XYIII$ & \multirow{6}{*}{0110} & $IXXII$ & $IXZII$ \\
& $\textbf{ZXIXZ}$ & $\textbf{ZIIXI}$ && $\textbf{IXYZY}$ & $\textbf{IIXXI}$ \\
& $IYIYX$ & $IZIZI$ && $-IYXZX$ & $IZIYI$ \\
& $IXIYI$ & $IIIYX$ &&  $IXXZI$ & $IIIZX$ \\
& $-ZXIXX$ & $YIIIY$ &&  $IYXIY$ & $IYIIY$	\\
& $IXIXZ$ & $IXIIZ$ && $-IXYZX$ & $IIYIZ$ \\ \hline \hline
\multirow{6}{*}{0111} & $-ZYXXZ$ & $YZIII$ & \multirow{6}{*}{1000} & $XZZZY$ & $IYXII$ \\
& $\textbf{IXXXI}$ & $\textbf{IXIXI}$ && $\textbf{XIZIZ}$ & $\textbf{IIYYI}$ \\
& $IXXYX$ & $XIIYI$ &&  $XIIZX$ & $IXIZI$ \\
& $IYXXX$ & $IYIIX$ &&  $XIIIX$ & $IIZIX$ \\
& $-IXXYZ$ & $IIIZY$ &&  $XIIIY$ & $IIIXY$	\\
& $ZXXXI$ & $ZIIIZ$ &&  $XZIIZ$ & $IZIIZ$
\end{tabular}
\caption{Table of double-qubit errors.  The error operators in bold represent a single encoding.  The table represents a total of six different encodings where each encoding utilizes sixteen distinct error operators and their corresponding distinct syndromes (continued).}
\label{tbl:two-qubit_errors_perfectcode_1}
\end{table}
\begin{table}\centering
\begin{tabular}[c]{|c|c|c||c|c|c|}
Syndrome $i$ & $\CE_{i} \otimes \CO_{i}$ & Encoded Error & Syndrome $i$& $\CE_{i} \otimes \CO_{i}$ & Encoded Error \\ \hline \hline
\multirow{6}{*}{1001} & $-YIZYX$ & $YIYII$ & \multirow{6}{*}{1010} & $YIXIZ$ & $YXIII$ \\
& $\textbf{-YIIYY}$ & $\textbf{ZIIYI}$ && $\textbf{-XZYZY}$ & $\textbf{IYYII}$ \\
& $XIZXZ$ & $IIXZI$ &&  $XIXII$ & $XIZII$ \\
& $XIIXX$ & $IIIXX$ &&  $XZXII$ & $IZIXI$ \\
& $XIIXY$ & $IIZIY$ && $XIYIZ$ & $IIXYI$	\\
& $XIIXZ$ & $XIIIZ$ && $-YIXZY$ & $ZIIZI$ \\ \hline \hline
\multirow{6}{*}{1011} & $-YZXXZ$ & $ZYIII$ & \multirow{6}{*}{1100} & $XXIII$ & $XXIII$ \\
& $\textbf{-YIYYX}$ & $\textbf{YIXII}$ && $\textbf{-XYZZY}$ & $\textbf{IZXII}$ \\
& $XZXXI$ & $IZZII$ && $YXIIZ$ & $YIZII$ \\
& $XIXXI$ & $XIIXI$ && $-YXIIY$ & $ZIIIX$ \\
& $XIXYX$ & $IXIYI$ && $XXZZI$ & $IIYIY$	\\
& $-XIYXZ$ & $IIYZI$ && $XYIIZ$ & $IYIIZ$ \\ \hline \hline
\multirow{6}{*}{1101} & $XXZYY$ & $XIXII$ & \multirow{6}{*}{1110} & $XYYZY$ & $IZYII$ \\
& $\textbf{YXIXZ}$ & $\textbf{YIIXI}$ && $\textbf{XYXII}$ & $\textbf{IYIXI}$ \\
& $XXIYX$ & $IIZYI$ &&  $XXXZX$ & $IIZZI$ \\
& $XXZYZ$ & $IIYIX$ && $XXXIX$ & $IXIIX$ \\
& $YXIXX$ & $ZIIIY$ && $XXYZI$ & $IIXIY$	\\
& $XXIYY$ & $IIIZZ$ && $-XXXZY$ & $IIIYZ$ \\ \hline \hline
\multirow{6}{*}{1111} & $YYXXZ$ & $ZZIII$ & \multirow{6}{*}{0000} & $IIIII$ & $IIIII$ \\
& $\textbf{-XXYYY}$	 & $\textbf{XIYII}$ && $\textbf{IIIII}$ & $\textbf{IIIII}$  \\	
& $XYXXI$	 & $IYZII$ && $IIIII$ & $IIIII$ \\	
& $XXYYZ$	 & $IIXIX$ && $IIIII$ & $IIIII$ 	\\
& $XXXXY$	 & $IXIIY$ && $IIIII$ & $IIIII$ 	\\
& $YXXXI$	 & $YIIIZ$ && $IIIII$ & $IIIII$ 	\\ \hline \hline
\end{tabular}
\caption{Table of double-qubit errors.  The error operators in bold represent a single encoding.  The table represents a total of six different encodings where each encoding utilizes sixteen distinct error operators and their corresponding distinct syndromes.}
\label{tbl:two-qubit_errors_perfectcode}
\end{table}

\chapter{Conclusion and Future Directions}
\label{chap:conclusion}
\begin{saying}
When I examine myself and my
methods of thought, I come to the
conclusion that the gift of fantasy
has meant more to me than any
talent for abstract, positive
thinking.\\
---\textit{Albert Einstein}
\end{saying}
In this thesis we have made two contributions to quantum information science.  We augmented the theory of quantum error-correcting codes by bridging the gap between the perfect code and the Steane code.  We provided a new proof technique on how to show that the Steane code is the smallest CSS code that can correct an arbitrary single-qubit error.  We also showed how to convert the six-qubit code into a subsystem code.  We hope that in the future this code will serve as a good test-bed to protect quantum computers from the debilitating effects of decoherence.

Our second major contribution to QIS is the development of the theory of quantum steganography.  We introduced a broad theory of quantum steganography but in the future we would like to explore the kinds of quantum error-correcting codes and encodings that will give good rates for steganographic communication between Alice and Bob under different noise models such as, depolarizing noise, dephasing noise, and general noise operators. In the current model we assume that the eavesdropper (Eve) is limited to only observing the channel and performing non-demolition measurements of the quantum codewords.

While the above is a useful starting point, one can think about other equally important and valid scenarios. It may be reasonable to consider a couple of models of Eve: the maximally powerful, limited by physics and information theory, and the more realistic, limited by computational power and finite information. We can set up the problem as containing only three parties, Alice, Bob, and Eve. But in a realistic situation, Eve may simultaneously be monitoring communications between a large number of parties --- for instance, an entire network. In this situation, Eve may have no reason to be suspicious of Alice and Bob more than anyone else. The desire to keep this network running efficiently may lead to some limitations on what Eve will and/or can do. Eve must allow communication, or she could just shut the channel down. If she intercepts and demands authentication for every communication, this makes quantum communication impossible. Therefore, she can only intercept and demand authentication for some fraction of the messages, chosen at random. Without blocking the message, though, she could still check the error syndromes. She could do this every time, or some fraction of the time, which could be the same fraction with which she demands authentication, or a different fraction. She could just check the syndromes, or she could actually perform error-correction. There may be other set-ups as well in which there might not be an Eve per se. Suppose that there are two ``innocent'' communicators, April and Bill, and two ``secret'' communicators, Alice and Bob. Alice and Bob want to piggyback secret communication on top of April and Bill's messages. So after April sends her codeword, Alice intercepts it, adds in secret information in a way that will look like noise and sends it on. Bob could get it either before Bill does (in which case he might extract the information and try to erase all evidence that it was there), or afterwards (in which case he might get it after Bill has performed error-correction and discarded the ancillas). Another scenario is when Eve is not necessarily an adversary, but a ``helpful'' network. For instance, there may be quantum repeater stations that correct and forward the quantum information. If Alice and Bob know how the quantum repeaters work, might they be able to sneak through some extra information that was not removed in the correction? What is the capacity of quantum steganographic protocols between three parties, Alice, Bob, and Charlie? We do not know the answer to these questions and we believe that they are worth exploring.

We would also like to explore where quantum steganography fits in with quantum Shannon theory. Could a combination of some primitive quantum information-theoretic protocols in quantum Shannon theory when combined with steganography give rise to novel and interesting new insights? Can one treat quantum steganography as a new quantum information science primitive?  Quantum steganography is rich in open problems, and there is clearly a need to develop this theory further because once we do have commercially available quantum networks, there will be a need for private communication.  We hope that in time this theory will become a subfield of quantum Shannon theory.  We also hope that in the near future experimentalists will be able to test our protocols in the laboratory.

%\bibliographystyle{alpha}
%\bibliography{shaw-thesis} % Add here your bibliography file

%\input{appendix2.2.tex}
%\topmatter{Appendix A}
%\appendix
%\input{appendixA.tex}

%\topmatter{Appendix B}
%\appendix
%\input{appendixB.tex}

\end{document}